\documentclass[a4paper,USenglish,dvipsnames]{article}
\usepackage[margin=2.5cm]{geometry}
\usepackage[utf8]{inputenc}
\usepackage[textwidth=2.2cm,textsize=scriptsize]{todonotes}
\usepackage[square,numbers,sort]{natbib}

\usepackage{xcolor}
\usepackage{amsmath}
\usepackage{amsthm}
\usepackage{amssymb}
\usepackage{mathtools}
\usepackage{enumerate}
\usepackage{thm-restate}
\usepackage{nicefrac}
\usepackage{hyperref}
\usepackage{cleveref}
\usepackage{wasysym}
\usetikzlibrary{positioning}
\usepackage{algorithm}
\usepackage[noend]{algpseudocode}
\usepackage{caption}
\usepackage{subcaption}

\newcommand{\dens}{d}
\newcommand{\lfrac}[2]{#1/#2}

\newtheorem{theorem}{Theorem}
\newtheorem{lemma}[theorem]{Lemma}
\newtheorem{corollary}[theorem]{Corollary}

\title{Bicriterial Approximation for the Incremental \\ Prize-Collecting Steiner-Tree Problem\thanks{Supported by Deutsche Forschungsgemeinschaft (DFG, German Research Foundation) through subprojects A07 and A09 of CRC/TRR154  and under Germany’s Excellence Strategy -- The Berlin Mathematics Research Center MATH+ (EXC-2046/1, project IDs: 390685689).}}
\author{Yann Disser\thanks{Department of Mathematics, TU Darmstadt, Germany, Email: disser@mathematik.tu-darmstadt.de}, Svenja M.\ Griesbach\thanks{Institute of Mathematics, TU Berlin, Germany, Email: griesbach@math.tu-berlin.de}, Max Klimm\thanks{Institute of Mathematics, TU Berlin, Germany, Email: klimm@math.tu-berlin.de}, Annette Lutz\thanks{Department of Mathematics, TU Darmstadt, Germany, Email: lutz@mathematik.tu-darmstadt.de}}
\date{}

\newcommand{\R}{\mathbb{R}}
\newcommand{\N}{\mathbb{N}}
\newcommand{\T}{\mathcal{T}}
\newcommand{\M}{\mathcal{M}}
\newcommand{\OPT}{\textsc{Opt}}
\newcommand{\ALG}{\textsc{Alg}}

\newcommand{\lrp}{\gamma} 
\newcommand{\contract}[2]{#1/#2}
\DeclareMathOperator*{\argmax}{arg\,max}
\DeclareMathOperator*{\argmin}{arg\,min}
\DeclarePairedDelimiter{\abs}{\lvert}{\rvert}
\newcommand{\overbar}[1]{\mkern 1.5mu\overline{\mkern-1.5mu#1\mkern-1.5mu}\mkern 1.5mu}

\bibliographystyle{plainurl}

\begin{document}
\maketitle

\begin{abstract}
    We consider an incremental variant of the rooted prize-collecting Steiner-tree problem with a growing budget constraint.
    While no incremental solution exists that simultaneously approximates the optimum for all budgets, we show that a bicriterial $(\alpha,\mu)$-approximation is possible, i.e., a solution that with budget $B+\alpha$ for all $B \in \R_{\geq 0}$ is a multiplicative $\mu$-approximation compared to the optimum solution with budget~$B$.
    For the case that the underlying graph is a tree, we present a polynomial-time density-greedy algorithm that computes a $(\chi,1)$-approximation, where $\chi$ denotes the eccentricity of the root vertex in the underlying graph, and show that this is best possible.
    An adaptation of the density-greedy algorithm for general graphs is $(\gamma,2)$-competitive where $\gamma$ is the maximal length of a vertex-disjoint path starting in the root. While this algorithm does not run in polynomial time, it can be adapted to a $(\gamma,3)$-competitive algorithm that runs in polynomial time. 
    We further devise a capacity-scaling algorithm that guarantees a $(3\chi,8)$-approximation and, more generally, a~$\smash{\bigl((4\ell - 1)\chi, \frac{2^{\ell + 2}}{2^{\ell}-1}\bigr)}$-approximation for every fixed~$\ell \in \mathbb{N}$.
\end{abstract}

\newpage

\section{Introduction}

Prize-collecting Steiner-tree problems serve as a model to study the fundamental trade-off between the cost of installing a network and harnessing its benefits in terms of covering vital vertices. 
They have been used to study, e.g.,  expanding telecommunication networks~(Johnson et al.~\cite{JohnsonMP00}) or pipeline networks~(Salman et al.~\cite{SalmanCRS01}). 
Formally, we are given an undirected graph $G = (V,E)$, 
a positive \emph{edge cost} $c(e) \in \mathbb{R}_{> 0}$ for each edge~$e \in E$, 
a non-negative \emph{vertex prize} $p(v) \in \mathbb{R}_{\geq 0}$ for each vertex~$v \in V$, and a specified \emph{root vertex}~$r \in V$.
In the applications above, vertices correspond to geographical locations such as intersections of a road network, the premises of telecommunication customers, or the locations of oil wells.
The prize of a vertex represents the estimated revenue associated with connecting the vertex to the core network, represented by the root.
The cost of an edge corresponds to, e.g., the monetary cost of laying a fiber-optic cable or a pipeline between the respective end vertices.

Let $\T$ denote the set of all subtrees of $G$ containing $r$. Then, the \emph{budget version} of the prize-collecting Steiner-tree problem for a given budget $B \in \R_{\geq 0}$ is the optimization 
\begin{align}
\max \bigl\{p(T) : T \in \T \text{ with }  c(T) \leq B\bigr\},\label{eq:problem}
\end{align}
where for a subgraph $G' = (V',E') \subseteq G$, we write $p(G') \coloneqq \sum_{v \in V'} p(v)$ and $c(G') \coloneqq \sum_{e \in E'} c(e)$. 
This objective is faced, e.g., by any company interested in building the most profitable telecommunication or pipeline network given limited financial funds.
In the following, we arbitrarily fix a rooted subtree~$\OPT(B)$ attaining the maximum in \eqref{eq:problem}.

In the scenarios mentioned above, it is realistic to assume that the network is expanded over time as the company's budget increases. 
This motivates us to
study an incremental version of the prize-collecting Steiner-tree problem where approximately optimal Steiner-trees have to be maintained under a growing budget constraint. 
Formally, an \emph{incremental solution} to an instance of the prize-collecting Steiner-tree problem is given by an ordering $\pi = (e_{\pi(1)},\dots,e_{\pi(l)})$ of a subset of the edges~$E$ such that every prefix~$(e_{\pi(1)},\dots,e_{\pi(j)})$ of~$\pi$ with $j \leq l$ induces a rooted subtree $T_j \subseteq G$ and the rooted subtree~$T_l$ spans all vertices in $V^{\ast} := \{v \in V : p(v) > 0\}$.
In the applications above, an incremental solution corresponds to an order in which to build the fiber-optic or pipeline connections. 

We evaluate the performance of an incremental solution~$\pi$ in terms of the worst-case approximation guarantee it provides compared to the optimal budgeted Steiner-tree for all budgets $B \in \R_{\geq 0}$.
Specifically, for a given budget $B > 0$ and an incremental solution $\smash{\pi = (e_{\pi(1)},\dots,e_{\pi(l)})}$ computed by some algorithm~$\ALG$, 
let the rooted subtree induced by the set of edges $\smash{\bigl\{e_{\pi(i)} : i \in \{1,\dots,l\}, \sum_{j=1}^{i} c(e_{\pi(j)}) \leq B\bigr\}}$ be~$\ALG(B)$.
We aim to maximize the profit $p(\ALG(B))$ across all budgets $B \in \R_{\geq 0}$.

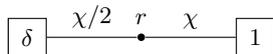
\begin{figure}[b]
\centering
\begin{tikzpicture}\small
         \tikzset{mininode/.style = {circle, fill, minimum size=2pt, inner sep=1pt},
            weightnode/.style = {rectangle, draw, minimum width=0.5cm,
                                minimum height = 0.5cm},
        }
            \useasboundingbox (-1.75,-1) rectangle (1.75,1.25);
            \node[mininode,label=above:{$r$}] at (0,0) (r) {};
            \node[weightnode] at (-1.5,0) (A) {$\delta$};
            \node[weightnode] at (1.5,0) (B) {$1$};
    
            \draw (r) -- node[above]{$\chi/2$} (A);
            \draw (r) -- node[above]{$\chi$} (B);
\end{tikzpicture}
\caption{Instance of the prize-collecting Steiner-tree problem with $\delta \ll 1$ and no $(0,\mu)$-competitive incremental solution: Staying competitive for budget~$B = \chi / 2$ requires to build the left edge first, but then we are not competitive for the budget~$B = \chi$.\label{fig:no-mult-competitive}}
\end{figure}

Observe that, in general, no incremental solution can stay close to the prizes collected by~$\OPT(B)$ across all budgets simultaneously (see Figure~\ref{fig:no-mult-competitive}).
However, it turns out that it suffices to allow an \emph{additive} slack in the budget constraint to obtain an incremental solution with \emph{multiplicative} approximation guarantee.
Formally, we call an algorithm \emph{$(\alpha,\mu)$}-competitive for $\alpha \geq 0$ and $\mu \geq 1$ if the incremental solution produced (for any instance of the problem) guarantees that
$\mu\,p(\ALG(B + \alpha)) \geq p(\OPT(B)) \textrm{ for all } B \in \R_{\geq 0}$.

\subsection{Our Results}

We first consider the incremental prize-collecting Steiner-tree problem on trees and analyze the density-greedy algorithm introduced by Alpern and Lidbetter~\cite{AlpernL13} in a different context.
Roughly, the algorithm repeatedly considers subtrees of maximum density, adds them to the incremental solution and contracts the corresponding subgraphs.
We show that this algorithm constructs an incremental solution that lags behind the optimum solution by some additive slack in the budget, but otherwise collects the same prize.
Figure~\ref{fig:no-mult-competitive} shows that we need to allow a slack depending on the \emph{eccentricity} of the root vertex $\chi := \max\{\ell(v) : v \in V\}$ of $G$, where~$\ell(v)$ denotes the minimum cost among all $r$-$v$-paths in~$G$.
We show that the density-greedy algorithm requires slack exactly~$\chi$.

\begin{restatable}{theorem}{thmresultgreedytrees}
\label{thm:result_greedy_trees}
    On trees, the density-greedy algorithm can be implemented in polynomial time and is $(\alpha,\mu)$-competitive for any finite $\mu \geq 1$ if and only if $\alpha \geq \chi$.
\end{restatable}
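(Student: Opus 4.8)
The theorem bundles three claims: polynomial-time implementability on trees, that slack $\alpha\ge\chi$ already suffices with $\mu=1$, and that slack $\alpha<\chi$ is useless for every finite $\mu$. For the running time the only nontrivial primitive is computing, in the current contracted tree, a maximum-density subtree through the root; for a fixed threshold $\lambda$ the value $\max\{p(T)-\lambda\,c(T):T\in\T\}$ is obtained by one bottom-up pass (at each vertex keep the edge to a child exactly when that child's subtree has positive surplus), so the maximum density is found by the standard parametric search over $\lambda$, and since each iteration contracts at least one edge there are at most $\abs{E}$ iterations.

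For the positive direction I would prove the stronger claim $p(\ALG(B+\chi))\ge p(\OPT(B))$ for all $B\ge 0$, which yields $(\alpha,\mu)$-competitiveness for all $\alpha\ge\chi$, $\mu\ge 1$ because larger budgets and larger $\mu$ only help. Write $T_1,T_2,\dots$ for the subtrees picked by the algorithm, $S_i:=T_1\cup\cdots\cup T_i$, and $\lambda_i$ for the density of $T_i$; these densities are non-increasing, since a later, strictly denser subtree could be merged with an earlier choice to contradict that choice's maximality in its contracted tree. The first lemma is that greedy is optimal at its breakpoints: $p(\OPT(c(S_i)))=p(S_i)$ and, more generally, $p(\OPT(b))\le\hat p(b)$, where $\hat p$ is the piecewise-linear interpolant of the points $(c(S_i),p(S_i))$, which is concave with slopes $\lambda_1\ge\lambda_2\ge\cdots$. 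I would prove this by induction on $i$: for a competing subtree $T'$ with $c(S_{i-1})\le c(T')\le c(S_i)$, split it as $X=T'\cap S_{i-1}$ and $Y=T'\cup S_{i-1}$; maximality of $T_i$ in $G/S_{i-1}$ gives $p(Y)-p(S_{i-1})\le\lambda_i\bigl(c(Y)-c(S_{i-1})\bigr)$, the induction hypothesis together with concavity of $\hat p$ on $[0,c(S_{i-1})]$ gives $p(X)\le p(S_{i-1})-\lambda_i\bigl(c(S_{i-1})-c(X)\bigr)$, and adding these (using $c(X)+c(Y)=c(T')+c(S_{i-1})$ and the analogous identity for $p$) gives $p(T')\le\hat p(c(T'))$.

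The second lemma handles the last, partially built block. Fix $B$; the case where $B$ is at least the total cost of the greedy solution is trivial since then $\ALG(B)$ spans $V^{\ast}$, so assume $c(S_{i-1})\le B<c(S_i)$. Then $p(\OPT(B))\le\hat p(B)=p(S_{i-1})+\lambda_i\bigl(B-c(S_{i-1})\bigr)$, so once $S_{i-1}$ has been paid for (cost $c(S_{i-1})\le B$) it remains to show that the residual budget $B+\chi-c(S_{i-1})\ge\chi$ harvests profit at least $\lambda_i\bigl(B-c(S_{i-1})\bigr)$ from $T_i$. The geometric heart is that $T_i$, seen inside $G/S_{i-1}$, has radius at most $\chi$: contracting $S_{i-1}$ only shrinks distances, so every vertex of $T_i$ is at distance at most $\chi$ from the contracted root. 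Together with the fact that every sub-subtree of $T_i$ has density at most $\lambda_i$ (which forces each branch of $T_i$ to have density exactly $\lambda_i$), ordering the edges of $T_i$ recursively — depth-first, visiting branches in decreasing order of density (prize over cost) — makes profit grow at rate $\lambda_i$ once an initial ``reaching cost'' of at most $\chi$ has been spent, which is exactly the required bound; combining the two lemmas closes the positive direction. I expect this within-block analysis to be the main obstacle, because a maximum-density block can have total cost far exceeding $\chi$, so its entire cost cannot simply be charged and the order in which its edges enter the incremental solution has to be chosen and analyzed carefully — the naive ordering by distance from the root already fails.

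For the negative direction, fix $\alpha\in[0,\chi)$ and consider the star with center $r$, a leaf $a$ at distance $\varepsilon$ with prize $\delta$, and a leaf $b$ at distance $\chi$ with prize $1$, where $0<\varepsilon<\chi-\alpha$ and $\delta=\varepsilon/(2\chi)$; the root-eccentricity of this instance is $\chi$. For this choice of $\delta$ the subtree $\{r,b\}$ is the unique maximum-density subtree, since $1/\chi>\delta/\varepsilon$ and $1/\chi>(1+\delta)/(\chi+\varepsilon)$, so density-greedy adds the edge $rb$ before $ra$. Hence $p(\ALG(\varepsilon+\alpha))=0$ because $\varepsilon+\alpha<\chi$, while $p(\OPT(\varepsilon))=\delta>0$, so no finite $\mu$ satisfies $\mu\,p(\ALG(\varepsilon+\alpha))\ge p(\OPT(\varepsilon))$ and density-greedy with additive slack $\alpha<\chi$ fails to be $(\alpha,\mu)$-competitive for every finite $\mu$.
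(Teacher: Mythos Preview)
Your overall architecture matches the paper's: decompose the greedy output into blocks $T_1,T_2,\dots$, show $p(\OPT(b))$ lies below the concave piecewise-linear interpolant $\hat p$ (your first lemma is essentially the paper's \Cref{lem:boundopt}, and your inclusion--exclusion argument via $X=T'\cap S_{i-1}$ and $Y=T'\cup S_{i-1}$ is a clean variant of it), then bound $p(\ALG(B+\chi))$ from below block by block. The polynomial-time sketch and the negative-direction instance are both fine and parallel the paper.

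The gap is precisely where you flag it: the within-block analysis. Two concrete problems. First, the algorithm does \emph{not} process the edges of a block $T_i$ depth-first by branch density. After contracting the first edge of $T_i$, the next min-max subtree is recomputed from scratch in the contracted tree; the resulting order can interleave different branches (e.g.\ root--$v$, with $v$ adjacent to $a$ (prize $10$), $b$ (prize $5$), and $a$ adjacent to $c$ (prize $1$), unit costs: greedy does $rv,\,va,\,vb,\,ac$, not the DFS order $rv,\,va,\,ac,\,vb$). So you cannot simply analyze a DFS. In fact, even the statement ``the first $\lvert T_1\rvert$ edges added are exactly the edges of $T_1$'' is nontrivial and is the content of the paper's \Cref{lem:complete_min-max_tree}; without it your block decomposition is not well-defined. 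Second, your density claim is off: rooted sub-subtrees of $T_i$ have density at most $\lambda_i$, but branches $T_e$ rooted at an internal edge have density \emph{at least} $\lambda_i$ (the paper's \Cref{lem:higher_density_in_induced_subtrees}), with strict inequality unless $T_e=T_i$; they are not forced to equal $\lambda_i$.

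What the paper actually does for this step (\Cref{lem:optimal_on_first_tree}) is: let $e^\ast$ be the first edge exceeding the budget and $P_{e^\ast}$ the root-to-$e^\ast$ path (cost $\le\chi$); then, using \Cref{lem:complete_min-max_tree} recursively, the forest $F=\ALG(B+\chi)\setminus P_{e^\ast}$ is a disjoint union of \emph{completed} extensions of min-max subtrees, each of density $\ge\lambda_1$, whence $p(F)\ge\lambda_1\,c(F)\ge\lambda_1\,B$. Your radius observation and the $\ge\lambda_i$ branch-density fact are the right ingredients, but you still need the structural lemma that the algorithm finishes each sub-block before leaving it, and the path-stripping argument to turn those ingredients into the inequality you want.
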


In particular, the density-greedy algorithm is $(\chi,1)$-competitive on trees.
We show that this is best possible.

\begin{restatable}{theorem}{thmresultLBtrees}
\label{thm:result_LB_trees}
    There exists no $(\alpha,\mu)$-competitive algorithm for~$\alpha < \chi$ and $\mu < \chi/\alpha$ or for~$\alpha < \chi/2$ and~$\mu \geq 1$, even on trees.
\end{restatable}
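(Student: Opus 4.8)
The plan is to prove the two claims with two separate families of tree instances. For the range $\alpha<\chi/2$ I would reuse (a parametrized version of) the instance in Figure~\ref{fig:no-mult-competitive} to push $\mu$ to infinity, and for the range $\alpha<\chi$ I would use a ``brush'' instance to force $\mu\ge\chi/\alpha$. The whole difficulty lies in the second family: the natural idea of placing a few discrete low‑prize ``traps'' that an incremental solution must reach early only breaks competitiveness when $\alpha<\chi/2$, because with slack $\alpha\ge\chi/2$ an algorithm can always afford one wasteful detour \emph{and} the far target inside budget $2\alpha\ge\chi$. The key trick will be to smear the low‑prize part into many infinitesimal pieces so that $p(\OPT(\cdot))$ becomes a linear ramp on $[0,\chi]$; an incremental solution that is forced to reach the far, high‑value target by budget $\chi+\alpha$ can then invest at most $\alpha$ worth of those pieces beforehand, and therefore lags exactly a factor $\chi/\alpha$ behind the ramp. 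All instances below are trees.

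For $\alpha<\chi/2$ and arbitrary finite $\mu\ge 1$, I would take the instance of Figure~\ref{fig:no-mult-competitive} with $\delta>0$ to be fixed later: it has eccentricity $\chi$ and exactly two incremental solutions. If the short edge (cost $\chi/2$, prize $\delta$) is built first, then since $\alpha<\chi/2$ we have $\chi+\alpha<\tfrac32\chi$, so at budget $\chi$ the solution still owns only that edge, giving $p(\ALG(\chi+\alpha))=\delta$ versus $p(\OPT(\chi))=1$, hence $\mu\ge 1/\delta$. If instead the long edge (cost $\chi$, prize $1$) is built first, then $\chi/2+\alpha<\chi$, so at budget $\chi/2$ the solution owns nothing while $p(\OPT(\chi/2))=\delta>0$, so it is not $(\alpha,\mu)$‑competitive for any $\mu$. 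Choosing $\delta<1/\mu$ then defeats any prescribed $\mu$, which gives the claim for this range.

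For $\alpha<\chi$ and $\mu<\chi/\alpha$ I would use the following brush on parameters $\delta>0$ (small) and $P$ (large): a root $r$ with $n=\lceil\chi/\delta\rceil$ edges of cost $\delta$ to distinct vertices of prize $1$, and one additional edge of cost $\chi$ to a vertex of prize $P$; the eccentricity is $\chi$. One first records $p(\OPT(m\delta))=m$ for $0\le m<n$ (only the short edges are affordable, since $m\delta<\chi$) and $p(\OPT(\chi))=P$ once $P>n$. Given the target $\mu<\chi/\alpha$, I would fix $\delta$ small enough that $1+\bigl(\lfloor(\chi-\alpha)/\delta\rfloor-1\bigr)/\lfloor\alpha/\delta\rfloor>\mu$ and take $P>\mu n$. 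The only freedom of an incremental solution is the number $j\in\{0,\dots,n\}$ of short edges it builds before the cost‑$\chi$ edge. Evaluating a supposed $(\alpha,\mu)$‑competitive solution at $B=\chi$: owning only short edges gives value at most $n<P/\mu$, so it must build the cost‑$\chi$ edge by budget $\chi+\alpha$, forcing $j\delta+\chi\le\chi+\alpha$, i.e.\ $j\le\lfloor\alpha/\delta\rfloor$; moreover $j\ge 1$, since with $j=0$ it owns nothing at budget $m\delta+\alpha<\chi$ below. Now evaluate at $B=m\delta$ with $m:=j+\lfloor(\chi-\alpha)/\delta\rfloor-1$: one checks $1\le m<n$, that $j\delta+\chi>m\delta+\alpha$ so the cost‑$\chi$ edge is still unbuilt, and that all $j$ earlier short edges are built, so $p(\ALG(m\delta+\alpha))=j$ while $p(\OPT(m\delta))=m$. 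Competitiveness then gives $\mu\ge m/j\ge 1+\bigl(\lfloor(\chi-\alpha)/\delta\rfloor-1\bigr)/\lfloor\alpha/\delta\rfloor>\mu$, a contradiction; and as $\delta\to 0$ this lower bound tends to $\chi/\alpha$, matching the statement.

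The steps I expect to need the most care are the integer bookkeeping in the brush — choosing $m$ exactly one $\delta$‑step below the budget at which the expensive edge becomes available, and verifying $m<n$ so that $\OPT$ at that budget genuinely cannot reach it — together with the observation that this really does exhaust the adversary's options: since the expensive edge is the unique edge incident to the high‑prize vertex and every edge is incident to $r$, the algorithm's behaviour up to and slightly past the moment it commits to that edge is fully described by the single number $j$, which is precisely what the argument pins down. I would then remark why the simpler ``few discrete traps'' constructions only reach $\alpha<\chi/2$ (the obstruction identified in the first paragraph), which is what motivates the brush in the first place.
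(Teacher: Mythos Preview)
Your proof is correct and follows essentially the same approach as the paper: a two-edge instance for the range $\alpha<\chi/2$ and a ``brush'' instance (many cheap short edges plus one expensive high-prize edge at distance~$\chi$) for the range $\alpha<\chi$. The only minor variation is that you take the heavy prize $P$ arbitrarily large to force the algorithm to build the long edge by budget $\chi+\alpha$, thereby pinning down $j\le\lfloor\alpha/\delta\rfloor$ directly, whereas the paper fixes the heavy prize at $2\chi$ and instead carries an extra case split on whether the algorithm builds many or few short edges first; this makes your argument slightly more streamlined but is otherwise the same idea.
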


We proceed to generalize the density-greedy algorithm in order to apply it to non-tree networks. Here, an additive slack in budget equal to the eccentricity of the root no longer suffices for it to stay close to the optimum solution.
We obtain, however, an approximation guarantee when granting an additive slack equal to the maximum cost~$\lrp$ of any vertex-disjoint $r$-$v$-path in~$G$. 

\begin{restatable}{theorem}{thmresultgreedygraphs}
\label{thm:result_greedy_graphs}
    The density-greedy algorithm is $(\lrp,2)$-competitive, but not $(\chi,\mu)$-competitive for any~$\mu \geq 1$.
\end{restatable}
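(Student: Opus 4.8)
The plan is to treat the two assertions separately and to lean most heavily on a careful bookkeeping of the greedy run. Write $T_1,T_2,\dots$ for the subtrees the density-greedy algorithm selects, each in the graph obtained from $G$ by contracting the previously selected ones, let $\dens_i \coloneqq p(T_i)/c(T_i)$ be their densities, and let $B_i \coloneqq \sum_{k\le i} c(T_k)$ (with $B_0\coloneqq 0$) be the prefix costs, so that $\ALG(B_i)$ collects exactly $\sum_{k\le i}p(T_k)$. I would first record the two standard facts that (i) the densities are non-increasing, $\dens_1\ge\dens_2\ge\dots$, since $T_i$ is itself a candidate subtree through the root already in the graph in which $T_{i-1}$ was selected, and hence (ii) $p(\ALG(B_i))=\sum_{k\le i}\dens_k c(T_k)\ge \dens_{i+1}B_i$. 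Next comes a charging step: fixing $B\ge 0$ and letting $i^\ast$ be the largest index with $B_{i^\ast}\le B$, I contract $T_1\cup\dots\cup T_{i^\ast}$ and prune the image of $\OPT(B)$ to a subtree through the contracted root that still spans every prize vertex of $\OPT(B)$ not yet collected; this is a candidate subtree of cost at most $B$, so by maximality of $T_{i^\ast+1}$ it collects at most $\dens_{i^\ast+1}B$, which yields $p(\OPT(B))\le p(\ALG(B_{i^\ast}))+\dens_{i^\ast+1}B$.

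The heart of the argument is a bound on how much profit the greedy solution has gathered at budget $B+\lrp$, and this is where $\lrp$ enters. The key structural observation is that inside any selected subtree $T_i$ every vertex lies within distance $\lrp$ of its (contracted) root: a root-to-$v$ path inside $T_i$, prepended with the unique path from $r$ to $T_i$ through the already-built tree, is a simple path of $G$ starting in $r$, hence of cost at most $\lrp$. Combining this with the fact that every subtree that a recursive density-greedy would peel off from $T_i$ again has density exactly $\dens_i$ (such a subtree has density at least $\dens_i$, witnessed by $T_i$ itself, and at most $\dens_i$; so by a weighted-average argument all the peeled pieces have density $\dens_i$), I would prove the ordering lemma that the edges of $T_i$ can be — and by the algorithm are — arranged so that every prefix of cost $x$ collects profit at least $\dens_i(x-\lrp)$. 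Concretely, within a density-irreducible piece one repeatedly appends a shortest root-path to a not-yet-collected prize vertex and bounds the running deficit $\dens_i\cdot(\text{cost})-(\text{profit})$ by $\dens_i\lrp$ using the distance bound.

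With these in hand the positive claim follows by case analysis. Let $j$ be the largest index with $B_j\le B+\lrp$, so $j\ge i^\ast$. If $j\ge i^\ast+1$, then $\ALG(B+\lrp)$ contains all of $T_1\cup\dots\cup T_{i^\ast+1}$, and from $p(\ALG(B+\lrp))\ge p(\ALG(B_{i^\ast}))+\dens_{i^\ast+1}c(T_{i^\ast+1})$ together with fact (ii) and $B_{i^\ast+1}>B$ one gets $2\,p(\ALG(B+\lrp))\ge p(\ALG(B_{i^\ast}))+\dens_{i^\ast+1}B_{i^\ast+1} > p(\ALG(B_{i^\ast}))+\dens_{i^\ast+1}B\ge p(\OPT(B))$. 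If instead $j=i^\ast$, then $\ALG(B+\lrp)$ is $T_1\cup\dots\cup T_{i^\ast}$ plus a prefix of $T_{i^\ast+1}$ of cost $B+\lrp-B_{i^\ast}\ge\lrp$, which by the ordering lemma collects at least $\dens_{i^\ast+1}(B-B_{i^\ast})$; using fact (ii) and $B\ge B_{i^\ast}$ one again gets $2\,p(\ALG(B+\lrp))\ge p(\ALG(B_{i^\ast}))+\dens_{i^\ast+1}B\ge p(\OPT(B))$. This establishes $(\lrp,2)$-competitiveness.

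For the second assertion I would exhibit a family of instances in which $\lrp$ is strictly larger than $\chi$ and in which, because reaching a highly rewarding region is density-optimal only via a detour that bundles it with auxiliary vertices, the density-greedy algorithm first commits to a subtree whose greedy traversal has a ``profit drought'' longer than $\chi$; taking $B$ just below the cost at which the drought ends, $\OPT(B)$ already collects a fixed profit close to $r$ while $\ALG(B+\chi)$ is still inside the drought, so that $p(\OPT(B))/p(\ALG(B+\chi))$ can be driven to infinity and no multiplicative guarantee survives. The two steps I expect to be most delicate are the ordering lemma — in particular pinning down the order within a density-irreducible subtree so that the deficit is controlled by exactly $\dens_i\lrp$ — and making the negative instance precise, i.e.\ verifying that density-greedy genuinely prefers the detour and that the eccentricity is as claimed.
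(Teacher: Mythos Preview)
Your high-level architecture for the positive assertion matches the paper's: non-increasing densities, the charging bound $p(\OPT(B))\le p(\ALG(B_{i^\ast}))+\dens_{i^\ast+1}B$, and an ``ordering lemma'' inside each $T_i$ are exactly the three ingredients the paper assembles. The case split on $j$ is also fine once those ingredients are in hand.

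The genuine gap is in your sketch of the ordering lemma. The claim that ``every subtree that a recursive density-greedy would peel off from $T_i$ again has density exactly $\dens_i$'' is false. Because $T_i$ is a min-max subtree, the first tree the inner run computes is $T_i$ itself; but after contracting the first edge, the subsequently computed min-max subtrees have \emph{strictly larger} density (this is exactly the content of Lemma~\ref{lem:complete_min-max_tree}). So there is no reason for all the peeled pieces to share density $\dens_i$, and the weighted-average step collapses. Separately, ``repeatedly append a shortest root-path to a not-yet-collected prize vertex'' is not the order the algorithm produces, so even if that order obeyed your deficit bound it would not control $p(\ALG(B+\lrp))$.

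The paper's fix is much simpler than re-analyzing the order from scratch: Algorithm~\ref{alg:dg-graph} runs Algorithm~\ref{alg:dg-tree} on the \emph{tree} $T_i$, and the eccentricity of the (contracted) root inside $T_i$ is at most $\lrp$ by your own observation about simple $r$-paths. One can therefore invoke the tree result directly: Lemma~\ref{lem:optimal_on_first_tree} (applied inside $T_i$ with $\chi$ replaced by its eccentricity, which is $\le\lrp$) gives $p(\ALG(B+\lrp))\ge \dens_i\,(B-B_{i-1})$ for $B_{i-1}\le B\le B_i$; this is stated as Lemma~\ref{lem:graph_optimal_on_first_tree} and is precisely your ordering lemma. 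In other words, the heavy lifting was already done in the tree section and should be reused, not redone.

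For the negative assertion, your qualitative picture is right, but the details are what make it work. The paper's concrete instance attaches three prize-$2$ vertices at distance $1$ behind a common vertex that sits at distance $3$ from $r$ (so the unique min-max subtree has a first edge of cost $3$), together with a prize-$\varepsilon$ vertex at distance $2\varepsilon$ from $r$; then $\chi=3+\varepsilon$, the algorithm's first four edges are the long detour, and $p(\ALG(2\varepsilon+\chi))=0$ while $p(\OPT(2\varepsilon))=\varepsilon$. You should pin down both that density-greedy genuinely prefers the detour and that the drought length exceeds $\chi$; the latter is easy to break if the gadget is not built carefully.
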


While the density-greedy algorithm cannot be implemented in polynomial time, unless $\mathsf{P}=\mathsf{NP}$, we give an approximate variant of the algorithm that runs in polynomial time at the expense of increasing the approximation factor from $2$ to $3$.

\begin{restatable}{corollary}{corresultgreedygraphs}
An approximate variant of the density-greedy algorithm runs in polynomial time and is~$(\gamma,3)$-competitive. 
\end{restatable}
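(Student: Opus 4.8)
The only obstacle to a polynomial running time of the density-greedy algorithm on general graphs is the subroutine that, in the current contracted instance, has to find a rooted subtree of maximum density $p(T)/c(T)$; as already noted, this cannot be done in polynomial time unless $\mathsf{P}=\mathsf{NP}$, which is exactly why \Cref{thm:result_greedy_graphs} does not already yield a polynomial-time algorithm. The plan is to run the algorithm with an \emph{approximate} maximum-density oracle and to carry the incurred loss through the analysis of \Cref{thm:result_greedy_graphs}.

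First I would establish that a rooted subtree whose density is within a factor~$2$ of the maximum can be computed in polynomial time. For a fixed parameter $\lambda>0$, replacing each edge cost $c(e)$ by $\lambda\,c(e)$ and keeping the prizes turns the task $\argmax_{T\in\T}\bigl(p(T)-\lambda\,c(T)\bigr)$ into a net-worth prize-collecting Steiner-tree problem; classical primal--dual techniques for prize-collecting Steiner trees yield, in their Lagrangian-multiplier-preserving form and after the usual pruning, a tree $T$ with $2\lambda\,p(T)-c(T)\geq 2\bigl(\lambda\,p(T')-c(T')\bigr)$ for every $T'\in\T$. Hence, whenever some subtree has density at least $1/\lambda$, the returned tree has density at least $1/(2\lambda)$. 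Combining this with a search over $\lambda$ (it suffices to consider polynomially many values, as the optimal density is a ratio of two sums of input numbers) gives a polynomial-time $2$-approximation for the maximum-density subtree; in particular the returned density is positive whenever an uncollected vertex of positive prize still exists, so the approximate algorithm still terminates after at most $\abs{V}$ contractions with an incremental solution spanning $V^{\ast}$.

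Next I would re-run the proof of \Cref{thm:result_greedy_graphs} with this $2$-approximate oracle in place of the exact one. The additive slack there stems only from reconnecting the not-yet-collected part of $\OPT(B)$ to the already-built (contracted) tree along a vertex-disjoint $r$-$v$-path of cost at most~$\lrp$, and this reconnection estimate does not refer to how the greedy segments were chosen, so the additive term stays at~$\lrp$. In the multiplicative part, the density of each greedy segment is used as a lower bound on its marginal value per unit of cost; replacing the oracle weakens this lower bound by a factor of~$2$ for the segment whose construction is interrupted when the budget $B+\lrp$ is exhausted, while the prize already gathered by the segments completed before it is unaffected. Tracking this factor of~$2$ through the same charging argument should turn the guarantee $2\,p(\ALG(B+\lrp))\geq p(\OPT(B))$ into $3\,p(\ALG(B+\lrp))\geq p(\OPT(B))$, i.e.\ a $(\lrp,3)$-competitive polynomial-time algorithm.

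The hard part will be this third step: one must check that switching to the approximate oracle does not invalidate the structural ingredients used in the proof of \Cref{thm:result_greedy_graphs} --- notably the (near-)monotonicity of the segment densities across contractions and the ability to charge the interrupted segment against the reconnected remainder of $\OPT(B)$ --- and then confirm that the accumulated error is exactly a factor of~$3$ and not larger. A secondary technical point is to state the density oracle with a guarantee strong enough for this analysis (density within a factor of~$2$ together with termination), which requires dealing with the degenerate cases of the Lagrangian-multiplier-preserving bound and with the convergence of the parametric search over~$\lambda$.
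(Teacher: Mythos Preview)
Your high-level plan --- replace the exact maximum-density subroutine by a $2$-approximate oracle and track the extra factor through the proof of \Cref{thm:result_greedy_graphs} --- is exactly what the paper does. Two points, however, differ from the paper and one of them is a genuine gap.

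\textbf{The oracle.} You spend effort deriving a $2$-approximation for the maximum-density rooted subtree via a Lagrangian/LMP route and a parametric search over~$\lambda$. The paper avoids this entirely by citing Hermans, Leus, and Matuschke, who already give a polynomial-time $2$-approximation for exactly this subproblem. Using their result off the shelf removes all the secondary technical issues you mention (degenerate LMP cases, convergence of the search over~$\lambda$).

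\textbf{Monotonicity of segment densities.} You correctly flag as the ``hard part'' that the proof of \Cref{thm:result_greedy_graphs} uses $\dens_{\overbar{T}_k}(T_1)\geq\cdots\geq\dens_{\overbar{T}_k}(T_k)$, and you hope that ``(near-)monotonicity'' survives. It does not: with a $2$-approximate oracle the inequality \eqref{eq:density-decreasing} can fail outright, and then the bound $p(\ALG(B+\lrp))\geq \dens_{\overbar{T}_k}(T_k)\,B$ collapses. The paper fixes this not by analysis but by \emph{modifying the algorithm}: after computing the approximate trees $T_i^*$, whenever two consecutive trees have increasing densities, the algorithm swaps them (if they are rooted at the same vertex of the current contracted graph) or merges them into a single tree. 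This post-processing enforces exact monotonicity of the densities while preserving the single-branch-at-the-root property, so the chain of inequalities leading to $p(\ALG(B+\lrp))\geq \dens_{\overbar{T}_k}(T_k)\,B$ goes through verbatim. Your proposal is missing this step.

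\textbf{Where the factor $2$ enters.} Your description locates the loss in ``the segment whose construction is interrupted when the budget $B+\lrp$ is exhausted''. In the paper's proof the loss is elsewhere: the lower bound on $p(\ALG(B+\lrp))$ via \Cref{lem:graph_optimal_on_first_tree} only uses the \emph{actual} densities of the chosen trees and is unaffected by approximation. The single place the factor~$2$ hits is the comparison of $\dens_{\overbar{T}_k}(T_k)$ with the density of $\contract{\OPT(B)}{\overbar{T}_{k-1}}$, i.e., inequalities \eqref{eq:density-estimation-1}--\eqref{eq:density-estimation-2}; this halves the right-hand side and hence adds one more copy of $p(\ALG(B+\lrp))$ in \eqref{eq:density-approximate}, turning $2$ into $3$.
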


We further prove that no algorithm can reach the optimum on general graphs if the additional slack in budget only depends on the length of the longest vertex-disjoint paths from the root.

\begin{restatable}{theorem}{thmlowerboundgraph}
For every $(\alpha,\mu)$-competitive algorithm with $\alpha$ only depending on $\lrp$, it holds that $\mu\geq 17/16$.
\end{restatable}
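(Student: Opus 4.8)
The plan is to build, for any prescribed additive slack, a parametrised family of instances on which no incremental solution can stay within a factor below $\tfrac{17}{16}$ of the optimum even with that slack. Since $\alpha$ is assumed to depend only on $\lrp$, the governing idea is to keep $\lrp$ equal to a fixed constant $\lrp_0$ across the whole family, so that a fixed $(\alpha,\mu)$-competitive algorithm uses one and the same additive budget $C\coloneqq\alpha(\lrp_0)$ on every member, while an independent size parameter is driven to infinity. Each graph is built around a small fixed gadget at the root — designed so that every simple $r$–$v$-path has cost at most $\lrp_0$, unlike on a tree where long root paths are unavoidable — to which two large families of prize-bearing leaves are attached through mutually obstructing routes, one of lower and one of higher density.

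First I would fix the gadget and the two leaf families so that $\OPT(B)$ is genuinely non-monotone in the edge set it uses: on a first range of budgets the optimum reaches leaves through the cheap-to-enter, low-density route, and on a later range it switches to the expensive-to-enter, high-density route and discards the edges used before; because the two routes are (essentially) edge-disjoint and the leaf families are large, any single tree realising both pays close to the sum of the two sizes. Next I would write $\OPT(B)$ as an explicit piecewise-linear function, locate the budget $B^{\ast}$ at which the route switch happens, and check that $B^{\ast}$ grows with the size parameter while $\lrp$ stays equal to $\lrp_0$; thus, once the size parameter exceeds $C$, the additive slack can no longer bridge the switch.

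Then, for an arbitrary incremental solution $\pi$, I would run a short case distinction on how far $\pi$ has committed to the first route by the time its slack-augmented budget reaches $B^{\ast}$. If $\pi$ has committed enough to be competitive on the first range, then near $B^{\ast}$ its prefix is dominated by first-route edges, and since completing the second route afterwards costs more than the residual budget plus $C$, at budget $B^{\ast}$ it trails $\OPT$ by a factor of at least $\tfrac{17}{16}$. Otherwise $\pi$ has under-committed to the first route and already fails to be $\tfrac{17}{16}$-competitive at some budget inside the first range. Tuning the leaf prizes and the gadget's entry costs so that both branches of the dichotomy bottom out at exactly $\tfrac{17}{16}$ finishes the argument, because $\lrp=\lrp_0$ on every member of the family, so any algorithm whose slack depends only on $\lrp$ must fail on the member whose size parameter exceeds $C$.

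The main obstacle I expect is engineering the gadget so that the route switch in $\OPT$ occurs at a budget that scales with the instance rather than with $\lrp$ — equivalently, so that an incremental solution which is near-optimal on the first budget range is provably forced to have wasted an amount of budget proportional to the instance size, not merely a constant — while at the same time keeping the residual multiplicative gap pinned at exactly $\tfrac{17}{16}$ rather than letting it wash out towards $1$ as the size parameter grows. This is exactly the phenomenon that cannot arise on trees, where slack $\chi$ already removes all multiplicative loss by Theorem~\ref{thm:result_greedy_trees}; balancing the two-route obstruction in the cyclic gadget against the prize values, so that neither the first-range loss nor the switch-point loss degrades below $\tfrac{17}{16}$, is where the bulk of the work lies.
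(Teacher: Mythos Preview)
Your high-level strategy is right and matches the paper: hold $\lrp$ equal to a fixed constant across the family, let an independent size parameter $n\to\infty$, and exploit a two-regime conflict so that any incremental solution loses a factor $\tfrac{17}{16}$ at one of two test budgets. The case split on how far the solution has committed to the ``first route'' is also what the paper does.

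Where your plan diverges — and where the obstacle you yourself flag remains unresolved — is the construction. You propose a \emph{single small fixed gadget} at the root with two large leaf families hanging off it via mutually obstructing routes. But with a fixed-size gadget the cost of abandoning one route for the other is a constant, so an incremental solution that commits wrongly wastes only $O(1)$ budget, not $\Theta(n)$; at a test budget of order $n$ this yields a gap of $1+O(1/n)$, not a constant $\tfrac{17}{16}$. Your setup has a single switch point, and a bounded additive slack bridges it. Adding more leaves scales the budgets but not the waste.

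The paper's resolution is different: it takes $n$ \emph{copies} of a tiny four-vertex, four-edge gadget, all glued at $r$. Each copy independently offers ``pay $3$ for prize $3$ now'' (a single edge $a_i$) versus ``pay $7$ for prize $6$ later'' (a three-edge tree $B_i$ avoiding $a_i$), and a commitment to $a_i$ in one copy cannot be repaired by choices in another. The case split is on the \emph{count} $k_n$ of copies in which $\ALG(3n)$ used the edge $a_i$. If $k_n\le\tfrac{10}{17}n$, a per-copy density bound gives $p(\ALG(3n))\le\tfrac{16}{17}\cdot 3n$ while $p(\OPT(3n-\alpha))\ge 3n-\alpha-3$. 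If $k_n>\tfrac{10}{17}n$, then at budget $7n$ each of those $k_n$ copies needs cost $8$ rather than $7$ to collect full prize, which forces $p(\ALG(7n))\le\tfrac{16}{17}\cdot 6n$ while $p(\OPT(7n-\alpha))\ge 6n-\alpha-6$. Because the dilemma is replicated $n$ times, the wasted budget is automatically $\Theta(k_n)=\Theta(n)$, and letting $n\to\infty$ (with $\lrp=8$ fixed, hence $\alpha$ fixed) gives $\mu\ge\tfrac{17}{16}$. Replicating the gadget, rather than enlarging the leaf families, is the missing idea.
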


Note that since $\chi \leq \gamma$, this result also precludes the existence of an $(\alpha,\mu)$-competitive algorithm with $\alpha$ depending only on $\chi$ and $\gamma$ and $\mu < 17/16$. 

Finally, we design and analyze a capacity-scaling algorithm that is inspired by the incremental maximization algorithm of Bernstein et al.~\cite{BernsteinD0H22} and combines this approach with the density-greedy algorithm.
Instead of relying on subtrees of maximum density, the algorithm uses optimum solutions for the budget-constrained prize-collecting Steiner-tree problem for exponentially growing budgets.
These solutions are concatenated to give a solution for the incremental prize-collecting Steiner-tree problem.
We obtain the following result.

\begin{restatable}{theorem}{mainthmscalingmoreheadstart}
The capacity-scaling algorithm is $\smash{\bigl((4\ell\!-\!1)\chi,\frac{2^{\ell+2}}{2^\ell-1}\bigr)}$-competitive for every~$\ell \in \N$.
\end{restatable}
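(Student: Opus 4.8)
The plan is to analyze the incremental solution obtained by concatenating \emph{optimal} budget-constrained prize-collecting Steiner trees along a geometric sequence of budgets, to bound the loss caused by the concatenation by a subadditivity-type argument that pays an additive $\Theta(\chi)$ slack, and to let the parameter~$\ell$ control the ratio of the geometric sequence; concretely, I expect the sequence to have ratio $\rho \coloneqq 2^\ell$, so that the target factor is exactly $\frac{4\rho}{\rho-1} = \frac{2^{\ell+2}}{2^\ell-1}$ and the target slack is $\Theta(\ell\chi) = (4\ell-1)\chi$.

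First I would pin down the budget sequence, say $B_k = \rho^k$ for $k \geq k_0$ with $B_{k_0}$ of order~$\chi$, and specify how the optima are stitched into one incremental order. Since each $\OPT(B_k)$ is a subtree containing~$r$ and the already-built tree also contains~$r$, one can extend the current tree to a spanning tree of its union with~$\OPT(B_k)$ using only edges of~$\OPT(B_k)$, added in the order of a traversal of~$\OPT(B_k)$ rooted at~$r$; following the density-greedy idea, one uses \emph{density order} within the subtree so that budgets in the interior of a phase are also served. This contributes at most $c(\OPT(B_k)) \le B_k$ to the cumulative cost and leaves the tree spanning all of~$\OPT(B_k)$, so after cumulative budget $\sum_{i=k_0}^{k} B_i < \frac{\rho}{\rho-1}B_k$ the algorithm's tree has prize at least $p(\OPT(B_k))$; moreover, by \cref{thm:result_greedy_trees} applied to the tree~$\OPT(B_{k+1})$, at budget $\frac{\rho}{\rho-1}B_k + s$ the density-greedy prefix of cost about~$s$ has prize at least that of the best cost-$(s-\chi)$ subtree of~$\OPT(B_{k+1})$.

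The core of the argument is a covering lemma linking the above guarantees to $\OPT(B)$ in the graph~$G$: the optimum for a large budget~$B$ can be covered by a bounded number of feasible solutions for a budget of order $B/4$, paying $\Theta(\chi)$ additive slack. I would first argue that without loss of generality every edge of~$\OPT(B)$ has cost at most~$\chi$ --- otherwise replacing the expensive tree edge above a vertex~$v$ by a cheapest $r$-$v$-path (of cost $\ell(v)\le\chi$) and taking a spanning tree yields a tree of no larger cost and no smaller prize, so a minimum-cost maximum-prize optimum uses only cheap edges. A standard tree-partitioning step then splits~$\OPT(B)$ into $O(B/t)$ connected subtrees each of cost at most~$t$ (for any $t\ge\chi$), and each such subtree, reconnected to~$r$ by a path of cost at most~$\chi$, becomes a feasible solution of cost at most $t+\chi$; hence $p(\OPT(B)) \le N\cdot p(\OPT(t+\chi))$ with $N$ bounded, and choosing $t+\chi$ of order $B/4$ makes $N$ approach~$4$.

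Putting the pieces together: given $B\ge0$, I would locate the largest index~$k$ for which the algorithm has finished the relevant density-greedy prefix within phase~$k{+}1$ by budget $B+(4\ell-1)\chi$, write $B+(4\ell-1)\chi = \frac{\rho}{\rho-1}B_k + s$, and combine the within-phase $(\chi,1)$-guarantee with the covering lemma (applied to $\OPT(B)$ with $t+\chi$ roughly equal to the available budget $B_k$ or $s$, whichever is binding) to bound $p(\OPT(B))$ by $\frac{2^{\ell+2}}{2^\ell-1}\,p(\ALG(B+(4\ell-1)\chi))$. I expect the main obstacle to be the Bernstein-style optimization over the worst intermediate position~$s$ inside a phase, together with the bookkeeping of the additive $\chi$-terms so that they sum to exactly $(4\ell-1)\chi$ while the multiplicative factor collapses to $\frac{4\rho}{\rho-1}$; in particular the regime of small budgets $B=O(\chi)$ needs separate care, which is precisely where the density-greedy component --- $(\chi,1)$-competitive on trees by \cref{thm:result_greedy_trees} --- guarantees that the initial part of the incremental order already collects the dense low-cost prize, and one has to be careful because the optima $\OPT(B)$ for different budgets are not nested, so it is the covering lemma rather than monotone containment that must do the linking.
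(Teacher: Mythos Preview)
Your plan misidentifies where the parameter~$\ell$ enters. The capacity-scaling algorithm (\Cref{alg:cap}) is a \emph{single fixed} algorithm that always doubles the budget, i.e., $T_i=\OPT(2^i\chi)$; the theorem asserts that this one algorithm simultaneously satisfies the whole family of guarantees indexed by~$\ell$. So~$\ell$ cannot be the logarithm of the budget ratio; it appears only in the analysis. Concretely, the paper locates~$k$ with $B\in[(2^{k+1}-4\ell)\chi,(2^{k+2}-4\ell)\chi)$, uses $\sum_{i\le k}c(T_i)\le(2^{k+1}-1)\chi\le B+(4\ell-1)\chi$ to get $p(\ALG(B+(4\ell-1)\chi))\ge p(\OPT(2^k\chi))$, and then compares $p(\OPT(2^k\chi))$ to $p(\OPT(B))$ in one covering step. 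The multiplicative factor thus decomposes as $\frac{2^{\ell+2}}{2^\ell-1}=\frac{4}{1-2^{-\ell}}$: the~$4$ is the fixed ratio between~$B<2^{k+2}\chi$ and~$2^k\chi$, and the $(1-2^{-\ell})^{-1}$ is the loss in the covering lemma --- not $\frac{4\rho}{\rho-1}$ with $\rho=2^\ell$.

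The covering step is where your proposal has a real gap. You want to partition $\OPT(B)$ into $N$ subtrees of cost~$\le t$ and reconnect each through~$r$, with $N\to4$ for $t+\chi\approx B/4$. But standard tree partitioning only guarantees $N\le 2B/t$ (a star with edges of cost just above~$t/2$ is tight), so with $t\approx B/4$ you get $N\approx 8$, and in any case this route produces no $(1-2^{-\ell})$ dependence. The paper instead proves a \emph{fractional} lemma (\Cref{lem:lambda_portion}): every rooted tree~$T$ contains, for any $\lambda\in[0,1]$ and $k\in\N$, a forest with at most~$k$ components, cost $\le\lambda\,c(T)$, and prize $\ge(1-2^{1-k})\lambda\,p(T)$. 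This is established by induction on~$k$ via the min-max-subtree decomposition and is the source of the $2^\ell-1$ in the denominator. Applying it to $\OPT(B)$ with $\lambda=\tfrac14$ and $k=\ell+1$ and reconnecting the $\ell$ non-root components along shortest paths of length~$\le\chi$ gives $p(\OPT(2^k\chi))\ge\tfrac{1-2^{-\ell}}{4}\,p(\OPT(B))$, which is exactly what is needed. Neither the within-phase density-greedy guarantee nor the ``no long edges'' reduction is used in the paper's argument.
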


In particular, our algorithm guarantees an $8$-approximation when allowed an additional~$3 \chi$ in budget and its approximation guarantee approaches~$4$ for increasing additive slack.

\subsection{Related Work}

Our work is based on the cardinality-constrained incremental maximization framework of Bernstein et al.~\cite{BernsteinD0H22}.
They considered monotone augmentable objective functions (a class of functions containing monotone submodular functions) subject to a growing cardinality constraint and devise a cardinality-scaling algorithm that is~$2.618$-competitive.
Disser et al.~\cite{DisserKSW23} expanded on this result by giving an improved lower bound of~$2.246$ and considering a randomized scaling approach for the problem that achieves a randomized competitive ratio of~$1.772$.
The framework was extended to a budget-constrained variant by Disser et al.~\cite{DisserKlimmLutzWeckbecker-23}.
A detailed treatment of incremental maximization was given by Weckbecker~\cite{Weckbecker23}.

The Steiner-tree problem is known to be $\mathsf{NP}$-complete since Karp~\cite{Karp72}. In fact, it is $\mathsf{NP}$-hard to approximate the problem within a factor of $96/95$, as shown by Chlebík and Chlebíková~\cite{ChlebikC08}. The current best approximation algorithm with an approximation guarantee of $\ln(4)+\epsilon$ is due to Byrka et al.~\cite{ByrkaGRS13} improving upon previous approximation algorithms (e.g., Hougardy and Prömel~\cite{HougardyP99}; Robins and Zelikovsky~\cite{RobinsZ05}). Based on earlier work of Balas~\cite{Balas89}, the prize-collecting variant of the problem was first studied by Bienstock et al.~\cite{BienstockGSW93} who gave a 3-approximation for the objective to minimize the sum of the edge costs and the weights of the non-collected vertices. This was later improved to a $(2-\frac{1}{n})$-approximation by Goemans and Williamson~\cite{GoemansW95}, where $n$ is the number of vertices of the graph, and to~$1.9672$ by Archer et al.~\cite{ArcherBHK11}.
In a recent paper by Ahmadi et al.~\cite{ahmadi2024prize} the approximation ratio was further reduced to $1.7994$.
Johnson et al.~\cite{JohnsonMP00} discussed further natural variants of the problem: net-worth, budget, and quota. The \emph{net-worth problem} where the task is to maximize the total collected prize minus the total cost is $\mathsf{NP}$-hard to approximate by any constant factor (Feigenbaum et al.~\cite{FeigenbaumPS01}).
In the \emph{budget problem}, the task is to find a tree that maximizes the total collected prize collected subject to a budget constraint on the total cost of the tree.
Constant-factor approximations are only known for the unrooted variant of the problem \cite{JohnsonMP00,Levin04,PaulFFSW20}; see also the discussion by Paul et al.~\cite{Paul23erraturm}. For a problem more general than the rooted version, Ghuge and Nagarajan~\cite{GhugeN22} gave a quasi-polynomial poly-logarithmic approximation.
In the \emph{quota problem}, the task is to find a minimum cost tree that collects at least a given total prize. 
The quota version where all vertices have a prize of $1$ is known as the $k$-MST problem. The best known approximation for the $k$-MST problem is a $2$-approximation due to Garg~\cite{Garg05} improving upon previous approximation algorithms (e.g., Arya and Ramesh~\cite{AryaR98}; Blum et al.~\cite{BlumRV99}; Garg~\cite{Garg96}). As discussed by Johnson et al.~\cite{JohnsonMP00}, algorithms for the $k$-MST problem that rely on the primal-dual algorithm of Goemans and Williamson~\cite{GoemansW95} (such as the one by Garg~\cite{Garg05}) extend to the general quota problem so that the best known approximation factor for the quota problem is $2$ as well; see also the discussion by Griesbach et al.~\cite{GriesbachHKS23}.
An incremental version of the $k$-MST problem  has been studied by Lin et al.~\cite{LinNRW10}. Here, an incremental solution is a sequence of edges such that the cost of the smallest prefix of edges that connects at least $k$ vertices approximates the value of the optimal $k$-MST solution for all $k \in \mathbb{N}$. They noted that a deterministic $8$-competitive and a randomized $2e$-competitive incremental solution are implicit in the works on the minimum latency problem \cite{BlumCCPRS94,GoemansK98} and  gave improved incremental solutions with competitive ratios $4$ and $e$ for the deterministic and randomized case, respectively. 

The more general Steiner forest problem asks for given pairs of vertices to be connected at minimal cost. This problem admits a $(2-\frac{1}{n})$-approximation (Agrawal et al.~\cite{AgrawalKR95}; Goemans and Williamson~\cite{GoemansW95}).
For a prize-collecting version of the problem, Ahmadi et al.~\cite{AhmadiGHJM24} recently gave a $2$-approximation improving the $2.54$-approximation by Hajiaghayi and Jain~\cite{HajiaghayiJ06} whose approach has been generalized by Sharma et al.~\cite{SharmaSW07} towards more general submodular penalty functions.

Further related are dynamic and universal variants of the Steiner-tree problem.
In the dynamic Steiner-tree problem, vertices are added or deleted in an online fashion and a network spanning the active vertices has to be maintained with approximately optimal costs at all times. If only vertex additions are allowed, the natural greedy algorithm yields a $\mathcal{O}(\log n)$-approximation under $n$ additions as shown by Imase and Waxman~\cite{ImaseW91}. Gu et al.~\cite{GuG016} gave an algorithm that adds an edge and swaps an edge per addition and maintains a constant approximation. Naor et al.~\cite{NaorPS11} studied a further variant of the problem where also vertices have costs.
Xu and Moseley~\cite{0002M22} gave a learning-augmented algorithm that uses a prediction about which terminals will be added eventually.
When only deletions are allowed, Gupta and Kumar~\cite{GuptaK14} showed how to maintain a constant approximation while changing a constant number of edges per deletion.
Jia et al.~\cite{JiaLNRS05} introduced the notion of universal Steiner-trees which are rooted spanning trees containing an approximately minimal Steiner-tree for any subset of vertices. Universal Steiner-trees with poly-logarithmic approximation factor have been devised by Busch et al.~\cite{BuschCFHHR23}.

\section{Incremental Prize-Collecting Steiner-Tree on Trees} \label{sec:greedy_trees}

In this section we introduce the \emph{density-greedy algorithm} and prove that it constructs a~$(\chi,1)$-competitive algorithm for the incremental prize-collecting Steiner-tree problem on trees. We further prove that this is the best possible approximation guarantee, i.e., there is no~$(\alpha,1)$-competitive algorithm for trees where~$\alpha<\chi$.
Throughout this section, we assume that~$G=(V,E)$ is a tree with root vertex $r$ and prize~$p(r)=0$.

The density-greedy algorithm uses the concept of a \emph{min-max subtree} introduced by Alpern and Lidbetter~\cite{AlpernL13} in the context of the expanding search problem. For a non-empty rooted subtree $T \in \T$ of~$G$, its density $\dens(T)$ is defined as the ratio of the prizes that are collected by $T$ and the cost of $T$, i.e., $\dens_G(T) \coloneqq p(T) / c(T)$.
When the graph $G$ is clear from context, we may drop the index $G$.
For the empty rooted subtree $T_{\emptyset} \coloneqq (\{r\},\emptyset)$, we set~$\dens(T_{\emptyset}) \coloneqq 0$.
We further say that $T \in \mathcal{T}$ has \emph{maximum density} if $\dens(T) \geq \dens(T')$ for all~$T' \in \T$.
Let
$\M \coloneqq \{T \in \T : \dens(T) \geq \dens(T') \text{ for all } T' \in \T\}$
be the set of all trees with maximum density $d^*$.
The inclusion-wise minimal elements of $\M$ are called the \emph{min-max-subtrees} of the graph.
The following properties of the set $\M$ are due to Alpern and Lidbetter~\cite{AlpernL13}.

\pagebreak[4]

\begin{lemma}[cf.~\cite{AlpernL13}, Lemma~2]
    \label{lem:alpern_and_lidbetter}
    The set $\M \cup (\{r\},\emptyset)$ is closed under union and intersection. Furthermore, the following holds:
    \begin{enumerate}[(i)]
        \item \label{item:alpern1} Distinct min-max subtrees are (edge-)disjoint.
        \item \label{item:alpern2} Every branch at $r$ of $T \in \M$ has density $\dens^*$.
        \item \label{item:alpern3} Every min-max subtree $T$ of $G$ has only one branch at $r$.
    \end{enumerate}
\end{lemma}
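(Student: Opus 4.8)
The plan is to prove the closure property first and then derive the three numbered items from it in a few lines each. The decisive structural observation is that, because $G$ is a tree, for any two rooted subtrees $T_1,T_2\in\T$ the union $T_1\cup T_2$ is again a rooted subtree (it is acyclic as a subgraph of $G$, and connected because both pieces meet at $r$), and the intersection $T_1\cap T_2$ is a rooted subtree or the trivial tree $T_{\emptyset}$ (in a tree the unique path between two vertices of a subtree stays inside that subtree, so $V(T_1)\cap V(T_2)$ together with $E(T_1)\cap E(T_2)$ is connected). Moreover both $c$ and $p$ are modular set functions on subgraphs, i.e.\ $c(T_1\cup T_2)+c(T_1\cap T_2)=c(T_1)+c(T_2)$ and the same identity for $p$ (inclusion--exclusion on edge and vertex sets; here $p(r)=0$ so $r$ causes no issue).

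For closure, take $T_1,T_2\in\M$, so $p(T_i)=\dens^{*}c(T_i)$. Adding the two modularity identities gives $p(T_1\cup T_2)+p(T_1\cap T_2)=\dens^{*}\bigl(c(T_1\cup T_2)+c(T_1\cap T_2)\bigr)$. On the other hand, since $\dens^{*}$ is the maximum density over $\T$ and $\dens(T_{\emptyset})\coloneqq 0$, we have $p(T_1\cup T_2)\le\dens^{*}c(T_1\cup T_2)$ and $p(T_1\cap T_2)\le\dens^{*}c(T_1\cap T_2)$. Hence both inequalities are equalities, so $T_1\cup T_2$ and $T_1\cap T_2$ lie in $\M\cup\{T_{\emptyset}\}$; unions and intersections involving $T_{\emptyset}$ are trivial, which completes the closure claim.

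Item (ii): decompose $T\in\M$ into its branches $B_1,\dots,B_k$ at $r$, which are edge-disjoint, satisfy $\bigcup_i B_i = T$, and pairwise intersect in $(\{r\},\emptyset)$; modularity (and $p(r)=0$) gives $c(T)=\sum_i c(B_i)$ and $p(T)=\sum_i p(B_i)$. Each $B_i\in\T$, so $p(B_i)\le\dens^{*}c(B_i)$, whence $p(T)\le\dens^{*}c(T)$ with equality forcing $p(B_i)=\dens^{*}c(B_i)$ for every $i$; thus each branch is in $\M$. Item (i): if two distinct min-max subtrees $T_1,T_2$ shared an edge, then $T_1\cap T_2$ would be a nontrivial element of $\M$ contained in $T_1$; inclusion-minimality of $T_1$ in $\M$ then forces $T_1\cap T_2=T_1$, i.e.\ $T_1\subseteq T_2$, and symmetrically $T_2\subseteq T_1$, contradicting $T_1\neq T_2$. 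Item (iii): if a min-max subtree $T$ had $k\ge 2$ branches at $r$, then by (ii) a single branch $B_1$ is a proper subtree of $T$ lying in $\M$, again contradicting inclusion-minimality of $T$.

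\textbf{Main obstacle.} There is no deep difficulty here; the proof is essentially bookkeeping built on one genuinely used hypothesis, namely that $G$ is a tree (so that $T_1\cup T_2\in\T$ — this is exactly what fails for general graphs and forces the different analysis in later sections). The remaining care points are purely of the "degenerate case'' variety: handling $T_1\cap T_2=T_{\emptyset}$ consistently with the convention $\dens(T_{\emptyset})=0$, and fixing the definition of "branch at $r$'' precisely enough that the decomposition $T=\bigcup_i B_i$ with pairwise trivial intersections, and hence the modularity of $c$ and $p$ along it, is unambiguous.
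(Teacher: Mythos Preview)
Your proof is correct. Note, however, that the paper does not actually supply a proof of this lemma: it is quoted from Alpern and Lidbetter~\cite{AlpernL13} and stated without argument. So there is no ``paper's own proof'' to compare against here. That said, your modularity argument for closure (inclusion--exclusion on $p$ and $c$, combined with $p(S)\le\dens^{*}c(S)$ for every $S\in\T$, forcing equality for both the union and the intersection) is the standard and natural one, and is almost certainly what \cite{AlpernL13} do as well; your derivations of (i)--(iii) from closure are clean. For what it is worth, the paper \emph{does} later prove the analogues of (ii) and (iii) on general graphs (\Cref{lem:alpern_and_lidbetter_gen}), and the arguments there are the obvious variants of yours: for (ii) they remove a branch of strictly smaller density to increase the density of $T$ (the ``delete one term'' dual of your ``sum over all branches'' averaging), and (iii) is identical to your argument. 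Your remark that closure under union is precisely the step that breaks on non-tree graphs is spot on and is exactly what motivates the paper's separate treatment in \Cref{sec:dg-graphs}.
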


We are now in position to explain the basic idea of the \emph{density-greedy algorithm} (cf.~\Cref{alg:dg-tree}). 
At the beginning, we start with the empty solution $\pi= ()$.
As long as the tree spanned by the edges in~$\pi$ does not collect the total prize of $G$, all edges contained in the current solution are contracted and we compute a min-max subtree in the contracted subgraph. By \Cref{lem:alpern_and_lidbetter}, this min-max subtree has a unique edge that is adjacent to the root. This edge is added to~$\pi$ and the whole procedure is repeated until $\pi$ collects the total prize of~$G$.
The density of the root vertex is defined to be zero. Hence, \Cref{alg:dg-tree} always terminates and returns a solution $\pi$ that spans all vertices in $V^*$.
In the remaining part of this section, we denote by $\pi$ the incremental solution returned by \Cref{alg:dg-tree}.
Recall that, for some cost budget $B\in \R_{\geq 0}$, we denote by $\ALG(B)$ the tree that is induced by the maximum prefix of $\pi$ that still obeys the cost budget $B$.

\begin{algorithm}[tb]
\caption{The density-greedy algorithm for trees}\label{alg:dg-tree}
\begin{algorithmic}[1]
\State $\pi \gets ()$
\State $\overbar{T} \gets (\{r\}, \emptyset)$
\While{$p_{\contract{G}{\overbar{T}}}(\contract{G}{\overbar{T}}) \neq 0$}
    \State fix arbitrary min-max subtree $T^*$
    of $\contract{G}{\overbar{T}}$ \label{it:compute-tree}
    \State denote by $e^*$
    the unique edge in $T^*$
    incident to $r$ \label{it:choose-e}
    \State denote by $e\in E$ the edge in $G$ that corresponds to $e^*$ in $\contract{G}{\overbar{T}}$
    \State append $e$
    to $\pi$ \label{it:append-e}
    \State $\overbar{T} \gets \overbar{T}\cup \{e\}$
\EndWhile
\State \Return $\pi$
\end{algorithmic}
\end{algorithm}

The algorithm maintains a graph where the edges of a rooted tree are contracted.
For a formal analysis of the algorithm, we introduce the following notion regarding contractions. 
 For some $T\in \T$, we obtain the \emph{contracted graph} $\contract{G}{T} = (V_{\contract{G}{T}},\ E_{\contract{G}{T}})$ by contracting all edges of $T$ into the root. 
 The graph $\contract{G}{T}$ then consists of $|V|-|E_T|$ vertices including the root vertex and $|E|-|E_T|$ edges, possibly including parallel edges and self-loops.
The vertex prizes in $\contract{G}{T}$ remain the same as the ones in the original graph~$G$ where the prize of the root is set to $0$, i.e.,~$p_{\contract{G}{T}}(v) \coloneqq p(v)$ for all $v\in V_{\contract{G}{T}}\setminus\{r\}$ and~$p_{\contract{G}{T}}(r) \coloneqq 0$.
Also any edge that is not contracted keeps its original cost, i.e., $c_{\contract{G}{T}}(e)\coloneqq c(e)$ for all~$e\in E_{\contract{G}{T}}$.
By keeping parallel edges and self-loops, we obtain the property that each edge in $G$ is in a one-to-one correspondence to either an edge in $T$ or to an edge in the contracted graph~$G/T$. This property will be of particular use for the following definition.
Let $S=(V_S,E_S)$ be a subgraph of $G$.
For a rooted subtree $T\in\T$ of $G$, we denote by $E_{\contract{S}{T}}$ the edges of~$S$ that have a one-to-one correspondence to an edge in the contracted graph $\contract{G}{T}$, i.e., $E_{\contract{S}{T}}=E_S\setminus E_T$. Then the \emph{contracted subgraph} $\contract{S}{T}$ is the subgraph of $\contract{G}{T}$ that is induced by the edge set $E_{\contract{S}{T}}\subseteq E_{\contract{G}{T}}$.
In particular, we obtain $p_{\contract{G}{T}}(\contract{S}{T})=p(S)-p(V_S \cap V_T)$ and 
$c_{\contract{G}{T}}(\contract{S}{T})= c(S)-c(E_S \cap E_T)$.

\begin{figure}
    \centering
    \begin{subfigure}{0.48\textwidth}
        \centering
        \begin{tikzpicture}
         \tikzset{mininode/.style = {circle, fill, minimum size=2pt, inner sep=1pt},
            weightnode/.style = {rectangle, draw, minimum width=0.5cm,
                                minimum height = 0.5cm},
        }
            \node[mininode, label=below:$r$] at (0,0) (r) {};
            \node[mininode] at (-1,0) (v1) {};
            \node[mininode] at (1,0) (v2) {};
            \node[mininode] at (0,1) (v7) {};
            \node[mininode] at (-2,0) (v9) {};
            \node[mininode] at (2,0) (v3) {};
            \node[mininode] at (3,0) (v4) {};
            \node[mininode] at (2,-1) (v5) {};
            \node[mininode] at (1,1) (v6) {};
            \node[mininode] at (1,2) (v8) {};
            \node[mininode] at (-3,-1) (v10) {};
    
            \draw (r) -- (v1);
            \draw[Red] (r) -- (v2);
            \draw (r) -- (v7);
            \draw (v1) -- (v9);
            \draw[rounded corners, MidnightBlue] (1.2,0.2) -- (0.2,0.2) -- (0.2,1.2) -- (-0.2,1.2) -- (-0.2,0.2) -- (-2.2,0.2) -- (-2.2,-0.4) --node[below] {$T$} (1.2,-0.4) -- cycle;
    
            \draw[ForestGreen] (v9) -- (v10);
            \draw[ForestGreen] (v2) -- (v3);
            \draw[ForestGreen] (v3) --node[below] {$C^+$} (v4);
            \draw[ForestGreen] (v3) -- (v5);
    
            \draw[Red] (v2) -- (v6);
            \draw[Red] (v3) -- (v6);
            \draw[Red] (v6) -- (v7);
            \draw[Red] (v6) --node[right] {$S$} (v8);
            \draw[Red] (v7) -- (v9);
        \end{tikzpicture}
    \end{subfigure}
    \begin{subfigure}{0.48\textwidth}
        \centering
        \begin{tikzpicture}
         \tikzset{mininode/.style = {circle, fill, minimum size=2pt, inner sep=1pt},
            weightnode/.style = {rectangle, draw, minimum width=0.5cm,
                                minimum height = 0.5cm},
        }
            \node[mininode, label=below:$r$] at (0,0) (r) {};
            \node[mininode] at (1,0) (v3) {};
            \node[mininode] at (2,0) (v4) {};
            \node[mininode] at (1,-1) (v5) {};
            \node[mininode] at (0.5,1) (v6) {};
            \node[mininode] at (0.5,2) (v8) {};
            \node[mininode] at (-1,-1) (v10) {};
    
            \draw[ForestGreen] (r) -- (v10);
            \draw[ForestGreen] (r) -- (v3);
            \draw[ForestGreen] (v3) --node[below] {$C$} (v4);
            \draw[ForestGreen] (v3) -- (v5);
    
            \draw[Red] (r) to[bend right] (v6);
            \draw[Red] (r) to[bend left] (v6);
            \draw[Red] (v3) -- (v6);
            \draw[Red] (v6) --node[right] {$\contract{S}{T}$} (v8);
            \draw[Red] (r) to [min distance=2cm, looseness=30, in=210, out=150, loop]  ();
        \end{tikzpicture}
    \end{subfigure}
    \caption{The left figure shows graph $G$ with tree $T\in \T$ and the right figure shows the contracted graph $\contract{G}{T}$. The contracted subgraph $\contract{S}{T}$ contains parallel edges and loops. The extension $C^+$ of the connected subgraph $C\subset \contract{G}{T}$ is not connected.
    \label{fig:contracted-graph}}
\end{figure}
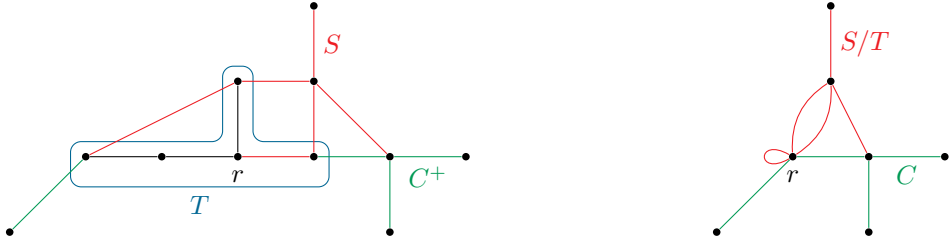

To reverse the contraction, we introduce the notion of an extended subgraph.
To this end, let $T\in \T$ and let $C=(V_C,E_C)$ be a subgraph of the contracted graph $\contract{G}{T}$.
We denote by $E_{C^+}$ the edges of $E$, that are in one-to-one correspondence to an edge in $E_C$. Note that~$|E_C|=|E_{C^+}|$.
Then, the \emph{extended subgraph} $C^+=(V_{C^+},E_{C^+})$ (or \emph{extension of~$C$}) is the subgraph of $G$ that is induced by the edge set $E_{C^+}$.
Observe, that $C^+$ is not necessarily connected in $G$, even if $C$ is connected in $\contract{G}{T}$. See \Cref{fig:contracted-graph} for an illustration.
However, if $C$ is connected and has at most one vertex adjacent to the root vertex, then $C^+$ is also connected.
We emphasise that in general, extending a contracted subgraph does not necessarily yield the original subgraph, i.e., for a subgraph $S$ of $G$ and some $T\in\T$ we have $(\contract{S}{T})^+\subseteq S$ and equality only holds when $E_S\cap E_T=\emptyset$.
However, repeating the contraction of $T$, then yields the same subgraph of $\contract{G}{T}$, i.e., $\contract{(\contract{S}{T})^+}{T}=\contract{S}{T}$.

\begin{figure}
    \centering
    \begin{tikzpicture}
     \tikzset{mininode/.style = {circle, fill, minimum size=2pt, inner sep=1pt},
        weightnode/.style = {rectangle, draw, minimum width=0.5cm,
                            minimum height = 0.5cm},
    }
        \node[mininode,label=above:{$r_T$}] at (-0.25,0.25) (r) {};
        \node[mininode,label=right:{\textcolor{MidnightBlue}{$v$}}, MidnightBlue] at (1,-1) (v) {};
        \node[mininode] at (0.2,-2) (v_1) {};
        \node[mininode] at (2,-2) (v_2) {};

        \draw (r) -- (-0.55,-0.05);
        \draw (r) -- (0.05,-0.05);
        \draw (0.5,-0.5) -- (v);
        \draw (v) -- (v_1);
        \draw (v) -- (0.8,-1.8);
        \draw[Red] (v) --node[above right] {$e$}  (v_2); 
        \draw (v_1) -- (0,-2.4);
        \draw (v_1) -- (0.4,-2.4);
        \draw (v_2) -- (1.5,-2.5);
        \draw (v_2) -- (2,-2.5);
        \draw (v_2) -- (2.5,-2.5);

        \node[label=below:{\dots}] at (-0.55,-0.05) {}; 
        \node[label=below:{\dots}] at (0.3,-0.05) {}; 
        \node[label=below:{\dots}] at (0.8,-1.8) {}; 
        \node[label=below:{\dots}] at (0.2,-2.4) {}; 
        \node[label=below:{\dots}] at (2,-2.5) {}; 

        \draw[rounded corners, MidnightBlue] (0.4,-0.7) -- (1.6,-0.7) -- (4.1,-3.2) -- (-1,-3.2) --node[above left] {$T_v$} cycle;
        \draw[rounded corners, Red] (0.4,-0.8) -- (1.5,-0.8) --node[above right] {$T_e$} (3.7,-3) -- (0.8,-3) -- (1.6,-2) -- cycle;        
    \end{tikzpicture}
    \caption{For the tree $T$ with root $r_T$, the branch $T_e$ rooted in edge $e$ is shown in red and the branch $T_v$ rooted in vertex $v$ is shown in blue.}
    \label{fig:rooted_subtree}
\end{figure}
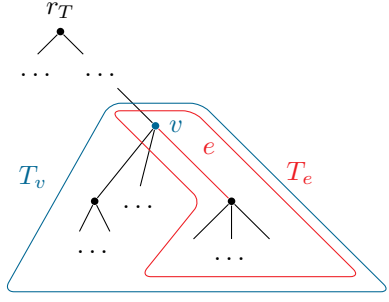

For the analysis of the density-greedy algorithm, we need to generalize the notion of density to subtrees that are not necessarily rooted. To this end, let $T=(V_T,E_T)$ be a subtree of $G$ and let $r_T\in V_T$ be the unique vertex of~$T$ that has the shortest distance to $r$ in $G$.
Then, the \emph{density of $T$ over $G$} is defined as $\dens_{G}(T)\coloneqq \bigl(p(T) - p(r_T)\bigr)\big/ c(T)$.
Similarly, for a collection of disjoint subtrees $F = \{T_1,\dots,T_k\}$, we set 
$ \dens_{G}(F) \coloneqq \bigl(\sum_{i=1}^k p(T_i) - p(r_{T_i})\bigr) \big/ \bigl( \sum_{i=1}^k c(T_i) \bigr)$ where again, for $i \in \{1,\dots,k\}$, the vertex $r_{T_i}$ is the unique vertex in $T_i$ with the smallest distance to $r$ in $G$. 
In order to avoid division by $0$, we set the density of a tree or forest to $0$ whenever the edge set is empty.

Let $T=(V_T,E_T)$ be a subtree of $G$ with root vertex $r_T$.
For an edge $e\in E_T$ we define the \emph{branch}~$T_e$ rooted in edge $e$ as the subtree of $T$ that is induced by the endpoints of $e$ and all vertices $u\in V_T$ such that~$e$ lies on the unique $r_T$-$u$-path in $T$, see \Cref{fig:rooted_subtree} for an illustration.
Analogously, for a vertex $v\in V_T$ we define the \emph{branch} $T_v$ rooted at $v$ as the subtree of $T$ that is induced by vertex $v$ and all vertices $u\in V_T$ such that $v$ lies on the unique~$r_T$-$u$-path in $T$, see \Cref{fig:rooted_subtree} for an illustration.

We proceed to show that for a min-max subtree, every branch $T_e$ rooted at some edge~$e$ has at least the same density as $T$.

\begin{restatable}{lemma}{lemhigherdensityininducedsubtrees}
    \label{lem:higher_density_in_induced_subtrees}
    Let $T$ be a min-max subtree of $G$. Then, for every edge $e$ in $T$ we have for the branch $T_e$ of~$T$ rooted at $e$ that $\dens_G(T_e) \geq \dens(T)$.
    Strict inequality holds, whenever $T\neq T_e$.
\end{restatable}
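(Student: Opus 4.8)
The plan is a simple exchange argument: if a branch $T_e$ had density strictly below $\dens(T)$, cutting it off would produce a rooted subtree of strictly larger density, contradicting that $\dens(T)$ is the maximum density; and if $T_e$ had density exactly equal to $\dens(T)$ while $T_e \neq T$, cutting it off would produce a strictly smaller element of $\M$, contradicting inclusion-wise minimality of the min-max subtree $T$.

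First I would fix notation. Since $G$ is a tree and the min-max subtree $T \in \M$ is a connected subgraph containing $r$, it contains the whole $r$-$x$-path for every $x \in V_T$, so $r_T = r$ and, using $p(r)=0$, we have $\dens_G(T) = p(T)/c(T) =: \dens^*$. By \Cref{lem:alpern_and_lidbetter}\ref{item:alpern3}, $T$ has a unique edge $e_0$ incident to $r$. Fix an edge $e=\{u_e,w_e\}\in E_T$ with $u_e$ the endpoint closer to $r$; then the branch $T_e$ is rooted at $r_{T_e}=u_e$, contains $e$ (so $c(T_e)>0$), and $\dens_G(T_e) = \bigl(p(T_e)-p(u_e)\bigr)/c(T_e)$. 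Let $T'$ be the connected component of $T-e$ that contains $r$; then $T' \in \T$, the component containing $w_e$ together with $u_e$ and $e$ equals $T_e$, so $T_e$ and $T'$ share only the vertex $u_e$ and partition the edges of $T$. Hence $c(T') = c(T)-c(T_e)$ and, accounting for the double-counted vertex $u_e$, $p(T') = p(T) - p(T_e) + p(u_e)$. Note that $T=T_e$ holds precisely when $u_e=r$, i.e.\ when $e=e_0$; in that single case $\dens_G(T_e)=\dens_G(T)=\dens^*$ and there is nothing more to prove.

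So assume $e\neq e_0$. Then $T'$ still contains $e_0$, hence is a nonempty rooted subtree and $\dens(T')=p(T')/c(T')$ is well defined. Using $p(T)=\dens^* c(T)$ one computes
\[
  \dens(T') \;=\; \frac{\dens^* c(T) - \bigl(p(T_e)-p(u_e)\bigr)}{c(T)-c(T_e)}.
\]
If $\dens_G(T_e)<\dens^*$, then $p(T_e)-p(u_e) < \dens^* c(T_e)$, so the numerator exceeds $\dens^*\bigl(c(T)-c(T_e)\bigr)$ and $\dens(T')>\dens^*$, contradicting maximality of $\dens^*$; hence $\dens_G(T_e)\geq \dens^* = \dens(T)$. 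If in addition $\dens_G(T_e)=\dens^*$, the same identity gives $\dens(T')=\dens^*$, so $T'\in\M$; but $T'\subsetneq T$ (it lacks the edge $e$), contradicting that $T$ is inclusion-wise minimal in $\M$. Therefore $\dens_G(T_e) > \dens(T)$ whenever $T\neq T_e$.

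The only real care points — and the closest thing to an obstacle — are bookkeeping: ensuring $T'$ is genuinely a rooted subtree (which is why we take the component of $T-e$ containing $r$ rather than deleting vertices ad hoc), correctly handling the vertex $u_e$ shared by $T_e$ and $T'$ in the identity for $p(T')$, and isolating the degenerate case $e=e_0$ (where $T_e=T$) so that all denominators remain positive.
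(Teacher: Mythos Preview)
Your proof is correct and follows essentially the same approach as the paper: both arguments split $T$ into the branch $T_e$ and its rooted complement (the paper calls it $F$, you call it $T'$) and exploit that the complement cannot have density at least $\dens^*$. The paper packages this slightly more directly—using at once that any proper rooted subtree of a min-max subtree has strictly smaller density to get $\dens(F)<\dens(T)$ and then computing $\dens_G(T_e)>\dens(T)$ in one line—whereas you argue by two separate contradictions (against maximality, then against minimality), but the mathematical content is identical.
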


\begin{proof}
    Let $T=(V_T,E_T)$ and $T_e=(V_{T_e},E_{T_e})$. 
    By \Cref{lem:alpern_and_lidbetter}, $T$ has exactly one edge that is adjacent to $r$. Hence, if $r\in e$, we obtain $T=T_e$ and the statement is trivial. 
    Otherwise, consider the subgraph $F=(V_F,E_F)$ of $T$ that is induced by the edges $E_F \coloneqq E_T\setminus E_{T_e}$. Then $\emptyset\neq F\subset T$ and $r\in V_F$.
    Since $T$ is a min-max subtree of $G$, we have $\dens(F)<\dens(T)$.
    With $p(r)=0$, we obtain
    \begin{align*}
        \dens_G(T_e)=\frac{p(T)-p(F)}{c(T)-c(F)}
        = \frac{\dens(T)c(T)-\dens(F)c(F)}{c(T)-c(F)}
        >\frac{\dens(T)\bigl(c(T)-c(F)\bigr)}{c(T)-c(F)}
        =\dens(T),
    \end{align*}
    concluding the proof.
\end{proof}

The next lemma shows an important property about the order in which the density-greedy algorithm adds edges to the solution. In particular, it implies that all edges of the extension~$T^+$ of a min-max tree computed by the density-greedy algorithm are added to the solution~$\pi$ consecutively.

\begin{restatable}{lemma}{lemcompleteminmaxtree}
    \label{lem:complete_min-max_tree}
    Let $T^*$ be a min-max subtree of $\contract{G}{\overbar{T}}$ for some $\overbar{T}\in \T$ as computed in an iteration of the density-greedy algorithm for trees.
    Then, until all edges of the extension $T^+$ of $T^*$ are added to $\pi$, the extended subgraphs of computed min-max subtrees are subtrees of $T^+$ and have a larger density than $T^+$.
\end{restatable}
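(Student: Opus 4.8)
My plan is to prove the statement by induction over the iterations of the density-greedy algorithm, starting with the iteration that computes $T^*$. Write $P \coloneqq T^+$, let $\rho \coloneqq r_P$ be the (unique) vertex of $\overbar{T}$ at which $P$ attaches, and set $d^* \coloneqq \dens_{\contract{G}{\overbar{T}}}(T^*)$. A preliminary observation to record is that, since a min-max subtree has only one edge at the root (\Cref{lem:alpern_and_lidbetter}), its extension meets the contracted part in a single vertex, so its density in the contracted graph coincides with its density $\dens_G(\cdot)$ over $G$; in particular $d^* = \dens_G(P)$, and this correspondence lets us pass freely between the two notions below. The invariant I would maintain is: after the iteration computing $T^*$ together with some number of further iterations, the edges added so far (since, and including, that iteration) form a rooted subtree $\hat{F}$ of $P$ with $\rho \in V_{\hat{F}}$, and no edge outside $P$ has been added. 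The base case holds because the iteration computing $T^*$ appends exactly the unique edge of $P$ incident to $\rho$.

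For the inductive step, suppose the invariant holds with $\hat{F} \subsetneq P$, and let $\bar{G}$ denote the current contracted graph (obtained from $\contract{G}{\overbar{T}}$ by additionally contracting $\hat{F}$). I would prove two claims. \emph{(1) The maximum density of $\bar{G}$ exceeds $d^*$.} The components of $P \setminus \hat{F}$ are precisely the branches $P_f$ of $P$ rooted at the edges $f \in E_P \setminus E_{\hat{F}}$ that have an endpoint in $V_{\hat{F}}$; at least one such component $C$ exists since $\hat{F} \subsetneq P$, and $C$ corresponds to a branch of $T^*$ rooted at an edge other than the root edge of $T^*$, so $\dens_G(C) > d^*$ by \Cref{lem:higher_density_in_induced_subtrees}. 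As $C$ is edge-disjoint from the contracted part and attaches to it in a single vertex, $C$ is a rooted subtree of $\bar{G}$ with $\dens_{\bar{G}}(C) = \dens_G(C) > d^*$. \emph{(2) Every min-max subtree $T^{**}$ of $\bar{G}$ has its extension contained in $P$.} The key auxiliary fact is that any subtree $Z$ of $G$ that is edge-disjoint from $P$ and attaches to a vertex of $P$ satisfies $\dens_G(Z) \le d^*$ — otherwise $P \cup Z$ would be a rooted subtree of $\contract{G}{\overbar{T}}$ of density strictly above $d^*$ (a short computation from $\dens_G(P) = d^*$ and $\dens_G(Z) > d^*$), contradicting maximality of $d^*$ there. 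Now let $W$ be the extension of $T^{**}$; as $T^{**}$ has a single edge at the root, $W$ is connected and meets the contracted part in a single vertex $w$. If $w$ lay in $\overbar{T}$, then $W$ would already be a rooted subtree of $\contract{G}{\overbar{T}}$ with $\dens_{\contract{G}{\overbar{T}}}(W) = \dens_{\bar{G}}(T^{**}) > d^*$ (the inequality by Claim (1)), contradicting maximality; hence $w \in V_{\hat{F}}$ (necessarily $w \neq \rho$, since the only $P$-edge at $\rho$ already lies in $\hat{F}$). Split $W$ into $Y \coloneqq W \cap P$ (a connected subtree through $w$) and the remaining external branches $Z_1, \dots, Z_q$ hanging off vertices of $Y$; each $Z_j$ is edge-disjoint from $P$, so $\dens_G(Z_j) \le d^* < \dens_{\bar{G}}(T^{**})$. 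Since $\dens_{\bar{G}}(T^{**})$ is the cost-weighted average of $\dens_G(Y)$ and the $\dens_G(Z_j)$, a nonempty $Z_j$ would force $\dens_G(Y) > \dens_{\bar{G}}(T^{**})$; but $Y$ is itself a rooted subtree of $\bar{G}$, contradicting maximality. Hence all $Z_j$ are empty and $W = Y \subseteq P$.

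Claims (1) and (2) close the induction: whichever min-max subtree the algorithm selects next, its unique root edge lies in $E_P$ and is incident to a vertex of $\hat{F}$, so the set of added edges stays a rooted subtree of $P$ containing $\rho$, and by Claim (1) and the density correspondence the selected subtree has density strictly larger than $\dens_G(P) = \dens_G(T^+)$. Iterating, $\hat{F}$ grows edge by edge until $\hat{F} = P$, i.e.\ until all edges of $T^+$ have been added to $\pi$, which is the assertion. I expect Claim (2) — ruling out that a min-max subtree of the contracted graph escapes $T^+$ — to be the main obstacle. The delicate points there are that one must invoke the single-edge-at-the-root property first, to ensure the relevant extension is connected and meets the contracted part in one vertex before the weighted-average argument applies (extensions of contracted subgraphs are in general disconnected), and that one has to keep the translation between densities in $\bar{G}$ and in $G$ consistent throughout (prizes of contracted vertices vanish, and attachment vertices must be handled carefully).
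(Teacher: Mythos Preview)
Your proposal is correct and follows the same inductive strategy as the paper: show that the maximum density in the current contracted graph exceeds $d^*$, then show the next min-max subtree cannot escape $T^+$ because any branch leaving $T^+$ has density at most $d^*$. The executions differ only in minor details --- for the density bound you invoke \Cref{lem:higher_density_in_induced_subtrees} on a single remaining branch of $T^*$, whereas the paper uses the whole complement forest $T^+\setminus\hat F$; for containment you replace the paper's second application of \Cref{lem:higher_density_in_induced_subtrees} (to the new min-max subtree) by an equivalent weighted-average decomposition, and you are a bit more explicit than the paper in ruling out the possibility that the next subtree attaches directly to~$\bar T$ rather than to~$\hat F$.
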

\begin{proof}
    Let $\overbar{T},T^*$, and $T^+=(V_{T^+},E_{T^+})$ be as stated in the lemma.
    We set $\ell\coloneqq |E_{T^+}|$ and denote by~$e_1,\dots,e_\ell$ the next $\ell$ edges as they are added to the solution $\pi$ after $T^*$ was computed in Step \ref{it:compute-tree}.
    For each $i\in \{1,\dots,\ell\}$ we denote by $T_i^*$ the min-max tree that was computed and fixed in Step \ref{it:compute-tree} right before $e_i$ was added to $\pi$, and denote by $T_i^+$ the extension of $T_i^*$.
    In particular, this yields $T^*=T_1^*$ and $T^+=T_1^+$. To prove the statement, we need to show that $T_i^+\subseteq T^+$ and $\dens_G(T^+)<\dens_G(T_i^+)$ for all~$i\in \{2,\dots,\ell\}$.
    
    First, we prove the following claim: If for some fixed $i\in\{2,\dots,\ell\}$ we have $T_j^+\subseteq T^+$ for all $1\leq j<i$, then 
    \begin{align} \label{eq:increasing-density-min-max-subtrees}
        \dens_G(T^+)<\dens_G(T_i^+).
    \end{align}
    To this end, let $i\in\{2,\dots,\ell\}$ be such that $T_j^+\subseteq T^+$ for all $1\leq j<i$.
    We define the subgraph~$S_i\subseteq G$ to be the subgraph induced by the edges $\{e_1,\dots,e_{i-1}\}$.
    Since $E_{S_i}\cap E_{\overbar{T}}=\emptyset$, all edges of~$S_i$ have a corresponding edge in the contracted graph $\contract{G}{\overbar{T}}$.
    Thus, we obtain $\dens_{\contract{G}{\overbar{T}}}(\contract{S_i}{\overbar{T}})=\dens_G(S_i)$.
    Furthermore, the assumption that $T_j^+\subseteq T^+$ for all $1\leq j<i$ implies that the edges $\{e_1,\dots,e_{i-1}\}$ are contained in $T^+$ and, therefore, $\contract{S_i}{\overbar{T}}\subsetneq T^*$.
    Since $T^*$ is a min-max subtree of $\contract{G}{\overbar{T}}$, it follows that $\dens_{\contract{G}{\overbar{T}}}(\contract{S_i}{\overbar{T}})<\dens_{\contract{G}{\overbar{T}}}(T^*)$.
    Next, consider the forest $S_{-i}$ of $G$ that is induced by the edge set $E_{T^+}\setminus S_i$.
    Then, $\dens_{G}(S_i)<\dens_{G}(T^+)$ and~$T^+=S_i \cup S_{-i}$ implies $\dens_G(S_{-i})>\dens_G(T^+)$.
    As $S_{-i}$ contains no edges of $S_i$ or $\overbar{T}$, all edges of $S_{-i}$ have a corresponding edge in the contracted graph $\contract{G}{(S_i\cup \overbar{T})}$.
    With $T_i^*$ being a min-max subtree of~$\contract{G}{(S_i\cup \overbar{T})}$, we obtain
    \begin{align*}
        \dens_G(T^+)<\dens_G(S_{-i})=\dens_{\contract{G}{(S_i\cup \overbar{T})}}\bigl(\contract{S_{-i}}{(S_i\cup \overbar{T})}\bigr)\leq \dens_{\contract{G}{(S_i\cup \overbar{T})}}(T_i^*)=\dens_G(T_i^+).
    \end{align*}
    Here, the last equality follows from $G$ being a tree and the fact that the edge set of $T_i^+$ is disjoint from the edge sets of $\overbar{T}$ and $S_i$.
    This concludes the proof of $(\ref{eq:increasing-density-min-max-subtrees})$.

    It remains to show that $T_i^+\subseteq T^+$ for all $i\in\{1,\dots,\ell\}$.
    Assume for contradiction that there exists some~$T_i^+=(V_{T_i^+},E_{T_i^+})$ with $T_i^+\nsubseteq T^+$. Out of these trees, let $T_i^+$ be the one with the smallest index.
    Then $T_j^+\subseteq T^+$ for all $j<i$ and together with Claim \eqref{eq:increasing-density-min-max-subtrees} we obtain $\dens(T^+)<\dens(T_i^+)$.
    Observe that~$T_i^+\nsubseteq T^+$ implies that there exists an edge $e\in E_{T_i^+}\setminus E_{T^+}$ that is incident to a vertex of $T^+$.
    We write $T_{i,e}^+$ for the branch of $T_i^+$ rooted in $e$.
    Since $T^+$ is the extension of the min-max subtree $T^*$, the density of $T_{i,e}^+$ over $G$ cannot be larger than the density of $T^+$ over $G$, i.e., $\dens_G(T_{i,e}^+)\leq \dens_G(T^+)$, as otherwise adding the corresponding edges of $T_{i,e}^+$ to $T^*$ would give a subtree of $\contract{G}{\overbar{T}}$ with strictly larger density than $T^*$. This would, however, contradict to $T^*$ being a min-max subtree of $\contract{G}{\overbar{T}}$.
    By \Cref{lem:higher_density_in_induced_subtrees}, we obtain
    $\dens_G(T_i^+)\leq \dens_G(T_{i,e}^+)\leq\dens_G(T^+)$.
    This gives the desired contradiction to $\dens_G(T^+)<\dens_G(T_i^+)$ and, thus, concludes the proof that $T_i^+\subseteq T^+$ for all $i\in\{1,\dots,\ell\}$.
    Together with Claim \eqref{eq:increasing-density-min-max-subtrees}, the lemma follows.
\end{proof}

Next, we give the first lower bound for the total prize that is collected by the density-greedy algorithm for trees for some specific cost budgets.
To do so, we denote by $T_1$ the min-max subtree of $G$ that is computed and fixed in the first iteration of \Cref{alg:dg-tree}.
Recall that $\ALG(B)$ denotes the tree induced by the maximum prefix of the solution computed by the density-greedy algorithm obeying the cost budget~$B$.

\begin{restatable}{lemma}{lemoptimalonfirsttree}
    \label{lem:optimal_on_first_tree}
    For every cost budget $B\leq c(T_1)$, we have
    $p\bigl(\ALG(B+\chi)\bigr)\geq \dens(T_1)B$.
    Furthermore, if $B=c(T_1)$, we have
            $p\bigl(\ALG(B)\bigr)=\dens(T_1)B$.
\end{restatable}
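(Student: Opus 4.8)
The plan is to handle the two assertions separately. \textbf{The equality at $B=c(T_1)$} is the quick one: applying \Cref{lem:complete_min-max_tree} to the very first iteration (where $\overbar{T}=(\{r\},\emptyset)$ and the computed min-max subtree is $T_1$, so $T^{+}=T_1$), every min-max subtree computed before $T_1$ is completed has its extension inside $T_1$; hence the first $\abs{E_{T_1}}$ edges of $\pi$ are exactly the edges of $T_1$, with total cost $c(T_1)$. Since edge costs are positive, this forces $\ALG(c(T_1))=T_1$ and thus $p(\ALG(c(T_1)))=p(T_1)=\dens(T_1)\,c(T_1)$.

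\textbf{For the inequality}, I would first observe that $\ALG(\cdot)$ is monotone in the budget, so it suffices to prove the stronger statement with $\chi$ replaced by $\chi_{T_1}:=\max\{\ell(v):v\in V_{T_1}\}\le\chi$, and I would prove this by induction on $\abs{E_{T_1}}$. The base case $\abs{E_{T_1}}=1$ is immediate. For the inductive step, let $f=ru$ be the unique edge of $T_1$ at $r$. The algorithm appends $f$ first and then runs, in effect, the density-greedy algorithm on $\contract{G}{\{f\}}$, so by \Cref{lem:complete_min-max_tree} the remaining edges of $T_1$ are appended in consecutive blocks $\mathcal{P}_1,\dots,\mathcal{P}_k$, where $\mathcal{P}_i$ is the extension of the min-max subtree $T^{*}_{(i)}$ of the contracted graph $\overbar{G}_i$ reached after $\{f\}\cup\mathcal{P}_1\cup\dots\cup\mathcal{P}_{i-1}$ has been built; moreover $\mathcal{P}_i\subseteq T_1$ and $d_i:=\dens_G(\mathcal{P}_i)>\dens(T_1)$. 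A short geometric argument shows that, since every vertex of $T^{*}_{(i)}$ lies below $u$ in $T_1$, the root-eccentricity of $\overbar{G}_i$ restricted to $T^{*}_{(i)}$, call it $\chi_i'$, satisfies $\chi_i'\le\chi_{T_1}-c(f)$.

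The core is a potential argument. With $P_j$ the tree induced by the first $j$ edges of $\pi$, set $\Phi_j:=\dens(T_1)\,c(P_j)-p(P_j)$ and let $q_i$ be the number of edges appended before block $\mathcal{P}_i$, so $q_1=1$ and $q_{k+1}=\abs{E_{T_1}}$. Since appending block $\mathcal{P}_i$ adds prize exactly $d_i\,c(\mathcal{P}_i)$, one gets $\Phi_{q_{i+1}}-\Phi_{q_i}=(\dens(T_1)-d_i)\,c(\mathcal{P}_i)\le0$ for $i\ge1$, while $\Phi_{q_1}=\dens(T_1)\,c(f)-p(u)\le\dens(T_1)\,c(f)$ because $(\{r,u\},\{f\})$ has density at most $\dens(T_1)$. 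Hence $\Phi_{q_i}\le\dens(T_1)\,c(f)\le\dens(T_1)(\chi_{T_1}-\chi_i')$ for all $i\ge1$. Given $B\le c(T_1)$: if $c(T_1)\le B+\chi_{T_1}$ then $\ALG(B+\chi_{T_1})\supseteq T_1$ and we are done; otherwise let $i^{*}$ be the largest index with $c(P_{q_{i^{*}}})\le B+\chi_{T_1}$ and set $\beta:=B+\chi_{T_1}-c(P_{q_{i^{*}}})\in[0,c(\mathcal{P}_{i^{*}}))$. If $\beta<\chi_{i^{*}}'$, the inequality $c(P_{q_{i^{*}}})+\chi_{i^{*}}'>B+\chi_{T_1}$ together with $p(\ALG(B+\chi_{T_1}))\ge p(P_{q_{i^{*}}})=\dens(T_1)c(P_{q_{i^{*}}})-\Phi_{q_{i^{*}}}$ and the bound on $\Phi_{q_{i^{*}}}$ already gives $p(\ALG(B+\chi_{T_1}))>\dens(T_1)\,B$. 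If $\beta\ge\chi_{i^{*}}'$, the induction hypothesis applied to $\overbar{G}_{i^{*}}$ (whose first min-max subtree is $T^{*}_{(i^{*})}$, with fewer edges than $T_1$) with budget $(\beta-\chi_{i^{*}}')+\chi_{i^{*}}'$ shows that beyond $P_{q_{i^{*}}}$ the algorithm collects at least $d_{i^{*}}(\beta-\chi_{i^{*}}')\ge\dens(T_1)(\beta-\chi_{i^{*}}')$ extra prize, which with the bound on $\Phi_{q_{i^{*}}}$ yields $p(\ALG(B+\chi_{T_1}))\ge\dens(T_1)\,B$. Since $\chi_{T_1}\le\chi$, the stated inequality follows.

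I expect the main obstacle to be making the induction close. A naive recursion that peels off only $f$ and passes to $\contract{T_1}{\{f\}}$ fails, because $\contract{T_1}{\{f\}}$ need not be a min-max subtree of $\contract{G}{\{f\}}$ and because the unrestricted root-eccentricity of a contracted graph can stay as large as $\chi$. Both difficulties are resolved by (i) strengthening the claim to use the eccentricity restricted to $T_1$, (ii) recursing on the individual blocks $\mathcal{P}_i$, which genuinely are min-max subtrees of contracted graphs and whose restricted eccentricity is smaller by $c(f)$ since they lie below $u$, and (iii) using $\Phi$, which is monotone only across block boundaries (never within a block), to account for everything built before block $\mathcal{P}_{i^{*}}$.
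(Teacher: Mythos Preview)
Your argument is correct, but the paper proves the inequality directly without induction. After handling $B\ge c(T_1)-\chi$ via the same equality you establish, it considers $B<c(T_1)-\chi$, lets $e^{*}$ be the first edge of $\pi$ not fitting into budget $B+\chi$, and sets $F:=\ALG(B+\chi)\setminus P_{e^{*}}$, where $P_{e^{*}}$ is the unique root-path in $G$ ending with $e^{*}$. The forest $F$ is nonempty (otherwise $\ALG(B+\chi)\cup\{e^{*}\}$ would be a root-path of length at most $\chi$ but cost exceeding $B+\chi$, forcing $B<0$), and by \Cref{lem:complete_min-max_tree} its edges partition into complete extensions $T_e^{+}$ of computed min-max subtrees, each of density strictly larger than $\dens(T_1)$; hence $\dens_G(F)\ge\dens(T_1)$. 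Since $c(F)=c(\ALG(B+\chi))+c(e^{*})-c(P_{e^{*}})>(B+\chi)-\chi=B$, one obtains $p(\ALG(B+\chi))\ge\dens_G(F)\,c(F)>\dens(T_1)\,B$ in one line. Your inductive route with the restricted eccentricity $\chi_{T_1}$ and the potential $\Phi$ is more elaborate but makes explicit that the entire ``deficit'' $\Phi$ is incurred solely when the first edge $f$ is laid down and never increases across block boundaries---a viewpoint that might help in generalizations, but is unnecessary here. (One small quibble: your bound $\Phi_{q_1}\le\dens(T_1)\,c(f)$ follows simply from $p(u)\ge 0$; the fact that $(\{r,u\},\{f\})$ has density at most $\dens(T_1)$ actually gives the opposite inequality $\Phi_{q_1}\ge 0$.)
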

\begin{proof}
    By \Cref{lem:complete_min-max_tree}, the first $|T_1|$ edges of $\pi$ are exactly the edges of $T_1$. Since $T_1$ is a min-max subtree of $G$, we have $\dens(T_1)=\dens^*$ and thus, we obtain $p\bigl(\ALG(c(T_1)\bigr)=p(T_1)=\dens(T_1)c(T_1)=\dens^*c(T_1)$. This proves the second statement of the lemma and also the first statement for all cost budgets $B$ with $c(T_1)-\chi\leq B\leq c(T_1)$.

    It remains to consider the case where $B<c(T_1)-\chi$. Then $\ALG(B+\chi)$ is a proper subtree of $T_1$.
    We denote by $e^*$ the first edge in $\pi$ that exceeds the cost budget $B+\chi$ and is, thus, not contained in~$\ALG(B+\chi)$.
    Let $P_{e^*}$ be the unique path in $G$ that starts in the root vertex $r$ and ends with $e^*$.
    First, assume the edge set of $\ALG(B+\chi)\setminus P_{e^*}$ is empty. Then, $\ALG(B+\chi)\cup {e^*}$ is a simple path in $G$ that starts in $r$.
    Hence, its length can be no longer than $\chi$ which yields $B +\chi <\chi$; a contradiction to $B \geq 0$.

    Thus, we may assume that the edge set of the forest $F=(V_F,E_F):=\ALG(B+\chi)\setminus P_{e^*}$ is not empty.
    We want to show that
    \begin{align}\label{eq:subtrees-density}
        \dens_G(F) \geq \dens(T_1).
    \end{align}
    To that end, let $e\in E_F$.
    We denote by $T^*_e$ the min-max subtree that was fixed in the iteration of the density-greedy algorithm in which $e$ was added to the solution $\pi$ and denote by $T_e^+=(V^+_e,E^+_e)$ its extension.
    With \Cref{lem:complete_min-max_tree}, all edges of $T^+_e$ are added to $\pi$ consecutively.
    Since $e$ does not lie on the path~$P_{e^*}$ and $T^*_e$ has only one branch at the root, $e^*\notin E^+_e$.
    Together with $e^*$ being the first edge in $\pi$ that exceeds the cost budget $B+\chi$, it follows that all edges of $T^+_e$ are contained in $\ALG(B+\chi)$.
    Furthermore, \Cref{lem:complete_min-max_tree} implies that $\dens_G(T^+_e)>\dens_G(T_1)$ for all $e\in E_F$.
    Also, by \Cref{lem:complete_min-max_tree} we have either $T^+_e$ and $T^+_{e'}$ are edge-disjoint or one is a subtree of the other for $e, e' \in E_F$. So, the inclusion-wise maximal trees in~$\{ T^+_e : e \in F\}$ are edge-disjoint and partition the edges of $F$.
    Therefore, we obtain Claim \eqref{eq:subtrees-density} using that $\dens_G(T^+_e)>\dens_G(T_1)$ for each of these trees.
     
    We conclude that
    \begin{multline*}
        p\bigl(\ALG(B+\chi)\bigr)
        \geq p(F)
        =\dens_G(F)c(F)
        \geq \dens(T_1) c(F) \\
        =\dens(T_1)\bigl[c\bigl(\ALG(B+\chi)\bigr)+c(e^*)-c(P_{e^*})\bigr]
        >\dens(T_1)B,
    \end{multline*}
    where we used $c\bigl(\ALG(B+\chi)\bigr) +c(e^*) > B+\chi$ and $c(P_{e^*}) \leq \chi$ for the last inequality.
\end{proof}

\Cref{lem:complete_min-max_tree} states that the first $|T_1|$ edges of $\pi$ are the edges of $T_1$.
Let~$T_2^*$ be the $(|T_1|+1)$-st min-max subtree that is computed by the density-greedy algorithm. In particular, $T_2^*$ is the first min-max subtree whose extension $T_2$ contains no edge of $T_1$.
This inductively defines all pairwise disjoint min-max subtrees $T_1^*,\dots,T_k^*$ and their extensions $T_1,\dots,T_k$ such that~$V^*\subseteq \bigcup_{i=1}^kV_{T_i}$.
We define $\overbar{T}_i \coloneqq \bigcup_{j=1}^i T_j$ for $1 \leq i \leq k $ and set $\overbar{T}_0 \coloneqq (\{r\},\emptyset)$.
Note that, since $T_i^*$ is a min-max subtree of $\contract{G}{\overbar{T}_{i-1}}$, we have for the densities over $G$ that~$\dens_G(T_i \cup T_{i+1})
= \dens_{\contract{G}{\overbar{T}_{i-1}}} \bigl((T_i^* \cup T_{i+1}^*)\bigr)\leq 
\dens_{\contract{G}{\overbar{T}_{i-1}}} (T_i^*) = 
\dens_G(T_i)$.
In particular, this yields $\dens_G(T_{i+1}) \leq \dens_G(T_i)$ for all $0<i<k$.
By applying Lemma \ref{lem:optimal_on_first_tree} iteratively on $\contract{G}{\overbar{T}_{i}}$ for all~$i\in\{0,\dots,k-2\}$, we obtain the following Corollary.

\begin{corollary}
    \label{cor:boundalg}
    Let $B\in \R_{\geq 0}$ be a cost budget such that $\sum_{i=1}^{j-1} c(T_i) \leq B \leq \sum_{i=1}^{j} c(T_i)$ for some $1\leq j\leq k$.
    Then, we have
    \begin{align*}
        p\bigl(\ALG(B+ \chi)\bigr) \geq \sum_{i=1}^{j-1} \dens(T_i) c(T_i) + \dens(T_j) \left( B- \sum_{i=1}^{j-1} c(T_i) \right).
    \end{align*}
\end{corollary}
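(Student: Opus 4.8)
The plan is to localize the bound to the contracted tree $\contract{G}{\overbar{T}_{j-1}}$ and invoke \Cref{lem:optimal_on_first_tree} there, after first observing that the density-greedy algorithm, once it has appended all edges of $\overbar{T}_{j-1}$, proceeds exactly like a fresh run of the density-greedy algorithm on $\contract{G}{\overbar{T}_{j-1}}$ (which is again a tree, being the contraction of a subtree of a tree).

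First I would pin down the block structure of~$\pi$. By \Cref{lem:complete_min-max_tree} and the inductive definition of the min-max subtrees $T_1^*,\dots,T_k^*$ together with their extensions $T_1,\dots,T_k$, the edges of~$\pi$ appear in the order $E_{T_1}, E_{T_2},\dots,E_{T_k}$, with each block appended consecutively; moreover, after the last edge of the $i$-th block is appended, the graph maintained by \Cref{alg:dg-tree} equals $\contract{G}{\overbar{T}_i}$ and the subsequent behaviour of the algorithm is verbatim that of the density-greedy algorithm run on $\contract{G}{\overbar{T}_i}$, which I denote by $\ALG_i$. Writing $C_i \coloneqq \sum_{m=1}^i c(T_m)$ (so $C_0 = 0$), this yields two facts: the prefix of~$\pi$ of cost at most $C_i$ is precisely $\overbar{T}_i$ (the next edge is the first edge of $T_{i+1}$ and has positive cost), and for every budget $B' \geq C_i$ we have $p\bigl(\ALG(B')\bigr) = p(\overbar{T}_i) + p_{\contract{G}{\overbar{T}_i}}\bigl(\ALG_i(B'-C_i)\bigr)$.

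Next I would evaluate the prizes of the blocks. Since the first min-max subtree of $\contract{G}{\overbar{T}_{i-1}}$ is $T_i^*$, with $c(T_i^*) = c(T_i)$ and maximum density $\dens(T_i)$, the equality part of \Cref{lem:optimal_on_first_tree} applied in $\contract{G}{\overbar{T}_{i-1}}$ gives $p_{\contract{G}{\overbar{T}_{i-1}}}(T_i^*) = \dens(T_i)\,c(T_i)$; since this is exactly the prize gained when passing from $\overbar{T}_{i-1}$ to $\overbar{T}_i$ (the vertex of $T_i$ closest to~$r$ already lies in $\overbar{T}_{i-1}$, so nothing is double-counted), summing over $i = 1,\dots,j-1$ gives $p(\overbar{T}_{j-1}) = \sum_{i=1}^{j-1}\dens(T_i)\,c(T_i)$. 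Applying the identity of the previous paragraph with $i = j-1$ and $B' = B + \chi \geq C_{j-1}$ then yields
\[
p\bigl(\ALG(B+\chi)\bigr) = \sum_{i=1}^{j-1}\dens(T_i)\,c(T_i) + p_{\contract{G}{\overbar{T}_{j-1}}}\bigl(\ALG_{j-1}(B - C_{j-1} + \chi)\bigr),
\]
and since $0 \le B - C_{j-1} \le c(T_j)$, the first part of \Cref{lem:optimal_on_first_tree} applied in $\contract{G}{\overbar{T}_{j-1}}$ --- whose root-eccentricity $\chi'$ satisfies $\chi' \le \chi$, because contracting edges into the root can only shorten root-paths --- together with the monotonicity of $B' \mapsto p_{\contract{G}{\overbar{T}_{j-1}}}(\ALG_{j-1}(B'))$ in the budget, bounds the second summand below by $\dens(T_j)(B - C_{j-1})$. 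Recalling $C_{j-1} = \sum_{i=1}^{j-1} c(T_i)$ completes the proof.

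I expect the main obstacle to be the localization step: one has to check carefully that \Cref{alg:dg-tree} really passes through the state $\overbar{T} = \overbar{T}_i$ and that from there on it is literally a run of the density-greedy algorithm on $\contract{G}{\overbar{T}_i}$, so that \Cref{lem:optimal_on_first_tree} (stated for a single standalone tree) can be applied, and that the prize and cost accounting stays consistent under the contractions --- in particular that the prizes of the boundary vertices between consecutive blocks are counted exactly once. The remainder is a routine assembly of \Cref{lem:complete_min-max_tree} and \Cref{lem:optimal_on_first_tree}.
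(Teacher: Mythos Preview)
Your proposal is correct and follows exactly the approach the paper indicates: the paper's entire argument for this corollary is the single sentence ``By applying \Cref{lem:optimal_on_first_tree} iteratively on $\contract{G}{\overbar{T}_{i}}$ for all~$i\in\{0,\dots,k-2\}$, we obtain the following Corollary,'' and you have spelled this iteration out in detail, including the observation (left implicit in the paper) that the root-eccentricity of $\contract{G}{\overbar{T}_{j-1}}$ is at most~$\chi$.
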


To give an upper bound on the prize collected by the optimum solution, we show the following lemma.

\begin{restatable}{lemma}{lemboundopt}
    \label{lem:boundopt}
    Let $T\in \T$ and $j\in \{1,\dots,k\}$. Then, we have
    \begin{align*}
        p(T) \leq \sum_{i=1}^{j-1} \dens(T_i) c(T_i) + \dens(T_j) \left( c(T)-\sum_{i=1}^{j-1} c(T_i) \right).
    \end{align*}
\end{restatable}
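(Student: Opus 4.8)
The plan is to prove the lemma by induction on~$j$, peeling off the first greedy tree~$T_1$ at each step and applying the induction hypothesis to the contracted graph~$\contract{G}{T_1}$. For this I use that $\contract{G}{T_1}$ is again a tree (contracting a connected rooted subtree of a tree can create neither a cycle nor a self-loop nor parallel edges) and that its greedy trees are exactly $\contract{T_2}{T_1},\dots,\contract{T_k}{T_1}$, with unchanged densities $\dens(T_2),\dots,\dens(T_k)$ and costs $c(T_2),\dots,c(T_k)$, as in the discussion preceding \Cref{cor:boundalg}.

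For the base case~$j=1$ the claimed bound is $p(T)\le\dens(T_1)c(T)$, which holds because $r_T=r$ and $p(r)=0$ give $\dens_G(T)=p(T)/c(T)$, and this is at most $\dens^*=\dens(T_1)$ by maximality of~$\dens^*$. For the inductive step, I fix $T\in\T$ and set $T^+:=T\cup T_1\in\T$, so that $p(T^+)=p(T)+p(T_1)-p(V_T\cap V_{T_1})$ and $c(T^+)=c(T)+c(T_1)-c(E_T\cap E_{T_1})$. The first ingredient bounds the overlap $p(V_T\cap V_{T_1})$, which equals $p(T\cap T_1)$ for the intersection $T\cap T_1$ (with vertex set $V_T\cap V_{T_1}$ and edge set $E_T\cap E_{T_1}$); in a tree this intersection is again a rooted subtree, so $p(T\cap T_1)\le\dens^*c(T\cap T_1)=\dens(T_1)c(E_T\cap E_{T_1})$. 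Substituting this and $p(T_1)=\dens(T_1)c(T_1)$ into the identity for $p(T^+)$ and rearranging yields
\[
  p(T)\ \le\ p(T^+)-\dens(T_1)\bigl(c(T^+)-c(T)\bigr).
\]
The second ingredient is the induction hypothesis in $\contract{G}{T_1}$ applied to the rooted subtree $\contract{T^+}{T_1}$ with index~$j-1$; using $p_{\contract{G}{T_1}}(\contract{T^+}{T_1})=p(T^+)-p(T_1)$, $c_{\contract{G}{T_1}}(\contract{T^+}{T_1})=c(T^+)-c(T_1)$, and $p(T_1)=\dens(T_1)c(T_1)$, this gives
\[
  p(T^+)\ \le\ \sum_{i=1}^{j-1}\dens(T_i)c(T_i)+\dens(T_j)\Bigl(c(T^+)-\sum_{i=1}^{j-1}c(T_i)\Bigr).
\]
Adding the two displays and using $\dens(T_1)=\dens^*\ge\dens(T_j)$ together with $c(T^+)\ge c(T)$ makes the $c(T^+)$-terms cancel, leaving exactly the asserted inequality.

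The routine part is the bookkeeping of prizes and costs under the union with~$T_1$ and the contraction by~$T_1$, which follows from the one-to-one correspondence between edges of~$G$ and edges of~$T_1$ together with edges of~$\contract{G}{T_1}$. I expect the key point to be the overlap bound of the first ingredient --- that $T\cap T_1$ may be charged to the \emph{global} maximum density~$\dens^*$ without any loss. This is also why one should peel off a single tree rather than contract all of~$\overbar{T}_{j-1}$ at once: the analogous overlap $T\cap\overbar{T}_{j-1}$ would then have to be charged to~$\dens^*$ as well, which is far too generous once the later greedy trees have much smaller density.
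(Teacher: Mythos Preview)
Your proof is correct. The inductive peeling argument works cleanly: the base case is immediate from the maximality of~$\dens^*$, the overlap bound $p(T\cap T_1)\le\dens^*\,c(T\cap T_1)$ is valid because in a tree the intersection of two rooted subtrees is again a rooted subtree, and the final combination of the two displays goes through since $(\dens(T_1)-\dens(T_j))(c(T^+)-c(T))\ge0$. One small point worth making explicit is that the induction is really over all tree instances simultaneously (or, equivalently, over~$j$ uniformly in the instance), since the hypothesis is applied to the different instance~$\contract{G}{T_1}$; this is implicit in your write-up but should be stated.

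The paper takes a direct, non-inductive route: it writes $p(T)=\sum_{i=1}^{j-1}p_{\contract{G}{\overbar T_{i-1}}}\!\bigl(\contract{(T\cap T_i)}{\overbar T_{i-1}}\bigr)+p_{\contract{G}{\overbar T_{j-1}}}(\contract{T}{\overbar T_{j-1}})$, bounds each summand by the density of the corresponding~$T_i$ using that $T_i^*$ is a min-max subtree of $\contract{G}{\overbar T_{i-1}}$, and then rebalances the telescoping cost terms via $\dens(T_j)\le\dens(T_i)$. In effect, the paper \emph{does} contract all of~$\overbar T_{j-1}$ at once, but avoids the over-charging you warn against by splitting $T\cap\overbar T_{j-1}$ into the layers $T\cap T_i$ and charging each layer to its own density~$\dens(T_i)$ rather than to~$\dens^*$. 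So the two proofs differ in structure---yours recurses while the paper slices---but the underlying accounting is the same: each piece of~$T$ inside~$T_i$ is paid for at rate~$\dens(T_i)$, and the remainder at rate~$\dens(T_j)$. Your approach has the virtue of isolating a single clean overlap bound per step; the paper's has the virtue of being a one-shot chain of inequalities with no appeal to a changed instance.
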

\begin{proof}
    By definition, $T_i^*=\contract{T_i}{\overbar{T}_{i-1}}$ is a min-max subtree of $\contract{G}{\overbar{T}_{i-1}}$ for all $i\in\{1,\dots,j\}$.
    Thus, we obtain the two lower bounds 
    \begin{align*}
        \dens_G(T_i)&=\dens_{\contract{G}{\overbar{T}_{i-1}}}(\contract{T_i}{\overbar{T}_{i-1}})\geq \dens_{\contract{G}{\overbar{T}_{i-1}}}\bigl(\contract{(T\cap T_i)}{\overbar{T}_{i-1}}\bigr) \quad \text{ and} \\
        \dens_G(T_i)&=\dens_{\contract{G}{\overbar{T}_{i-1}}}(\contract{T_i}{\overbar{T}_{i-1}})\geq \dens_{\contract{G}{\overbar{T}_{i-1}}} (\contract{T}{\overbar{T}_{i-1}})
    \end{align*}
    for all $i\in\{1,\dots,j\}$. Combining these two bounds, we obtain
    \begin{align}
        p(T) &= \sum_{i=1}^{j-1} p_{\contract{G}{\overbar{T}_{i-1}}}\bigl(\contract{(T \cap T_i)}{\overbar{T}_{i-1}}\bigr) + p_{\contract{G}{\overbar{T}_{j-1}}} \left(\contract{T}{\overbar{T}_{j-1}} \right) \notag \\
        &\leq \sum_{i=1}^{j-1} \dens(T_i) c_{\contract{G}{\overbar{T}_{i-1}}}\bigl(\contract{(T \cap T_i)}{\overbar{T}_{i-1}}\bigr) + \dens(T_j) c_{\contract{G}{\overbar{T}_{j-1}}}(\contract{T}{\overbar{T}_{j-1}}) \notag\\
        &= \sum_{i=1}^{j-1} \dens(T_i) c(T \cap T_i) + \dens(T_j) c\left(T \setminus \overbar{T}_{j-1}\right) \notag\\
        &= \sum_{i=1}^{j-1} \dens(T_i) c(T_i) + \sum_{i=1}^{j-1} \dens(T_i) \bigl[c(T \cap T_i)- c(T_i) \bigr] + \dens(T_j) c\bigl(T \setminus \overbar{T}_{j-1}\bigr) \notag\\
        &\leq \sum_{i=1}^{j-1} \dens(T_i) c(T_i) + \dens(T_j) \left[ \sum_{i=1}^{j-1} c(T \cap T_i) + c\bigl(T \setminus \overbar{T}_{j-1}\bigr) - \sum_{i=1}^{j-1} c(T_i) \right] \label{eq:boundopt}\\
        &= \sum_{i=1}^{j-1} \dens(T_i) c(T_i) + \dens(T_j) \left( c(T)-\sum_{i=1}^{j-1} c(T_i) \right) \notag
    \end{align}
    where for \eqref{eq:boundopt} we use $c(T \cap T_i)-c(T_i)\leq 0$ and the fact that $\dens(T_j)\leq \dens(T_i)$ for all $i\in\{1,...,j-1\}$.
\end{proof}

From Corollary \ref{cor:boundalg} and Lemma \ref{lem:boundopt} we obtain that the density-greedy algorithm is $(\chi,1)$-competitive. Together with a suitable lower bound, we obtain the following result.
\pagebreak[2]

\thmresultgreedytrees*
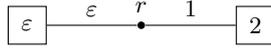
\begin{figure}
    \centering
        \centering
        \begin{tikzpicture}\small
         \tikzset{mininode/.style = {circle, fill, minimum size=2pt, inner sep=1pt},
            weightnode/.style = {rectangle, draw, minimum width=0.5cm,
                                minimum height = 0.5cm},
        }
            \useasboundingbox (-1.75,-1) rectangle (1.75,1);
            \node[mininode,label=above:{$r$}] at (0,0) (r) {};
            \node[weightnode] at (-1.5,0) (A) {$\varepsilon$};
            \node[weightnode] at (1.5,0) (B) {$2$};
    
            \draw (r) -- node[above]{$\varepsilon$} (A);
            \draw (r) -- node[above]{$1$} (B);
        \end{tikzpicture}
        
   \caption{\label{fig:lower-bound-trees-density} Lower bound used in the proof of \Cref{thm:result_greedy_trees}}
\end{figure}

\begin{proof}
Hermans et al.~\cite{HermansLM22} argue that maximum density subtrees  can be computed in polynomial time by the dynamic program given in \citep{AlpernL13} when the underlying graph is a tree. Hence, the density-greedy algorithm can be implemented in polynomial time on trees.

We first show that the density-greedy algorithm is $(\chi,1)$-competitive on trees, i.e., we show that $\smash{p\bigl(\ALG(B+\chi)\bigr) \geq p\bigl(\OPT(B)\bigr)}$ for every cost budget $B\in \R_{\geq0}$.
    If $B\geq \sum_{i=1}^{k} c(T_i)$, then we have $p\bigl(\ALG(B+\chi)\bigr) = p(G) \geq \OPT(B)$.
    Otherwise, let $j$ be maximal such that $\sum_{i=1}^{j-1} c(T_i) \leq c\bigl(\OPT(B)\bigr)\leq B$. Then, applying \Cref{lem:boundopt} and \Cref{cor:boundalg} yields
    \begin{align*}
        p\bigl(\OPT(B)\bigr) &\leq \sum_{i=1}^{j-1} \dens(T_i) c(T_i) + \dens(T_j) \left( c\bigl(\OPT(B)\bigr)-\sum_{i=1}^{j-1} c(T_i) \right)\\
        &\leq p\bigl(\ALG(c(\OPT(B))+ \chi)\bigr) \\
        &\leq p\bigl(\ALG(B+\chi)\bigr).
    \end{align*}
We proceed to show that the density-greedy is not $(\alpha,\mu)$-competitive for any $\mu \geq 1$ if $\alpha < \chi$. To this end, let $0\leq \alpha <1$.
    Consider the instance given in \Cref{fig:lower-bound-trees-density} for some $\varepsilon < 1-\alpha$. Then, $\chi=1$.
    The first min-max tree $T^*$ computed by the density-greedy algorithm only contains the edge with cost $1$ and has density~$\dens^*=2$.
    Thus, the prize collected by the density-greedy algorithm for any cost budget less than~$\chi=1$ is $0$, i.e, $p(\ALG(\varepsilon+\alpha)) = 0$ whereas $p(\OPT(\varepsilon))=\varepsilon$.
    Hence, we obtain $\mu \cdot p(\ALG(\varepsilon+\alpha))
        =0
        <\varepsilon
        =p(\OPT(\varepsilon))$ for all $\mu\geq 1$.
\end{proof}

We further prove that that there is no $(\alpha,1)$-competitive algorithm for trees for any~$\alpha<\chi$, which implies that the density-greedy algorithm gives the best possible approximation guarantee. In fact, we show the following stronger result.

\thmresultLBtrees*
\begin{figure}[tb]
\centering
    \begin{subfigure}[c]{0.45\textwidth}
        \centering
        \begin{tikzpicture}\small
         \tikzset{mininode/.style = {circle, fill, minimum size=2pt, inner sep=1pt},
            weightnode/.style = {rectangle, draw, minimum width=0.5cm,
                                minimum height = 0.5cm},
        }
            \useasboundingbox (-1.75,-1) rectangle (3,1);
            \node[mininode,label=above:{$r$}] at (0,0) (r) {};
    		\node[weightnode] at (-1.5,0) (A) {$2\chi$};
    		\node[weightnode] at (1.5,0.7) (B) {\!$\chi/k$\!};
    		\node[weightnode] at (1.5,-0.7) (C) {\!$\chi/k$\!};
    
            \draw (r) -- node[above]{$\chi$} (A);
            \draw (r) -- node[above]{$\chi/k$} (B);
            \draw (r) -- node[below]{$\chi/k$} (C);
            \node[label=right:{$k$ copies}] at (1.5,0.1) {\large{$\vdots$}};
    	\end{tikzpicture} 
        \caption{}
        \label{fig:lower-bound-tree-thm}
    \end{subfigure}
    \hfill
    \begin{subfigure}[c]{0.45\textwidth}
        \centering
        \begin{tikzpicture}\small
         \tikzset{mininode/.style = {circle, fill, minimum size=2pt, inner sep=1pt},
            weightnode/.style = {rectangle, draw, minimum width=0.5cm,
                                minimum height = 0.5cm},
        }
            \useasboundingbox (-1.75,-1) rectangle (1.75,1);
    		\node[mininode,label=above:{$r$}] at (0,0) (r) {};
    		\node[weightnode] at (-1.5,0) (A) {$1$};
    		\node[weightnode] at (1.5,0) (B) {$k$};
    
            \draw (r) -- node[above]{$\lfrac{1}{2}$} (A);
            \draw (r) -- node[above]{$1$} (B);
    	\end{tikzpicture} 
        \caption{}
        \label{fig:lower-bound-tree-prop}
    \end{subfigure}
    \caption{Lower bounds used in the proof of \Cref{thm:result_LB_trees}}
\end{figure}
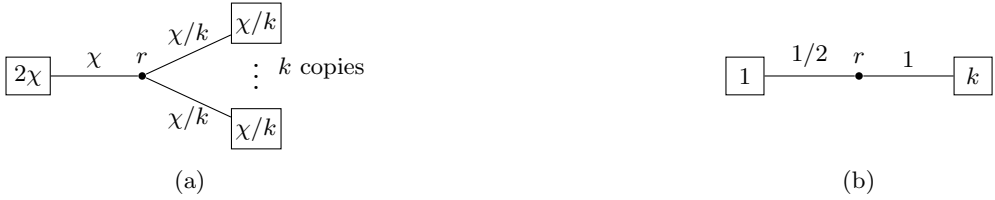

\begin{proof}
Let $\ALG$ be $(\alpha,\mu)$-competitive. 
We first show that when $\alpha \in [\chi/2,\chi)$, then we have $\mu < \chi/\alpha$.
To this end, consider the instance $G$ given in \Cref{fig:lower-bound-tree-thm} with eccentricity $\chi\in \R_{>0}$.
    Let $\pi=(e_1,\dots,e_{k+1})$ be the solution computed by $\ALG$ for the incremental prize-collecting Steiner-tree problem for some $\alpha\in\left[\chi/2,\chi\right)$ when applied on the instance~$G$.
    Let $e_{\ell+1}$ be the edge of cost $\chi$ for some $\ell\in\{0,\dots,k\}$.
    Then, $p(\{e_1,\dots,e_\ell\})=c(\{e_1,\dots,e_\ell\})=\chi\ell/k$.
    We proceed to distinguish two cases.
    
    First, we assume that $\chi\ell/k > \alpha$ for some $k \in \N$.
    In this case, the total cost of the first $\ell+1$ edges is bounded from below by $c(\{e_1,\dots,e_{\ell+1}\})=\chi+\frac{\ell}{k}\chi>\chi+\alpha$.
    For cost budget $B\coloneqq \chi$ we thus obtain that $p\bigl(\ALG(B+\alpha)\bigr)=\lfrac{\ell\chi}{k}\leq \chi$, while $p\bigl(\OPT(B)\bigr)=2\chi$.
    Hence, the multiplicative factor is at least~$\mu\geq \frac{p(\OPT(B))}{p(\ALG(B+\alpha))}\geq 2\geq \lfrac{\chi}{\alpha}$.

    For the second case, we assume that $\chi \ell /k \leq\alpha$ for every $k \in \N$.
    Consider budget $B\coloneqq \chi+\frac{\ell}{k}\chi-\delta-\alpha$ for some $0<\delta< \chi /k$.
    The prize collected by $\ALG$ is 
    $p\bigl(\ALG(B+\alpha)\bigr)=p\bigl(\ALG(\chi+\frac{\ell}{k}\chi-\delta)\bigr)= \chi\ell /k $, while 
    $p\bigl(\OPT(B)\bigr)=p\bigl(\OPT(\chi+\frac{\ell}{k}\chi-\delta-\alpha)\bigr)\geq \chi+\frac{\ell-2}{k}\chi-\alpha$.
    We rearrange these equations and use the fact that $\ALG$ is $(\alpha,\mu)$-competitive, $\lfrac{\alpha}{\chi}\geq \lfrac{1}{2}$, and $\ell\geq0$ to obtain
    \begin{align*}
        \ell=\frac{k}{\chi}\, p\bigl(\ALG(B+\alpha)\bigr)
        \geq \frac{k}{\chi\mu}\, p\bigl(\OPT(B)\bigr)
        \geq \frac{k}{\chi\mu}\left(\chi+\frac{\ell-2}{k}\chi-\alpha\right)
        \geq \frac{k}{\mu}\left(1-\frac{\alpha}{\chi}\right)-\frac{2}{\mu}.
    \end{align*}
    Thus, $k \rightarrow \infty$ implies $\ell \rightarrow \infty$.
    Finally, we obtain a lower bound on $\mu$ by
    \begin{multline*}
        \mu
        \geq \frac{p\bigl(\OPT(B)\bigr)}{p\bigl(\ALG(B+\alpha)\bigr)}
        \geq \left( \chi+\frac{\ell-2}{k}\chi-\alpha \right) \frac{k}{\ell \chi}
        =\frac{k}{\ell \chi}(\chi-\alpha)+\frac{\ell-2}{\ell}\\
        \geq \frac{\chi}{\alpha} -1+\frac{\ell-2}{\ell}
        \geq \frac{\chi}{\alpha}-\frac{2}{\ell},
    \end{multline*}
    where we used $\lfrac{\ell \chi}{k}\leq\alpha$ in the second to last inequality.
    Since $k \rightarrow \infty$ implies $\ell \rightarrow \infty$, we obtain $\mu \geq \lfrac{\chi}{\alpha}$ as claimed.

We proceed to show that if $\alpha < \chi/2$, then there is no finite $\mu \geq 1$ such that $\ALG$ is $(\alpha,\mu)$-competitive. 
To this end, consider the instance $G$ given in \Cref{fig:lower-bound-tree-prop} with $\chi=1$.
    Let $\pi$ be the solution obtained from an algorithm for the incremental prize-collecting Steiner-tree problem applied on $G$.
    Further, assume that $0\leq \alpha <\lfrac{1}{2}$.
    If the first edge in $\pi$ is the edge of cost $\lfrac{1}{2}$, we obtain
    $\mu\, p\bigl(\ALG(1+\alpha)\bigr)=\mu$, while $p\bigl(\OPT(1)\bigr)=k$.
    For a fixed $\mu$, we can set $k$ to some value strictly larger than $\mu$ and obtain $\mu\, p\bigl(\ALG(1+\alpha)\bigr)=\mu<k=p\bigl(\OPT(1)\bigr)$, i.e., the algorithm is not $(\alpha,\mu)$-competitive.
    However, if the first edge in $\pi$ corresponds to the edge of cost $1$, we obtain
    $p\bigl(\ALG(\lfrac{1}{2}+\alpha)\bigr)=0$, while $p\bigl(\OPT(\lfrac{1}{2})\bigr)=1$.
    Hence, the algorithm is not $(\alpha,\mu)$-competitive for any $\mu\geq 1$.
\end{proof}

\section{Incremental Prize-Collecting Steiner-Tree on General Graphs}\label{sec:dg-graphs}

In this section we study the incremental prize-collecting Steiner-tree problem on general graphs.
As a first step, we extend the density-greedy algorithm from the previous section such that we can apply it on general graphs, as well.
Afterwards we introduce a new algorithm based on capacity scaling and, finally, we give general lower bounds for the approximation factors of the incremental prize-collecting Steiner-tree problem.
Throughout this section, we assume that $G=(V,E)$ is a graph with root vertex $r$ and $p(r)=0$.

\subsection{Adapting the Density-Greedy Algorithm}

\begin{figure}
    \centering
    \begin{tikzpicture}\small
         \tikzset{mininode/.style = {circle, fill, minimum size=2pt, inner sep=1pt},
            weightnode/.style = {rectangle, draw, minimum width=0.5cm,
                                minimum height = 0.5cm},
        }
        \node[mininode,label=left:{$r$}] at (-1,0) (r) {};
        \node[mininode] at (0,0) (r') {};
        \node[mininode] at (2,0) (x) {};
        \node[mininode] at (1,-1) (u) {};
        \node[weightnode] at (3,0) (A) {$3$};
        \node[weightnode] at (1,-2) (S) {$2$};

        \draw[Red] (r) -- node[black, above]{$e$} 
        (r');
        \draw[MidnightBlue] (r') -- node[black, above]{$e_1$}
        (x);
        \draw[Red] (x) -- 
        (A);
        \draw[ForestGreen] (r') -- node[black, below left]{$e_2$} 
        (u);
        \draw[ForestGreen] (u) --
        (x);
        \draw[ForestGreen] (u) --
        (S);
    \end{tikzpicture}
    \caption{Assume all edges have unit cost. Then, $G$ has two min-max subtrees: $T_1$ consists of the blue and red edges, $T_2$ consists of the green and red edges. Both have maximum density $\dens^*=1$. Note that neither the union nor the intersection of $T_1$ and $T_2$ is a min-max subtree.\\
    If we slightly reduce the cost of edge $e_2$ to $\lfrac{3}{4}$, tree $T_2$ is the only min-max subtree. Hence, in the first iteration of \Cref{alg:dg-tree}, we contract edge $e$. However, in the second iteration, the tree spanned by the blue and red edges is the only min-max subtree and thus, \Cref{lem:complete_min-max_tree} does not hold for general graphs.
    \label{fig:intersecting-min-max-subtrees}}
\end{figure}

We start by generalizing statement \eqref{item:alpern2} and \eqref{item:alpern3} of \Cref{lem:alpern_and_lidbetter} to general graphs.
The proof is similar to the original one for trees but is given here for completeness.
\pagebreak[2]

\begin{restatable}{lemma}{alpernlidbetterforgeneralgraphs}\label{lem:alpern_and_lidbetter_gen}
    Let $T\in \M$ be a rooted subtree of $G$ with maximum density $\dens^*$. Then the following holds:
    \begin{enumerate}[(i)]
        \item \label{item:alperngraphs1} Every branch of $T$ at $r$ has density $\dens^*$.
        \item \label{item:alperngraphs2} If $T$ is a min-max subtree of $G$, $T$ has only one branch at $r$.
    \end{enumerate}
\end{restatable}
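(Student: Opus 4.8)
The plan is to mimic the tree argument from \Cref{lem:alpern_and_lidbetter}, using only that $\M \cup \{(\{r\},\emptyset)\}$ is closed under union and intersection (which holds for general graphs, since the proof of that closure property only uses the density inequality $d(A \cup B) \cdot c(A \cup B) + d(A \cap B) \cdot c(A \cap B) \geq d(A)c(A) + d(B)c(B)$ together with submodularity of $c$ on subtrees sharing the root, not any tree structure of $G$). First I would prove \eqref{item:alperngraphs1}: let $T \in \M$ and let $B_1, \dots, B_m$ be the branches of $T$ at $r$, so $T = \bigcup_{i} B_i$ with the $B_i$ pairwise sharing only $r$, hence $p(T) = \sum_i p(B_i)$ and $c(T) = \sum_i c(B_i)$. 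Then $d^* = d(T) = \bigl(\sum_i p(B_i)\bigr)/\bigl(\sum_i c(B_i)\bigr)$ is a weighted mediant of the $d(B_i)$, so $\min_i d(B_i) \leq d^* \leq \max_i d(B_i)$; since each $B_i \in \T$ and $d^*$ is the maximum density over $\T$, we get $d(B_i) \leq d^*$ for all $i$, and combined with the mediant inequality this forces $d(B_i) = d^*$ for every $i$.

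Next I would prove \eqref{item:alperngraphs2} by contradiction. Suppose a min-max subtree $T$ has at least two branches $B_1, B_2$ at $r$. By part \eqref{item:alperngraphs1}, $B_1 \in \M$, so $B_1$ is a rooted subtree of maximum density. But $B_1 \subsetneq T$ is a proper rooted subtree, contradicting the inclusion-wise minimality of $T$ in $\M$. Hence $T$ has exactly one branch at $r$.

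The only subtlety — and the step I expect to need the most care — is making sure part \eqref{item:alperngraphs1} does not secretly rely on the fact that in a tree the branches at $r$ are genuinely disjoint except for $r$; on a general graph, the ``branches of a rooted subtree $T$ at $r$'' are still defined with respect to the subtree $T$ itself (each branch is the maximal subtree of $T$ hanging off a single edge incident to $r$), so the $B_i$ really are edge-disjoint and meet only in $r$ \emph{as subgraphs of $T$}, independently of how the corresponding edges sit in $G$. This makes the additivity $p(T) = \sum_i p(B_i)$ and $c(T) = \sum_i c(B_i)$ valid, and the mediant argument goes through verbatim. Note that the full statement of \Cref{lem:alpern_and_lidbetter}, in particular item~\eqref{item:alpern1} (distinct min-max subtrees are edge-disjoint), genuinely fails on general graphs — \Cref{fig:intersecting-min-max-subtrees} shows two min-max subtrees that are neither disjoint nor nested — which is exactly why only \eqref{item:alperngraphs1} and \eqref{item:alperngraphs2} survive and why the density-greedy analysis must be redone for graphs.
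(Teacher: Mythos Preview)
Your actual argument for \eqref{item:alperngraphs1} (the mediant / weighted-average computation) and for \eqref{item:alperngraphs2} (a branch of maximum density is a proper rooted subtree in $\M$, contradicting minimality) is correct and is exactly the paper's proof, just phrased slightly differently.

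However, the framing sentence at the start is wrong and should be dropped: $\M \cup \{(\{r\},\emptyset)\}$ is \emph{not} closed under union and intersection on general graphs. The union of two rooted subtrees of a graph can contain a cycle and hence fail to lie in $\T$ at all; \Cref{fig:intersecting-min-max-subtrees} illustrates precisely this failure (neither the union nor the intersection of the two min-max subtrees there has maximum density). Fortunately your proof never actually invokes this closure property --- the mediant argument only uses that the branches of the single tree $T$ are edge-disjoint and meet only in $r$, which you correctly justify later. So delete the parenthetical claim about closure and the proof stands.
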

\begin{proof}
    For~(\ref{item:alperngraphs1}) we observe that if there is a branch at~$r$ of density smaller~$\dens^*$ then removing this branch from $T$ would yield a rooted tree with increased density contradicting that~$T$ has maximum density. For~(\ref{item:alperngraphs2}) assume that~$T$ has more than one branch at~$r$, then those branches are proper rooted subtrees of~$T$ and by~(\ref{item:alperngraphs1}) have density~$\dens^*$ contradicting the fact that $T$ has no proper rooted subtrees of maximum density.
\end{proof}

A significant difference to the setting on trees, however, is that on general graphs the min-max subtrees are not closed under intersection and union, i.e., statement \eqref{item:alpern1} of \Cref{lem:alpern_and_lidbetter} does not hold anymore, as illustrated in \Cref{fig:intersecting-min-max-subtrees}.
As a consequence we do not obtain \Cref{lem:complete_min-max_tree}, that is, the edges of a fixed min-max subtree are not necessarily added to the solution consecutively.
Since the analysis of the density-greedy algorithm crucially depends on this property, we need to make some adjustments to the algorithm when generalizing it in order to recover this property.
The \emph{density-greedy algorithm for general graphs} is then defined as follows.

\begin{algorithm}[h]
\caption{The density-greedy algorithm for general graphs}\label{alg:dg-graph}
\begin{algorithmic}[1]
\State $\pi \gets ()$
\State $\overbar{T} \gets (\{r\},\emptyset)$
\While{$p_{\contract{G}{\overbar{T}}}(\contract{G}{\overbar{T}}) \neq 0$}
    \State fix arbitrary min-max subtree $T^*$ of $\contract{G}{\overbar{T}}$ 
    \label{it:compute-tree-graph}
    \State denote by $T^+$ the extension of $T^*$ in $G$
    \label{it:compute-uncontracted-tree-graph}
    \State \textbf{run} \Cref{alg:dg-tree} on $\contract{T^+}{\overbar{T}}$ and obtain $\pi^+$
    \State \smash{append $\pi^+$ to $\pi$ and add edges of $T^+$ to $\overbar{T}$}
\EndWhile
\State \Return $\pi$
\end{algorithmic}
\label{alg:greedy-general-graphs}
\end{algorithm}

First, we show that we do not obtain the same competitive ratios for the density-greedy algorithm for general graphs as we did for trees.

\begin{restatable}{proposition}{lemgengraphsdensity}\label{lem:gen-graphs-density}
    The density-greedy algorithm is not $(\chi,\mu)$-competitive for any $\mu\geq 1$.
\end{restatable}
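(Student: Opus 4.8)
The plan is to exhibit a single family of instances on which the density-greedy algorithm for general graphs fails to be $(\chi,\mu)$-competitive for any fixed $\mu \geq 1$, by forcing the algorithm to build an expensive low-density connection before it can reach a cluster of high prize, while the optimum solution reaches comparable prize with a much cheaper (but in the algorithm's view, lower-density) detour. Concretely, I would use a graph similar in spirit to Figure~\ref{fig:intersecting-min-max-subtrees}: a short cheap edge $e$ from the root to an auxiliary vertex, from which two paths lead to a common "hub" vertex — one path being a single long edge of cost roughly $\chi$, the other being a path of total cost slightly below $\chi$ built out of several edges, with a large prize $p$ attached at or beyond the hub. The key is to choose the edge costs (as in the caption of Figure~\ref{fig:intersecting-min-max-subtrees}, slightly perturbing $e_2$ to cost $3/4$) so that the \emph{unique} min-max subtree in the first iteration uses the long edge, hence $T^+$ in the first iteration forces the algorithm to spend budget on the long expensive path before collecting the big prize, whereas $\OPT(B)$ for the appropriate budget $B$ reaches the prize via the cheap path.

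The steps, in order, would be: (1) fix the instance with a scaling parameter for the large prize $p$ and verify that $\chi$ equals (up to the chosen normalization) the length of the long edge plus the cost of $e$; (2) compute the densities of the relevant candidate rooted subtrees and check that the min-max subtree in the first iteration of Algorithm~\ref{alg:dg-graph} is the one containing the long edge, so that $\pi$ begins by laying the long expensive connection; (3) identify a budget $B$ for which $\OPT(B)$ collects the large prize $p$ via the cheap route, i.e. $c(\OPT(B)) = c(e) + (\text{cost of cheap path}) \leq B$ with $B$ chosen just below $\chi$-plus-something, while with budget $B+\chi$ the density-greedy solution has not yet reached the hub along its chosen long route (because $c(e) + (\text{long edge}) + (\text{remaining edge to the prize}) > B+\chi$); (4) conclude $p(\ALG(B+\chi)) = 0$ (or some fixed small value independent of $p$) while $p(\OPT(B)) = p$, so that taking $p \to \infty$ violates $\mu\, p(\ALG(B+\chi)) \geq p(\OPT(B))$ for every fixed $\mu$.

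The main obstacle is the careful bookkeeping of costs and densities so that, \emph{simultaneously}: the algorithm is genuinely forced (via uniqueness of the min-max subtree, which is why the perturbed cost $3/4$ matters) to commit to the long edge first; the eccentricity $\chi$ is exactly the length of the root-to-farthest-vertex path along that forced route (so that the slack $\chi$ really is all the algorithm is granted); and the cheap alternative path used by $\OPT$ has total cost strictly less than $B$ while the algorithm's committed prefix of cost at most $B+\chi$ still stops short of the prize. Since $\chi$ here is a constant of the instance but the prize $p$ can be scaled arbitrarily, once the budgets are pinned down correctly the failure of $(\chi,\mu)$-competitiveness for every finite $\mu$ is immediate. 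I would also double-check that the recursive call to Algorithm~\ref{alg:dg-tree} on $\contract{T^+}{\overbar{T}}$ does not reorder the edges of $T^+$ in a way that accidentally front-loads the prize — but since $T^+$ is a path in our instance, its internal ordering is forced and this is not an issue.
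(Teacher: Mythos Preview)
Your plan has a structural flaw: you want the unique min-max subtree of $G$ to use the \emph{long} edge to the hub while $\OPT$ reaches the same hub via the \emph{cheap} path. But the min-max subtree maximizes density, so if both routes lead to the same prize, the rooted tree using the cheap path has the same prize and strictly smaller cost, hence strictly larger density --- the algorithm will take exactly the route you intended for $\OPT$. (The caption of Figure~\ref{fig:intersecting-min-max-subtrees} does not rescue this: it illustrates why \Cref{lem:complete_min-max_tree} fails for \Cref{alg:dg-tree} on general graphs, but \Cref{alg:dg-graph} sidesteps that by committing to the whole extension $T^+$ at once, so there is no mid-stream switching to exploit.) Even ignoring which route the algorithm picks, your budget arithmetic cannot close: you need $c(e)+(\text{long edge})>B+\chi$ together with $c(e)+(\text{cheap path})\le B$, hence $(\text{long edge})-(\text{cheap path})>\chi$; but $\chi$ is at least the shortest $r$--hub distance $c(e)+(\text{cheap path})$, and you stipulate the long edge costs roughly $\chi$, which forces $(\text{long edge})-(\text{cheap path})<\chi$.

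The paper's construction avoids both issues by \emph{not} having $\OPT$ and $\ALG$ race for the same prize. It attaches a separate tiny vertex of prize $\varepsilon$ at cost $2\varepsilon$ from $r$ (which $\OPT(2\varepsilon)$ collects), and makes the min-max subtree consist of a prize-less stem of cost $3$ to a vertex $x$ from which three prizes of value $2$ hang at cost $1$ each (density $1$). Crucially, direct edges of cost $3+\varepsilon$ from $r$ to each prize vertex pin down $\chi=3+\varepsilon$ without belonging to the min-max subtree, because individually they have density $2/(3+\varepsilon)<1$. Then $p(\ALG(2\varepsilon+\chi))=p(\ALG(3+3\varepsilon))=0$ while $p(\OPT(2\varepsilon))=\varepsilon>0$, defeating every finite $\mu$. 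The missing idea in your proposal is exactly this decoupling: you need shortcuts that bound $\chi$ but are \emph{individually} too sparse to enter the min-max subtree, together with a small unrelated prize for $\OPT$ to grab; two competing routes to the same large prize cannot do the job.
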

\begin{figure}
    \centering
   	\begin{tikzpicture}[scale=1.25]
     \tikzset{mininode/.style = {circle, fill, minimum size=2pt, inner sep=1pt},
        weightnode/.style = {rectangle, draw, minimum width=0.5cm,
                            minimum height = 0.5cm},
    }
        \node[mininode,label=above:{$r$}] at (0,0) (r) {};
	\node[mininode] at (4,0) (x) {};
	\node[weightnode] at (2,0.7) (A) {$2$};
	\node[weightnode] at (2,0) (B) {$2$};
        \node[weightnode] at (2,-0.7) (C) {$2$};
        \node[weightnode] at (-1.5,0) (E) {$\varepsilon$};

        \draw[Red] (r) .. controls (1,-1.5) and (3,-1.5) .. node[below]{\textcolor{black}{$3$}} (x);        
        \draw (r) -- node[above]{$2\varepsilon$} (E);
        \draw (r) -- node[above]{$3+\varepsilon$} (A);
        \draw (r) -- node[above,pos=0.7]{$3+\varepsilon$} (B);
        \draw (r) -- node[below]{$3+\varepsilon$} (C);
        \draw[Red] (x) -- node[above]{\textcolor{black}{$1$}} (A);
        \draw[Red] (x) -- node[above,pos=0.7]{\textcolor{black}{$1$}} (B);
        \draw[Red] (x) -- node[below]{\textcolor{black}{$1$}} (C);
	\end{tikzpicture}
    \caption{Graph used in the proof of \Cref{lem:gen-graphs-density}}
    \label{fig:gen-graphs-counterexmp}
\end{figure}
\begin{proof}
Consider the instance given in \Cref{fig:gen-graphs-counterexmp} for some $0<\varepsilon<\lfrac{1}{3}$. The eccentricity of the root vertex is $\chi=3+\varepsilon$.
The only min-max subtree of $G$ is tree $T$ shown in red.
By the definition of the algorithm, the four edges of $T$ are the first edges in the solution $\pi$, starting with the edge of cost $3$.
Thus, for any cost budget $B<4$, we have $p\bigl(\ALG(B)\bigr)=0$.
In particular, since $2\varepsilon+\chi<4$, we obtain
$\mu \,p\bigl(\ALG(2\varepsilon+\chi)\bigr)=0<\varepsilon=p\bigl(\OPT(2\varepsilon)\bigr)$ for all $\mu\geq 1$. Thus, the density-greedy algorithm for graphs is not $(\chi,\mu)$-competitive for any $\mu\geq 1$.
\end{proof}

We work towards showing that the density-greedy algorithm for general graphs is $(\lrp,2)$-competitive, where $\lrp$ denotes the maximum cost of a (vertex-disjoint) path in $G$ starting in the root $r$.
To do so, we again denote by $T_1$ the first min-max subtree that is computed in Step \ref{it:compute-tree-graph} of \Cref{alg:dg-graph}.
The following lemma is the counterpart to \Cref{lem:optimal_on_first_tree}.

\begin{restatable}{lemma}{lemgraphoptimalonfirsttree}
    \label{lem:graph_optimal_on_first_tree}
    For any cost budget $B\leq c(T_1)$, we have 
    $p\bigl(\ALG(B+\lrp)\bigr) \geq \dens(T_1) B$.
    Furthermore, if $B=c(T_1)$, we have
    $p\bigl(\ALG(B)\bigr) = \dens(T_1) B$.
\end{restatable}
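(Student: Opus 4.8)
The plan is to reduce the statement to its tree analogue, \Cref{lem:optimal_on_first_tree}, by observing that the first ``block'' of \Cref{alg:dg-graph} is exactly a run of \Cref{alg:dg-tree} on the tree $T_1$ itself.

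First I would unpack the first iteration of \Cref{alg:dg-graph}. There $\overbar{T}=(\{r\},\emptyset)$ has no edges, so $\contract{G}{\overbar{T}}=G$, the fixed min-max subtree $T^*$ equals $T_1$ and coincides with its extension $T^+$, and the algorithm calls \Cref{alg:dg-tree} on $\contract{T_1}{\overbar{T}}=T_1$. Because $T_1$ is a min-max subtree of $G$, it is moreover the \emph{only} maximum-density rooted subtree of $T_1$ (a proper rooted subtree of $T_1$ of density $\dens^*$ would be a maximum-density rooted subtree of $G$, contradicting the inclusion-minimality of $T_1$ among those); hence $T_1$ is also the first min-max subtree encountered when \Cref{alg:dg-tree} is run on $T_1$, and \Cref{lem:complete_min-max_tree}, applied with underlying tree $T_1$, shows that the sub-solution $\pi^+$ returned by this call lists precisely the edges of $T_1$. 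In particular $\pi^+$ is a prefix of the solution $\pi$ produced by \Cref{alg:dg-graph}, and its edges have total cost $c(T_1)$. Writing $\ALG_{T_1}$ for \Cref{alg:dg-tree} executed on $T_1$, this gives $p(\ALG(\beta))\ge p(\ALG_{T_1}(\beta))$ for every budget $\beta$, with equality whenever $\beta<c(T_1)$ and $\ALG(\beta)\supseteq T_1$ once $\beta\ge c(T_1)$. For $\beta=c(T_1)$ the maximal prefix of $\pi$ of cost at most $c(T_1)$ is therefore exactly $T_1$ (the next edge, if any, has positive cost), which already yields the ``furthermore'' part, $p(\ALG(c(T_1)))=p(T_1)=\dens(T_1)\,c(T_1)$ (note $r\in T_1$ and $p(r)=0$, so the density over $G$ of $T_1$ is just $p(T_1)/c(T_1)$).

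Next I would invoke \Cref{lem:optimal_on_first_tree} with the underlying tree taken to be $T_1$: its first min-max subtree is $T_1$, of cost $c(T_1)$, so for every $B\le c(T_1)$ the lemma gives $p\bigl(\ALG_{T_1}(B+\chi_{T_1})\bigr)\ge \dens(T_1)\,B$, where $\chi_{T_1}$ denotes the eccentricity of $r$ within $T_1$. The single fact that bridges trees and general graphs is that $\chi_{T_1}\le\lrp$: the $r$-$v$-path in $T_1$ attaining $\chi_{T_1}$ is a simple (vertex-disjoint) path of $G$ starting at $r$, hence has cost at most $\lrp$. Combining this with the monotonicity of $\ALG_{T_1}$ in the budget, I obtain, for every $B\le c(T_1)$,
\[ p\bigl(\ALG(B+\lrp)\bigr)\;\ge\;p\bigl(\ALG_{T_1}(B+\lrp)\bigr)\;\ge\;p\bigl(\ALG_{T_1}(B+\chi_{T_1})\bigr)\;\ge\;\dens(T_1)\,B, \]
which is the assertion.

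The step I expect to require the most care is the bookkeeping in the first paragraph: pinning down that $\pi^+$ really is a prefix of $\pi$ consisting exactly of the edges of $T_1$, so that all of $T_1$ is built before any edge outside $T_1$ — this is precisely the property that \Cref{alg:dg-graph} was designed to restore relative to \Cref{alg:dg-tree}. Everything afterwards is a short reduction plus the inequality $\chi_{T_1}\le\lrp$. A more self-contained alternative would be to replay the proof of \Cref{lem:optimal_on_first_tree} directly, splitting $\ALG(B+\lrp)$ along the root-to-$e^*$ path of the first overshooting edge $e^*$ and bounding the density of the leftover forest via \Cref{lem:complete_min-max_tree} (again with underlying tree $T_1$, since the relevant prefix of $\pi$ is governed by \Cref{alg:dg-tree} on $T_1$). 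I would avoid this route, however, because the ambient graph is now arbitrary, so one must take care that the ``density over $G$'' of a sub-forest of $T_1$ need not coincide with its density inside $T_1$ — the vertex of the forest closest to $r$ may differ in $G$ — a complication that the reduction sidesteps entirely.
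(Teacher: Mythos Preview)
Your proposal is correct and follows essentially the same route as the paper: reduce to \Cref{lem:optimal_on_first_tree} applied to the tree $T_1$, then replace the internal eccentricity $\chi_{T_1}$ by the global bound $\lrp$. You are in fact more explicit than the paper on one point the paper glosses over, namely that the first min-max subtree encountered by \Cref{alg:dg-tree} when run on $T_1$ is $T_1$ itself (so that \Cref{lem:optimal_on_first_tree} applies for the full range $B\le c(T_1)$), and you spell out the comparison $p(\ALG(\beta))\ge p(\ALG_{T_1}(\beta))$ that the paper leaves implicit.
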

\begin{proof}
    By the definition of the algorithm, the first $|T_1|$ edges of $\pi$ are exactly the edges of $T_1$.
    This yields $p\bigl(\ALG(c(T_1))\bigr)=\dens(T_1)c(T_1)$ proving the second statement of the lemma and the first one for all cost budgets $c(T_1)-\lrp\leq B\leq c(T_1)$.

    It remains to consider cost budgets $B<c(T_1)-\lrp$.
    Applying \Cref{lem:optimal_on_first_tree} on $T_1$, we obtain that
    $p\bigl(\ALG(B+\chi_1)\bigr)\geq \dens^*_1 B$,
    where $\chi_1$ is the eccentricity of the root in $T_1$ and $\dens^*_1$ is the maximum density of a rooted subtree in $T_1$, thus, $\dens^*_1 \geq d(T_1)$.
    The eccentricity $\chi_1$, however, cannot be compared to $\chi$.
    Instead, we use $\lrp$ as an upper bound on $\chi_1$ and obtain
    \begin{align*}
        p\bigl(\ALG(B+\lrp)\bigr)\geq p\bigl(\ALG(B+\chi_1)\bigr)\geq \dens_1^*B=d(T_1)B
    \end{align*}
    as claimed.   
\end{proof}

We continue by generalizing the notion of the density to subtrees of general graphs.
To this end, let~$T$ be a (not necessarily rooted) subtree of $G$.
Since the choice of the vertex~$r_T$ in~$T$ that has the shortest distance to $r$ is not necessarily unique when $G$ is a graph, we define the density of~$T$ over an underlying rooted subtree $T'\supset T$.
Then, the choice of the vertex~$r_T$ of $T$ that has the shortest distance to $r$ in $T'$ is again unique and we define the density of~$T$ over $T'$ as $\dens_{T'}(T)\coloneqq \bigl(p(T) - p(r_T)\bigr)\big/ c(T)$.
Thereby, this notion is consistent with the density used in \Cref{sec:greedy_trees}.
We are now ready to prove the upper bound of \Cref{thm:result_greedy_graphs}.

\thmresultgreedygraphs*

\begin{proof}
The fact that the density-greedy algorithm is not $(\chi,\mu)$-competitive for any $\mu \geq 1$ follows from \Cref{lem:gen-graphs-density}.

We proceed to show that $2p\bigl(\ALG(B+\lrp)\bigr)\geq p\bigl(\OPT(B)\bigr)$
    for every cost budget $B\in\R_{\geq0}$.
    We denote by $c(\pi)$ the total cost of all edges in the incremental solution $\pi$ computed by $\ALG$.
    If $c(\pi)\leq B$, we obtain that~$p(\ALG(B+\lrp))=p(G)\geq p\bigl(\OPT(B)\bigr)$ and thus, the claim holds.
    
    It remains to consider the case where $c(\pi)> B$.
    We denote by $T_1,\dots,T_n$ the extensions of the min-max subtrees computed in Step \ref{it:compute-uncontracted-tree-graph} of \Cref{alg:dg-graph} and let $k\in\{1,\dots,n\}$ be such that we have $\sum_{i=1}^{k-1}c(T_i)\leq B <\sum_{i=1}^k c(T_i)$.
    Note that by construction, $T_i$ and $T_j$ are edge disjoint for all $i\neq j$. Also, every $T_i$ is a tree since the min-max subtrees all have only one branch at the root.
    We denote by~$\overbar{T}_i=\bigcup_{j=1}^i T_j$ the union of all trees $T_1,\dots,T_i$ and set~$\overbar{T}_0=(\{r\},\emptyset)$.
    We claim that 
    \begin{align}
    \label{eq:density-decreasing}
    \dens_{\overbar{T}_k}(T_1)\geq \dens_{\overbar{T}_k}(T_2)\geq \dots \geq \dens_{\overbar{T}_k} (T_k).
    \end{align}
    To prove this, we assume for contradiction that there is some $1\leq i<k$ such that $\dens_{\overbar{T}_k}(T_i)<\dens_{\overbar{T}_k}(T_{i+1})$.
    However, since~$\contract{(T_i\cup T_{i+1})}{\overbar{T}_{i-1}}$ is a rooted tree in $\contract{G}{\overbar{T}_{i-1}}$, this assumption yields
    \begin{align*}
        \dens_{\contract{G}{\overbar{T}_{i-1}}}(\contract{(T_i\cup T_{i+1})}{\overbar{T}_{i-1}})
        =\dens_{\overbar{T}_{k}}(T_i\cup T_{i+1})
        >\dens_{\overbar{T}_{k}}(T_i)
        =\dens_{\contract{G}{\overbar{T}_{i-1}}}(\contract{T_i}{\overbar{T}_{i-1}}),
    \end{align*}
    contradicting the choice of $\contract{T_i}{\overbar{T}_{i-1}}$ as a min-max subtree in $\contract{G}{\overbar{T}_{i-1}}$. Thus, the claim holds.
    If we assume that $\OPT(B)\subseteq \overbar{T}_{k-1}$, using $\overbar{T}_{k-1}\subseteq \ALG(B)$ gives $p(\OPT(B))\leq p(\ALG(B))$ and the theorem follows immediately.
    Hence, we assume for the remaining part of this proof that $\OPT(B)\nsubseteq \overbar{T}_{k-1}$.
    Using the facts that $\contract{T_k}{\overbar{T}_{k-1}}$ is a min-max subtree of $\contract{G}{\overbar{T}_{k-1}}$ and that $\overbar{T}_{k-1}\subseteq \ALG(B)$, we obtain
    \begin{align}
        \dens_{\overbar{T}_k}(T_k)
        =\dens_{\contract{G}{\overbar{T}_{k-1}}}(\contract{T_k}{\overbar{T}_{k-1}}) \label{eq:density-estimation-1} 
        &\geq \frac{p_{\contract{G}{\overbar{T}_{k-1}}}(\contract{\OPT(B)}{\overbar{T}_{k-1}})}{c_{\contract{G}{\overbar{T}_{k-1}}}(\contract{\OPT(B)}{\overbar{T}_{k-1}})}\\
        &\geq\frac{p_{\contract{G}{\ALG(B)}}(\contract{\OPT(B)}{\ALG(B)})}{B}.\label{eq:density-estimation-2}
    \end{align}
    
Applying \Cref{lem:graph_optimal_on_first_tree} iteratively on $\contract{G}{\overbar{T}_{i-1}}$ with cost budget $c(T_i)$ for all $1 \leq i \leq k-1$ and for $\smash{\contract{G}{\overbar{T}_{k-1}}}$ with cost budget $\smash{B - c(\overbar{T}_{k-1})<c(T_k)}$, we obtain
    \begin{align*}
        p\bigl(\ALG(B+\lrp)\bigr)
        \geq \sum_{i=1}^{k-1}\dens_{\overbar{T}_k}(T_i)c(T_i)+\dens_{\overbar{T}_k}(T_k)\bigl(B-c(\overbar{T}_{k-1})\bigr)
        \geq \dens_{\overbar{T}_k}(T_k)B,
    \end{align*}
    where we used 
    $\dens_{\overbar{T}_k}(T_i)\geq\dens_{\overbar{T}_k}(T_{k})$ for all $i\leq k$.
    Finally, we conclude that
    \begin{align}
        p\bigl(\OPT(B)\bigr)
        &\leq p\bigl(\ALG(B)\bigr) + p_{\contract{G}{\ALG(B)}}\bigl(\contract{\OPT(B)}{\ALG(B)}\bigr)\notag\\
        &\leq p\bigl(\ALG(B)\bigr)+\dens_{\overbar{T}_k}(T_k)B\notag\\
        &\leq p\bigl(\ALG(B)\bigr)+p\bigl(\ALG(B+\lrp)\bigr) \label{eq:density-approximate}\\
        &\leq 2p\bigl(\ALG(B+\lrp)\bigr). \qedhere
    \end{align}
\end{proof}

The computational bottleneck of the density-greedy algorithm is the computation of a min-max subtree in Step~\ref{it:compute-tree-graph} of \Cref{alg:dg-graph}.
As the computation of a subtree of maximum density is $\mathsf{NP}$-hard (Lau et al.~\cite{LauNN06}), there is no polynomial implementation of this step, unless $\mathsf{P} = \mathsf{NP}$.
However, Hermans et al.~\cite{HermansLM22} give a polynomial $2$-approximation for the computation of a subtree with maximal density.
In order to obtain a polynomial algorithm for the incremental prize-collecting Steiner-tree problem on general graphs, we need to do the following adaptations to \Cref{alg:dg-graph}:
\begin{enumerate}
    \item In Step~\ref{it:compute-tree-graph}, we use the approximation algorithm of Hermans et al.~\cite{HermansLM22} to obtain a tree $T^*$ of $\contract{G}{\overbar{T}}$ with density $d_{\contract{G}{\overbar{T}}}(T^*) \geq d^*/2$ where $d^*$ is the maximum density of a rooted subtree in ${\contract{G}{\overbar{T}}}$.
    \item \label{it:merge-step} Let $T^*_i$ and $T^*_{i+1}$ be two such trees computed in two consecutive iterations of the while loop such that the density of $T_{i+1}^*$ is larger than the density of $T_i^*$. If both trees are rooted in the same vertex, we switch their order. Otherwise, we merge both trees.
\end{enumerate}

The second point of these adjustments makes sure that the densities of the subtrees computed by the algorithm are decreasing, i.e., that inequality \eqref{eq:density-decreasing} in the proof of \Cref{thm:result_greedy_graphs} still holds. We observe that when running \Cref{alg:dg-graph} with these adjustments, the proof of \Cref{thm:result_greedy_graphs} goes through except for the fact that the right hand sides of inequalities \eqref{eq:density-estimation-1} and~\eqref{eq:density-estimation-2} are multiplied with $1/2$. 
This ultimately leads to an additional loss of $p\bigl(\ALG(B+\gamma)\bigr)$ in~\eqref{eq:density-approximate}. 
We obtain the following result.

\corresultgreedygraphs*

\subsection{A Capacity-Scaling Algorithm}

The density-greedy algorithm turns out to be a $(\lrp,2)$-competitive algorithm for the prize-collecting Steiner-tree problem on general graphs.
This result is particularly strong when the given instance $G$ is sparse and hence, the maximum cost of a (vertex-disjoint) path starting in the root is small compared to the cost of a rooted spanning subtree of $G$.
However, in general, the value $\lrp$ can be large.
This applies for example, when $G$ is Hamiltonian.
In such cases, it is more desirable to obtain an $(\alpha,\mu)$-competitive algorithm where $\alpha$ depends on $\chi$ rather than on $\lrp$.
To this end, we introduce the \emph{capacity-scaling algorithm} inspired by the incremental maximization algorithm of~Bernstein et al.~\cite{BernsteinD0H22}.
The basic idea is as follows.
At the beginning the solution $\pi$ is set to the empty sequence $()$.
Then, in each iteration~$i=0,1,\dots$ the algorithm computes a maximum total prize rooted subtree $T_i$ of $G$ that obeys the cost budget $2^i\chi$.
Afterwards, we run \Cref{alg:dg-tree} on $T_i$ and append the returned solution to $\pi$.
This is done in such a way, that the solution $\pi$ contains neither double or parallel edges nor cycles.

The \emph{capacity-scaling algorithm} is given in more detail in \Cref{alg:cap}.

\begin{algorithm}
\caption{The capacity-scaling algorithm}\label{alg:cap}
\begin{algorithmic}[1]
\State $i \gets 0$
\State $\pi \gets ()$
\While{$\pi$ does not span all vertices in $V^*$}
    \State $T_i \in \argmax_{}\{p(T) : T\in \T, c(T)\leq 2^i\chi\}$\label{it:solve-budget-problem}
    \State \textbf{run} \Cref{alg:dg-tree} on $T_i$ with root $r$ and obtain $\pi_i$
    \State append $\pi_i$ to $\pi$ (skip double edges, parallel edges, and edges that close cycles)
    \State $i \gets i+1$
\EndWhile
\State \Return $\pi$
\end{algorithmic}
\end{algorithm}

In order to analyse the capacity-scaling algorithm properly, we give the following lemmata for general trees (not necessary subtrees of $G$).
Similar to the concept of min-max subtrees, we introduce the definition of min-max trees.
In particular, we call a tree~$T$ with root vertex~$r_T$ a \emph{min-max tree with root~$r_T$} if $p(r_T)=0$ and for all rooted subtrees $T'=(V_{T'},E_{T'}) \subset T$ with~$r_T \in V_{T'}$ and $c(T')>0$, we have $\dens(T) > \dens(T')$.
First, we generalize \Cref{lem:higher_density_in_induced_subtrees} to min-max trees.
The only adaption in the proof is that we use that $T$ is a min-max tree instead of a min-max subtree.
Specifically, we obtain the following result.

\begin{lemma}\label{lem:higher_density_in_induced_subtrees_graphs}
    Let $T$ be a min-max tree with root $r_T$. Then, for every edge $e$ in $T$ we have $\dens_T(T_e) \geq \dens(T)$, where $T_e$ is the branch of $T$ rooted at $e$.
    Strict inequality holds, whenever $T\neq T_e$.
\end{lemma}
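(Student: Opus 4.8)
The plan is to mimic the proof of \Cref{lem:higher_density_in_induced_subtrees} essentially verbatim, replacing the use of "min-max subtree of $G$" by "min-max tree with root $r_T$". First I would observe that $T$ having a single edge at $r_T$ is not automatic here, so I cannot immediately reduce to the case $r_T \notin e$; instead I split on whether $T = T_e$. If $T = T_e$ the claimed inequality $\dens_T(T_e) \ge \dens(T)$ holds with equality and there is nothing to prove, so assume $T \ne T_e$.

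Next I would introduce the complementary subforest. Let $F$ be the subgraph of $T$ induced by $E_T \setminus E_{T_e}$; since $T \ne T_e$ we have $E_F \ne \emptyset$, and $r_T \in V_F$ because every vertex of $T_e$ other than the endpoints of $e$ is separated from $r_T$ by $e$. Crucially, $F$ need not be connected, so I would apply the forest-density definition: since every component of $F$ is a rooted subtree of $T$ containing a vertex closer to $r_T$, and since $T$ is a min-max tree, each such component $C$ satisfies $\dens_T(C) < \dens(T)$ (using $c(C) > 0$), hence $\dens_T(F) < \dens(T)$ as well. The one subtlety to handle carefully is that the root of $F$ in the forest sense is $r_T$, and $p(r_T) = 0$ by the definition of a min-max tree, so $p(F) = p(T) - p(T_e) + p(r_e)$ where $r_e$ is the shallow endpoint of $e$ — I would be careful that the bookkeeping of prizes at the shared vertex $r_e$ is consistent between $p(F)$, $p(T_e) - p(r_e)$, and $p(T) - p(r_T)$, which it is because $r_e$ lies in exactly one of $F$, $T_e$ — it lies in $F$ — and is the deep-side root of $T_e$, whose prize is excluded in $\dens_T(T_e)$.

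Then I would finish with the same algebraic manipulation as in \Cref{lem:higher_density_in_induced_subtrees}:
\begin{align*}
    \dens_T(T_e) = \frac{(p(T)-p(r_T)) - (p(F)-p(r_T))}{c(T)-c(F)}
    = \frac{\dens(T)c(T) - \dens_T(F)c(F)}{c(T)-c(F)}
    > \frac{\dens(T)(c(T)-c(F))}{c(T)-c(F)} = \dens(T),
\end{align*}
where the strict inequality uses $\dens_T(F) < \dens(T)$ and $c(F) > 0$, and $p(r_T) = 0$ lets me write $p(T)-p(r_T) = p(T) = \dens(T)c(T)$ and likewise $p(F) = p(F)-p(r_T) = \dens_T(F)c(F)$.

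The main obstacle is purely bookkeeping: making sure the forest-density of the disconnected complement $F$ is handled correctly (the definition of $\dens$ for a forest sums $p(T_i) - p(r_{T_i})$ over components, and since $r_T$ has prize zero and is itself a component root contributing nothing, the identity $p(F) = \dens_T(F)c(F)$ still holds), and confirming that the shared vertex $r_e$ is counted on exactly the $F$ side so that prizes split additively across $F$ and $T_e$. Once those are checked, the inequality chain is identical to the tree case and the proof concludes.
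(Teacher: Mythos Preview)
Your proposal is correct and follows essentially the same approach as the paper, which simply repeats the proof of \Cref{lem:higher_density_in_induced_subtrees} with the min-max tree property replacing the min-max subtree property. Note that your concern about $F$ being disconnected is unfounded: removing the edges of a branch $T_e$ from a tree $T$ always leaves a connected subtree containing $r_T$, so $F$ is in fact a proper rooted subtree of $T$ with $r_T \in V_F$, and the min-max tree definition gives $\dens(F) < \dens(T)$ directly without any forest-density bookkeeping.
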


The following lemma is crucial to proof \Cref{lem:lambda_portion}, where \Cref{lem:lambda_portion_min_max_arb} is used for $\delta=\left(1-2^{-k+1}\right)$.

\begin{restatable}{lemma}{lemlambdaprotionminmaxarb}\label{lem:lambda_portion_min_max_arb}
    Let $T$ be a tree with root $r_T$ and let $\delta\in[0,1]$ and $k\in\N$.   
    Assume that for every $\lambda' \in[0,1]$ and for every min-max tree $T'$ with root $r_{T'}$, there exists a forest $F'=(V_{F'},E_{F'})\subseteq T'$ with $r_{T'}\in V_{F'}$ and at most $k$ components such that
    $c(F') \leq \lambda' c(T')$ and $p(F') \geq \delta \lambda' p(T')$.
    Then, for all $\lambda \in[0,1]$, there exists a forest $F=(V_F,E_F)\subseteq T$ with $r_T\in V_F$ and at most~$k$ components such that
    $c(F) \leq \lambda c(T)$ and $p(F) \geq \delta \lambda p(T)$.
\end{restatable}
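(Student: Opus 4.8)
The plan is to decompose an \emph{arbitrary} rooted tree $T$ into min-max trees to which the hypothesis can be applied. Concretely, I would first peel off a maximum-density rooted subtree of $T$: let $d^*$ be the maximum density over all rooted subtrees of $T$ containing $r_T$, and let $T^{(1)}$ be an inclusion-wise minimal rooted subtree attaining it. By Lemma~\ref{lem:alpern_and_lidbetter_gen}\eqref{item:alperngraphs2} (applied with $T$ playing the role of the ambient graph) $T^{(1)}$ has a single branch at $r_T$, and by minimality it is in fact a min-max tree with root $r_T$. Contract $T^{(1)}$ into $r_T$ and repeat: this produces a finite chain of min-max trees $T^{(1)},T^{(2)},\dots,T^{(m)}$, pairwise edge-disjoint, whose union is all of $T$, and whose densities (measured over $T$) are non-increasing — this monotonicity is exactly the argument already used for inequality~\eqref{eq:density-decreasing} in the proof of Theorem~\ref{thm:result_greedy_graphs}, where a violation would let one merge two consecutive pieces into a denser subtree of an earlier contracted graph.

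Next, given the target fraction $\lambda\in[0,1]$, I would distribute the budget $\lambda c(T)$ across the pieces greedily in order of decreasing density: apply the hypothesis to each $T^{(j)}$ with a local fraction $\lambda_j\in[0,1]$, where I take $\lambda_j=1$ for the first few pieces until the cumulative cost would exceed $\lambda c(T)$, then a fractional $\lambda_j$ for the critical piece, and $\lambda_j=0$ afterwards. For each $j$ with $\lambda_j>0$ the hypothesis yields a forest $F^{(j)}\subseteq T^{(j)}$ with $r_{T^{(j)}}\in V_{F^{(j)}}$, at most $k$ components, $c(F^{(j)})\le\lambda_j c(T^{(j)})$ and $p(F^{(j)})\ge\delta\lambda_j p(T^{(j)})$. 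Setting $F\coloneqq\bigcup_j F^{(j)}$ gives a subgraph of $T$ containing $r_T$ with $c(F)=\sum_j c(F^{(j)})\le\sum_j\lambda_j c(T^{(j)})\le\lambda c(T)$ by the choice of the $\lambda_j$.

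The profit bound follows from the density-monotonicity together with a standard exchange/averaging inequality: because $p(T^{(j)})/c(T^{(j)})$ is non-increasing in $j$ while the greedy rule loads as much weight as possible onto the early (densest) pieces, one has $\sum_j\lambda_j p(T^{(j)})\ge\lambda\,p(T)$ — informally, a front-loaded convex combination of a decreasing sequence of densities dominates the uniform one. Multiplying through by $\delta$ gives $p(F)\ge\sum_j\delta\lambda_j p(T^{(j)})\ge\delta\lambda\,p(T)$, as required.

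Two points need care. First, $F$ must have at most $k$ \emph{components} as a forest in $T$, not $k$ per piece: here I would use that each $F^{(j)}$ is connected when it touches the root of $T^{(j)}$ — but in fact the hypothesis only promises $k$ components for $F^{(j)}$, so naively $F$ could have up to $km$ components. The fix is to apply the hypothesis only to the \emph{single critical piece} with a fractional $\lambda_j$ and take all earlier pieces whole ($\lambda_j=1$, $F^{(j)}=T^{(j)}$, one component each, all glued at $r_T$); since the whole pieces $T^{(1)},\dots,T^{(j_0-1)}$ together with $r_T$ form a connected subtree, $F$ then has at most $k$ components in total. I expect this bookkeeping — reconciling "$k$ components" with the multi-piece decomposition, and verifying that the whole-pieces-plus-one-fractional-piece strategy still meets both the cost and the profit targets — to be the main obstacle; the density decomposition itself is routine given Lemma~\ref{lem:alpern_and_lidbetter_gen} and the monotonicity argument from Theorem~\ref{thm:result_greedy_graphs}.
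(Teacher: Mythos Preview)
Your approach is essentially the paper's: decompose $T$ into a chain of min-max trees by iterated contraction, take the pieces up to the critical index whole (so their union is the connected subtree $\overbar T_{j-1}$), apply the hypothesis only to the critical piece with the fractional $\lambda'$, and use the density monotonicity to get the profit bound---the ``whole-pieces-plus-one-fractional-piece'' fix you arrive at is exactly what the paper does, and the component count works out just as you say since $r_{T^{(j_0)}}\in\overbar T_{j_0-1}$. The only bookkeeping you are missing is that $T^{(j)}$ is a min-max tree only after resetting the prize of its root $r_{T^{(j)}}$ to zero (since $p(r_{T^{(j)}})$ need not vanish in $T$), which introduces a $-p(r_{T^{(j_0)}})$ correction in the profit inequality that is absorbed by using $\dens_T(T^{(j_0)})=(p(T^{(j_0)})-p(r_{T^{(j_0)}}))/c(T^{(j_0)})$ in the averaging step.
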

\begin{proof}
    Since $r_T$ needs to be contained in both $F$ and $T$ and $\delta,\lambda\leq 1$, we may assume without loss of generality that $p(r_T)=0$.
    Analogously to the density-greedy algorithm, we recursively define a sequence of min-max subtrees of $T$ as follows.
    Let $T_1^*$ be a fixed min-max subtree of~$T$.
    We denote by $T_i\subseteq T$ the extension of $T_i^*$. Note that $T_1^*=T_1$.
    For all $i>1$, we obtain~$T_i^*$ by fixing an arbitrary min-max subtree of $\contract{T}{\overbar{T}_{i-1}}$, where $\overbar{T}_{i-1}$ is defined by~$\overbar{T}_{i}\coloneqq \bigcup_{j=1}^i T_j$ with $\overbar{T}_0=(\{r_T\},\emptyset)$.
    By \Cref{lem:alpern_and_lidbetter} (\ref{item:alpern3}), the min-max subtree $T_i^*$ has only one branch at the root vertex $r_T$ of $\contract{T}{\overbar{T}_{i-1}}$.
    Thus, every extension $T_i$ is a connected subgraph of $T$.
    By construction, we obtain $\dens_T(T_i)\leq \dens(\overbar{T}_i)\leq \dens(\overbar{T}_{i-1})$. Let $r_i$ be the vertex contained in~$T_i$ that has shortest distance to $r_T$ in $T$.

    We may assume without loss of generality that $T$ has no leafs with prize $0$ as we could otherwise just consider $T$ without those vertices.
    Then, there is some finite $n\in \N$ such that~$T=\overbar{T}_n$ and $\dens_T(T_i)>0$ for all $1\leq i\leq n$. Recall that the density of~$T_i$ over~$T$ is given by~$\dens_T(T_i)=\bigl(p(T_i)-p(r_i)\bigr)\big/c(T_i)$.
    Hence, we can obtain the following bound on the density of $T$ for every $1\leq i\leq n$ and every $C\leq c(T_i)$.
    \begin{multline}\label{eq:densities_min_max_trees}
        \dens(T)
        =\dens(\overbar{T}_n)
        \leq \dens(\overbar{T}_i)
        =\frac{p(\overbar{T}_i)}{c(\overbar{T}_i)}
        =\frac{p(\overbar{T}_{i-1})+p(T_i)-p(r_i)}
        {c(\overbar{T}_{i-1})+c(T_i)} \\
        =\frac{\dens(\overbar{T}_{i-1})c(\overbar{T}_{i-1})+\dens_T(T_i)c(T_i)}
        {c(\overbar{T}_{i-1})+c(T_i)}
        \leq \frac{\dens(\overbar{T}_{i-1})c(\overbar{T}_{i-1})+\dens_T(T_i)C}
        {c(\overbar{T}_{i-1})+C},
    \end{multline}
    where for the last inequality we used the fact that $\dens_T(T_i)\leq \dens(\overbar{T}_{i-1})$.

    Let $j\coloneqq \argmin\{1\leq i\leq n : \lambda c(T)\leq c(\overbar{T}_i)\}$.
    Then, we have $0\leq \lambda c(T)-c(\overbar{T}_{j-1})\leq c(T_j)$, which yields
    $\lambda'\coloneqq \bigl(\lambda c(T)-c(\overbar{T}_{j-1})\bigr)\big/c(T_j)\in [0,1]$.

    Since $T_j^*$ is a min-max subtree of $\contract{T}{\overbar{T}_{j-1}}$, $T_j$ is a min-max tree with root $r_j$ for the prize function~$p'(r_j)=0$ and $p'(v)=p(v)$, else.
    Thus, we may fix an arbitrary $\delta\in[0,1]$ and apply the assumption of the lemma. 
    As a result, we obtain a forest $F_j\subseteq T_j$ that contains $r_j$, consists of at most $k$ components, has a total cost of at most
    $c(F_j)\leq \lambda'c(T_j)=\lambda c(T)-c(\overbar{T}_{j-1})$ and collects a total prize of at least~$p(F_j)-p(r_j)=p'(F_j)\geq \delta \lambda' p'(T_j)=\delta \lambda' (p(T_j)-p(r_j))$.
    Since~$r_j$ is also contained in~$\overbar{T}_{j-1}$, we can set $F\coloneqq F_j\cup \overbar{T}_{j-1}$ without gaining another component.
    Further, $F$ has a total cost of at most $c(F)\leq c(F_j)+c(\overbar{T}_{j-1})\leq \lambda c(T)$.
    To conclude the proof, it remains to show that $F$ collects at least a total prize of $\delta \lambda p(T)$.
    \begin{align*}
        p(F)
        &= p(\overbar{T}_{j-1}) + p(F_j) - p(r_j) \\
        &\geq p(\overbar{T}_{j-1}) + \delta \lambda' (p(T_j) - p(r_j))\\
        &= p(\overbar{T}_{j-1}) + \delta \frac{\lambda c(T)-c(\overbar{T}_{j-1})}{c(T_j)} (p(T_j) - p(r_j))\\
        &\geq \delta\left[ \dens(\overbar{T}_{j-1})c(\overbar{T}_{j-1}) + \frac{p(T_j) - p(r_j)}{c(T_j)} \bigl(\lambda c(T)-c(\overbar{T}_{j-1})\bigr) \right]
    \end{align*}
    Letting $C\coloneqq \lambda c(T)-c(\overbar{T}_{j-1})\leq c(T_j)$ and using \eqref{eq:densities_min_max_trees}, we conclude
    \begin{align*}
        p(F)
        &\geq \delta\left[ \dens(\overbar{T}_{j-1})c(\overbar{T}_{j-1}) + \dens_T(T_j) C \right]\\
        &\geq \delta \dens(T) \bigl(c(\overbar{T}_{j-1})+C\bigr)\\
        &= \delta \lambda \dens(T) c(T)\\
        &= \delta \lambda p(T). \qedhere
    \end{align*}
\end{proof}

Using the results of \Cref{lem:lambda_portion_min_max_arb}, we can show the following.
\pagebreak[4]

\begin{restatable}{lemma}{lemlambdaportion}\label{lem:lambda_portion}
    Let $T$ be a tree with root $r_T$ and let $\lambda\in[0,1]$ and $k\in \N$.
    Then there exists a forest $F=(V_F,E_F)\subseteq T$ with $r_T\in V_F$ and at most $k$ components such that
    \begin{align*}
        c(F) \leq \lambda c(T) \quad \text{and} \quad p(F) \geq \left(1-2^{-k+1}\right) \lambda p(T).
    \end{align*}
\end{restatable}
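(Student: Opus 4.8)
The plan is to reduce the statement to \emph{min-max trees} by means of \Cref{lem:lambda_portion_min_max_arb} and then to argue by induction on $k$. As a preliminary normalization, note that since $F$ is required to contain $r_T$, lowering the prize of $r_T$ to $0$ decreases $p(F)$ and $p(T)$ by the same amount, whereas the target $(1-2^{-k+1})\lambda\,p(T)$ decreases by at most that amount because $(1-2^{-k+1})\lambda\le1$; hence we may assume $p(r_T)=0$. Setting $\delta:=1-2^{-k+1}$, \Cref{lem:lambda_portion_min_max_arb} shows that it suffices to prove the following: for every \emph{min-max tree} $T'$ with root $r_{T'}$ and every $\lambda'\in[0,1]$ there is a forest $F'\subseteq T'$ with $r_{T'}\in V_{F'}$, at most $k$ components, $c(F')\le\lambda'c(T')$ and $p(F')\ge(1-2^{-k+1})\lambda'p(T')$.

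I would prove this by induction on $k$. For $k=1$ the required bound is $p(F')\ge0$, so $F'=(\{r_{T'}\},\emptyset)$ suffices. For the inductive step, assume the claim for $k-1$; by the reduction this yields the statement of \Cref{lem:lambda_portion} for $k-1$ for all trees, which I may freely use. Let $T'$ be a min-max tree with root edge $e_1=(r_{T'},u_1)$ of cost $c_1$, let $\lambda'\in[0,1]$, set $B':=\lambda'c(T')$, and let $T'':=T'_{u_1}$ be the branch of $T'$ rooted at $u_1$; then $c(T'')=c(T')-c_1$ and, since $p(r_{T'})=0$, $p(T'')=p(T')$ and $p(T'')/c(T'')\ge\dens(T')$. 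I split into two cases according to whether $c_1\ge\frac{2^{-k+1}}{1-2^{-k+1}}\,c(T')$, which is precisely the condition $\frac{p(T'')/c(T'')}{\dens(T')}=\frac{c(T')}{c(T')-c_1}\ge\frac{1-2^{-k+1}}{1-2^{-k+2}}$.

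When $c_1$ is large, the plan is to take $\{r_{T'}\}$ as one component and to apply the induction hypothesis to $T''$, rooted at $u_1$, with ratio $\mu:=\min\{B'/c(T''),1\}$. If $\mu<1$ this produces a forest in $T''$ with at most $k-1$ components, cost at most $B'$, and prize at least $(1-2^{-k+2})\mu\,p(T'')$; since $\mu=\lambda'\,\frac{c(T')}{c(T')-c_1}\ge\lambda'\,\frac{1-2^{-k+1}}{1-2^{-k+2}}$, the factor $1-2^{-k+2}$ upgrades to exactly $1-2^{-k+1}$, and adjoining the disjoint component $\{r_{T'}\}$ keeps the component count at $\le k$. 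If $\mu=1$ one instead takes $\{r_{T'}\}\cup T''$, which has two components, cost $c(T'')\le B'$, and prize $p(T'')=p(T')\ge(1-2^{-k+1})\lambda'p(T')$. When $c_1$ is small, I would instead run the density-greedy order on $T'$ itself: as the only min-max subtree of the min-max tree $T'$ is $T'$, \Cref{lem:optimal_on_first_tree} and \Cref{cor:boundalg} applied to $T'$ show that the connected prefix of cost $\le B'$ collects at least $\dens(T')\bigl(B'-\chi_{T'}\bigr)$, where $\chi_{T'}$ denotes the eccentricity of $r_{T'}$ inside $T'$; this single component already meets the target as soon as $\chi_{T'}\le2^{-k+1}B'$. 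Otherwise --- the remaining regime, in which the root path of $T'$ is long but cheap --- I descend along a longest root path to a carefully chosen vertex $x$, use the $r_{T'}$-$x$ path as a component, and apply the $(k-1)$-case to the branch $T'_x$, whose density exceeds $\dens(T')$ by \Cref{lem:higher_density_in_induced_subtrees_graphs}, splitting $B'$ between the path and $T'_x$ so that, as above, the factor $1-2^{-k+2}$ becomes $1-2^{-k+1}$.

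The reduction and the base case are routine, and the large-$c_1$ case goes through cleanly; the main obstacle will be the last regime of the inductive step, where the root path of $T'$ is cheap yet $\chi_{T'}$ is a sizeable fraction of $B'$, so that no component anchored at $r_{T'}$ is efficient. The delicate part is to choose which sub-branch to descend into and how to divide the budget so that the geometric gain from $1-2^{-k+2}$ (the induction hypothesis) to $1-2^{-k+1}$ (the target) is realized while respecting $\mu\le1$, $\lambda'\le1$, and the cost bound $c(F')\le B'$.
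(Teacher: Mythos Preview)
Your reduction via \Cref{lem:lambda_portion_min_max_arb}, the base case, and the ``large $c_1$'' branch are sound, but the final regime is not merely delicate: descending along a single root path cannot work. Take the min-max tree $T'$ with root $r_{T'}$ (prize $0$), an edge of cost $\varepsilon$ to $u_1$ (prize $0$), and three unit-cost edges from $u_1$ to leaves of unit prize; set $k=3$ and $\lambda'=\tfrac12$, so the target prize is $(1-2^{-2})\cdot\tfrac12\cdot 3=\tfrac98$. Every root path passes through $u_1$ to one leaf. If $x$ is a leaf, the path costs $1+\varepsilon$ and collects prize $1<\tfrac98$, while $T'_x$ is a single vertex. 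If $x=u_1$, the path has prize $0$, and the $(k{-}1)$-case applied to the star $T'_{u_1}$ with the remaining budget $\approx\tfrac32$ returns prize at most $(1-2^{-1})\cdot\tfrac12\cdot 3=\tfrac34<\tfrac98$. The density gain from $T'$ to $T'_{u_1}$ is only $1+O(\varepsilon)$, nowhere near the factor $\tfrac{1-2^{-k+1}}{1-2^{-k+2}}=\tfrac32$ you need. The structural point is that when the tree fans out at a shallow vertex, the budget must be spread over \emph{several sibling branches}, which a single path cannot achieve.

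The paper's inductive step supplies exactly this mechanism. Orient each edge $e$ of the min-max tree $T$ towards the root iff $c(T_e)\le\lambda c(T)$, and take a vertex $v$ with only incoming edges. If $c(T_v)\le\lambda c(T)$, the edge $e$ above $v$ is oriented towards $v$, so $c(T_e)>\lambda c(T)$ and $p(T_v)=\dens_T(T_e)\,c(T_e)>\lambda\,p(T)$ by \Cref{lem:higher_density_in_induced_subtrees_graphs}; take $F=T_v\cup\{r_T\}$. Otherwise every child branch of $v$ has cost at most $\lambda c(T)$, so an inclusion-minimal family $U$ of child branches with total cost exceeding $\lambda c(T)$ has $|U|\ge2$, and one can split $U$ into one branch $T_{e^*}$ and the union $T_{-e^*}$ of the rest with $\tfrac12\lambda c(T)<c(T_{-e^*})\le\lambda c(T)$. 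Keep $T_{-e^*}$ whole, recurse with $k-1$ on $T_{e^*}$ for the leftover budget, and adjoin $\{r_T\}$. The lower bound $c(T_{-e^*})>\tfrac12\lambda c(T)$ is precisely what supplies the extra $2^{-k+1}\lambda\,p(T)$ that upgrades $1-2^{-k+2}$ to $1-2^{-k+1}$. In the star above this yields two leaves (one via $T_{-e^*}$, one isolated from the recursion) together with $\{r_{T'}\}$, giving prize $2\ge\tfrac98$ with three components.
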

\begin{proof}
    Since $r_T$ needs to be contained in both $F$ and $T$ and $\lambda\leq 1$, we may assume without loss of generality that $p(r_T)=0$.
    Note that for $c(T)=0$ or $\lambda=1$ setting $F=T$ fulfills the desired properties.
    Thus, we may assume for the remaining part of this proof that $\lambda<1$ and~$c(T)>0$.

    We prove the lemma via induction on $k$. For $k=1$, the forest $F=(\{r_T\},\emptyset)$ is the trivial solution.
    For the induction step we assume that for an arbitrary but fixed integer $k-1\geq 1$ the statement holds true.
    By \Cref{lem:lambda_portion_min_max_arb}, it suffices to prove the lemma for $k$ under the assumption that $T=(V_T,E_T)$ is a min-max tree with root $r_T$.
    To this end, we give each edge $e\in E_T$ a direction to find the forest $F$ in $T$ that fulfills the desired properties.
    For this purpose, let $e=\{u,v\}$ be an edge of $T$ and assume that $u$ is the vertex that has the shorter distance to $r_T$ in $T$.
    Recall that by $T_e$ we denote the branch of $T$ rooted at $e$.
    We direct~$e$ from~$v$ to~$u$ (i.e., towards $r_T$) if $c(T_e)\leq \lambda c(T)$ and from $u$ to $v$ else.
    Since~$T$ is a tree, it follows that there exists a vertex $v\in V_T$ that has only incoming edges.
    Let~$T_v\subseteq T$ be the branch of~$T$ rooted in $v$. We distinguish two cases.

    First, assume $c(T_v)\leq \lambda c(T)$.
    We claim that the forest $F\coloneqq T_v\cup \{r_T\}$ fulfills the desired properties.
    Since $\lambda<1$ and $T_{r_T}=T$, we have $c(T_v)<c(T)$ and hence, $v\neq r_T$.
    Thus, there is a unique edge~$e=\{u,v\}$ such that $u$ lies on the unique $r_T$-$v$-path in $T$.
    Since~$v$ has only incoming edges, $e$ is directed from~$u$ to~$v$ and, therefore, $c(T_e)>\lambda c(T)$ must hold.
    With~$T$ being a min-max tree, we can apply \Cref{lem:higher_density_in_induced_subtrees_graphs} and obtain $\dens_T(T_e)\geq \dens_T(T)$.
    Together with~$p(T_v)=c(T_e)\dens_T(T_e)$, it follows that
    \begin{align*}
        p(F)
        =p(T_v)
        =c(T_e)\dens_T(T_e)
        >\lambda c(T) \dens_T(T)
        =\lambda p(T)
        \geq \left(1-2^{-\kappa+1}\right) \lambda p(T).
    \end{align*}
    Since $T_v$ is connected, the forest $F$ has at most two components and fulfills the desired properties.

    \begin{figure}
        \centering
        \begin{subfigure}[c]{0.49\textwidth}
            \centering
            \begin{tikzpicture}
                \tikzset{mininode/.style = {circle, fill, minimum size=2pt, inner sep=1pt},
                    weightnode/.style = {rectangle, draw, minimum width=0.5cm,
                                        minimum height = 0.5cm},
                }
                \node[mininode,label=above:{$r_T$}] at (-1.25,1.25) (r) {};
                \node[mininode,label=right:$v$] at (0,0) (v) {};
                \node[mininode] at (-1.5,-1) (v1) {};
                \node[mininode] at (-0.66,-1.25) (vl-1) {};
                \node[mininode] at (0.66,-1.25) (vl) {};
                \node[mininode] at (1.5,-1) (vn) {};
        
                \draw (r) -- (-1.5,0.95);
                \draw (r) -- (-0.95,0.95);
                \draw (-0.5,0.5) -- (v);
                \draw[BurntOrange] (v) --node[above left=-1mm] {$U$} (v1);
                \draw[BurntOrange] (v) --node[left, near end] {$\dots$} (vl-1);
                \draw[BurntOrange] (v) --node[above right=-1.2mm, pos=0.9] {$e^*$} (vl);
                \draw (v) -- (vn);
                \draw (v1) -- (-1.8,-1.4);
                \draw (v1) -- (-1.4,-1.4);
                \node (dots1) at (-1.6,-1.55) {$\dots$};
                \draw (vl-1) -- (-0.96,-1.6);
                \draw (vl-1) -- (-0.36,-1.6);
                \node (dots2) at (-0.66,-1.75) {$\dots$};
                \draw (vl) -- (0.4,-1.6);
                \draw (vl) -- (0.75,-1.6);
                \draw (vl) -- (1,-1.6);
                \node (dots3) at (0.7,-1.75) {$\dots$};
                \draw (vn) --  node[at end, below=1pt] {$\dots$} (1.8,-1.4);
        
                \node[label=below:{\dots}] at (-1.6,0.95) {}; 
                \node[label=below:{\dots}] at (-0.7,0.95) {}; 
                
                \draw[rounded corners, Red] (0.7,0.3) -- (-0.2,-2) -- (-2.2,-2) --node[above left] {$T_{-e^*}$} (-1.7,-0.3) -- (-0.3,0.3) -- cycle;
                \draw[rounded corners, MidnightBlue] (0.7,0.2) -- (1.3,-2) --node[below] {$T_{e^*}$} (0.3,-2) -- (-0.2,0.2) -- cycle;
            \end{tikzpicture}
            \caption{Finding subtrees $T_{e^*}$ and $T_{-e^*}$}
            \label{fig:forest_more_components_trees_for_recursion}
        \end{subfigure}
        \hfill
        \begin{subfigure}[c]{0.49\textwidth}
            \centering
            \begin{tikzpicture}
                \tikzset{mininode/.style = {circle, fill, minimum size=2pt, inner sep=1pt},
                    weightnode/.style = {rectangle, draw, minimum width=0.5cm,
                                        minimum height = 0.5cm},
                }
                \node[mininode,label=above:{$r_T$}] at (-1.25,1.25) (r) {};
                \node[mininode,label=right:$v$] at (0,0) (v) {};
                \node[mininode] at (-1.5,-1) (v1) {};
                \node[mininode] at (-0.66,-1.25) (vl-1) {};
                \node[mininode] at (0.66,-1.25) (vl) {};
                \node[mininode] at (1.5,-1) (vn) {};
        
                \draw (r) -- (-1.5,0.95);
                \draw (r) -- (-0.95,0.95);
                \draw (-0.5,0.5) -- (v);
                \draw (v) -- (v1);
                \draw (v) -- node[left, near end] {$\dots$} (vl-1);
                \draw (v) -- (vl);
                \node [above=9.5pt of vl.north] {$\dots$};
                \draw (v) -- (vn);
                \draw (v1) -- (-1.8,-1.4);
                \draw (v1) -- (-1.4,-1.4);
                \node (dots1) at (-1.6,-1.55) {$\dots$};
                \draw (vl-1) -- (-0.96,-1.6);
                \draw (vl-1) -- (-0.36,-1.6);
                \node (dots2) at (-0.66,-1.75) {$\dots$};
                \draw (vl) -- (0.3,-1.8);
                \node[mininode] at (0.3,-1.8) (t) {};
                \draw (t) -- (0.1,-2.1);
                \draw (t) -- (0.4,-2.1);
                \node (dots5) at (0.25,-2.3) {$\dots$};
                \draw (vl) -- (0.7,-1.7);
                \node (dots4) at (0.8,-1.9) {$\dots$};
                \draw (vl) -- (1.1,-1.7) -- (1.2,-1.8);
                \node[mininode] at (1.1,-1.7) {};
                \node (dots3) at (1.3,-2) {$\dots$};
                \draw (vn) --  node[at end, below=1pt] {$\dots$} (1.8,-1.4);
        
                \node[label=below:{\dots}] at (-1.6,0.95) {}; 
                \node[label=below:{\dots}] at (-0.7,0.95) {}; 
                
                \draw[rounded corners, Red] (0.6,0.2) -- (-0.2,-2) -- (-2.2,-2) --node[above left] {$T_{-e^*}$} (-1.7,-0.3) -- (-0.3,0.2) -- cycle;
                \draw[ForestGreen] (-1.25,1.4) circle [radius=0.4];
                \draw[MidnightBlue] (0.15,-0.1) circle [radius=0.3];
                \draw[MidnightBlue] (1.2,-1.9) circle [radius=0.3];
                \draw[rounded corners, MidnightBlue] (0.35,-1.6) -- (0.6,-2.4) node[right] {$F^*$} -- (-0.2,-2.4) -- cycle;
            \end{tikzpicture}
            \caption{The forest $F$ is given by $F=F^*\cup T_{-e^*}\cup \{r_T\}$}
            \label{fig:forest_with_more_components_recursion}
        \end{subfigure}
        \caption{Recursion to find forest $F$ in the proof of \Cref{lem:lambda_portion}}
        \label{fig:forest_with_more_components}
    \end{figure}
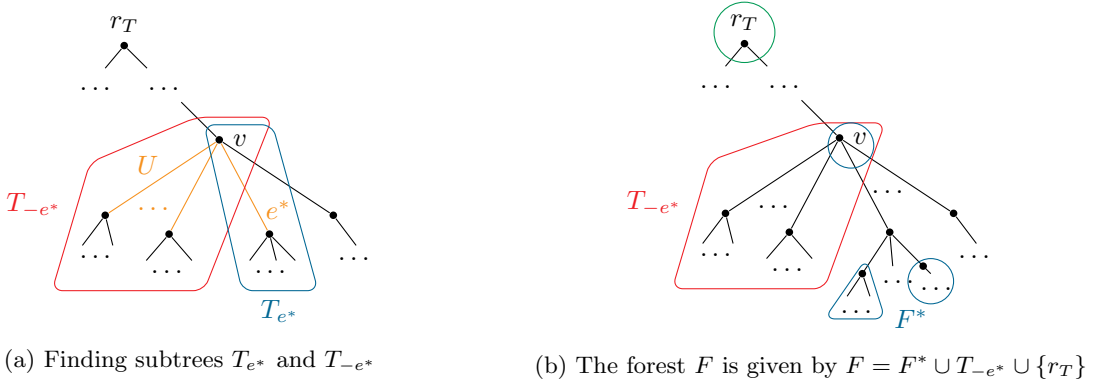

    It remains to consider the case where $c(T_v) >\lambda c(T)$.
    Let $e_1,\dots,e_j$ be the edges of $T_v$ that are incident to $v$.
    Since all edges $e_1,\dots,e_j$ are directed to $v$, $c(T_{e_i})\leq \lambda c(T)$ holds for all~$1\leq i\leq j$.
    Let~$U\subseteq \{e_1,\dots,e_j\}$ be inclusion-wise minimal such that $\lambda c(T)< \sum_{e\in U} c(T_{e})$.
    In particular, $|U|\geq 2$ since $c(T_{e_i})\leq \lambda c(T)$ for all $1\leq i\leq j$.
    Hence, there exists an edge~$e^*\in U$ with $c(T_{e^*})\leq \sum_{e\in U} c(T_{e})/2$.
    We denote by $T_{-e^*}\coloneqq \bigcup_{e\in U \setminus \{e^*\}}T_{e}$ (see \Cref{fig:forest_more_components_trees_for_recursion}).
    Then, we have $\sum_{e\in U} c(T_{e})=c(T_{e^*})+c(T_{-e^*})$ and therefore $c(T_{-e^*})\geq \sum_{e\in U} c(T_{e})/2>\lambda c(T)/2$.
    Combining the upper and lower bounds for $c(T_{-e^*})$, we obtain
    $\lambda c(T)/2 < c(T_{-e^*}) \leq \lambda c(T)$.
    We denote by $\lambda' \coloneqq \bigl(\lambda c(T)-c(T_{-e^*})\bigr)\big/c(T_{e^*})$.
    With $c(T_{-e^*})\leq \lambda c(T)$ and $\lambda c(T)-c(T_{-e^*})<\sum_{e\in U}c(T_e)-c(T_{-e^*})=c(T_{e^*})$, it follows that $\lambda'\in [0,1]$.
    Thus, we may apply the induction hypothesis for $k-1$ and $\lambda'$ on the tree $T_{e^*}$ with root vertex $v$. Thereby, we obtain a forest~$F^*=(V_{F^*},E_{F^*})\subseteq T_{e^*}$ with $v\in V_{F^*}$ and at most $k-1$ components such that
    \begin{align*}
        c(F^*) \leq \lambda' c(T_{e^*}) \quad \text{and} \quad p(F^*) \geq \left(1-2^{-k+2}\right) \lambda' p(T_{e^*}).
    \end{align*}
    We set $F\coloneqq F^*\cup T_{-e^*}\cup \{r_T\}$ as depicted in \Cref{fig:forest_with_more_components_recursion} and show that this forest fulfills the desired properties.

    To this end, first note that one component of $F^*$ contains the root $v$ of both $T_{e^*}$ and~$T_{-e^*}$ and thus,~$F$ has at most $k$ components and contains $r_T$.
    Further, the total cost of $F$ can be upper bounded by
    \begin{align*}
        c(F)
        =c(F^*)+c(T_{-e^*})
        \leq \lambda' c(T_{e^*}) + c(T_{-e^*})
        =\lambda c(T).
    \end{align*}
    It remains to show that $p(F)\geq \left(1-2^{-k+1}\right) \lambda p(T)$.
    In this regard, recall that we assumed $T$ to be a min-max tree rooted at $r_T$ and thus, we have $\dens_T(T_e)\geq \dens_T(T)$ for every edge $e\in E_T$ by \Cref{lem:higher_density_in_induced_subtrees}. In particular, this shows $\dens_T(T_{e^*})\geq \dens_T(T)$ and $\dens_T(T_{-e^*})\geq \dens_T(T)$.
    Combining this with all previous observations, we obtain
    \begin{align*}
        p(F)
        &=p(F^*)+p(T_{-e^*})-p(v)\\
        &\geq \left(1-2^{-k+2}\right) \lambda' p(T_{e^*}) +p(T_{-e^*})-p(v)\\
        &\geq \left(1-2^{-k+2}\right) \frac{\lambda c(T)-c(T_{-e^*})}{c(T_{e^*})} \dens_T(T_{e^*})c(T_{e^*}) + \dens_T(T_{-e^*})c(T_{-e^*})\\
        &\geq \dens_T(T) \left[ \left(1-2^{-k+2}\right) \bigl(\lambda c(T)-c(T_{-e^*})\bigr) + c(T_{-e^*}) \right]\\
        &= \dens_T(T)\bigl[ \left(1-2^{-k+2}\right)\lambda c(T) +2^{-k+2}c(T_{-e^*}) \bigr] \\
        & > \dens_T(T)\bigl[ (1-2^{-k+2})\lambda c(T) +2^{-k+1}\lambda c(T) \bigr]\\
        &= (1+2^{-k+1})\lambda p(T).
    \end{align*}
    We conclude that there exists a forest $F$ with the desired properties whenever $T$ is a min-max tree with root $r_T$. As we have argued above, using \Cref{lem:lambda_portion_min_max_arb} the claim then also follows for any tree $T$ rooted in~$r_T$. This finishes the proof.
\end{proof}

The following corollary will be useful for the analysis of the capacity-scaling algorithm.

\begin{restatable}{corollary}{corscalingopt}\label{cor:scaling_opt}
    For every $\delta \geq 1$, $h \in \N$, and every cost budget $B\in\R_{\geq 0}$, we have that~$p(\OPT(B+h \chi))\geq (1-2^{-h})\delta^{-1} p(\OPT(\delta B))$.
\end{restatable}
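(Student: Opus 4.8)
The plan is to apply \Cref{lem:lambda_portion} to the optimal tree $\OPT(\delta B)$ with the scaling factor $\lambda\coloneqq 1/\delta$ and with $k\coloneqq h+1$ components, and then to reconnect the resulting forest to the root at an extra cost of at most $h\chi$. The choice $k=h+1$ is forced: it is exactly what turns the prize factor $1-2^{-k+1}$ of \Cref{lem:lambda_portion} into the required $1-2^{-h}$, and at the same time a forest with at most $h+1$ components — one of which already contains $r$ — can be made connected by adding at most $h$ root-to-vertex paths, each of cost at most $\chi$, which matches the additive slack.

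Concretely, I would fix $\delta\geq 1$, $h\in\N$, and $B\in\R_{\geq 0}$, and dispose of the case $p(\OPT(\delta B))=0$ immediately since then the claim is trivial. Otherwise I set $T\coloneqq\OPT(\delta B)$, which is a rooted subtree of $G$ with root $r$ (and $p(r)=0$ by the standing assumption) and cost $c(T)\leq\delta B$. As $\delta\geq 1$, we have $\lambda=1/\delta\in(0,1]$, so \Cref{lem:lambda_portion} applied to $T$ with this $\lambda$ and $k=h+1$ yields a forest $F=(V_F,E_F)\subseteq T$ with $r\in V_F$, at most $h+1$ components, and
\[
c(F)\leq \tfrac{1}{\delta}\,c(T)\leq B, \qquad p(F)\geq \bigl(1-2^{-(h+1)+1}\bigr)\tfrac{1}{\delta}\,p(T)=(1-2^{-h})\,\delta^{-1}\,p(\OPT(\delta B)).
\]

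Then I would turn $F$ into a feasible rooted subtree: let $C_0$ be the component of $F$ containing $r$ and $C_1,\dots,C_m$ the other components, so $m\leq h$; for each $i$ pick a vertex $v_i\in C_i$ and let $P_i$ be a minimum-cost $r$-$v_i$-path in $G$, which exists and has cost $\ell(v_i)\leq\chi$ (finiteness of $\chi$ is what makes this work). The subgraph $H\coloneqq F\cup P_1\cup\dots\cup P_m$ is connected and contains $r$, so any spanning tree $T'$ of $H$ rooted at $r$ lies in $\T$, satisfies $c(T')\leq c(F)+\sum_{i=1}^m c(P_i)\leq B+m\chi\leq B+h\chi$, and contains all of $V_F$, hence $p(T')\geq p(F)$. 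Therefore $p(\OPT(B+h\chi))\geq p(T')\geq p(F)\geq(1-2^{-h})\,\delta^{-1}\,p(\OPT(\delta B))$. The argument is essentially bookkeeping; the only delicate point — already identified above — is making the two parameter bounds in \Cref{lem:lambda_portion} (the $h+1$ components and the reconnection cost of at most $h\chi$) line up, so I do not anticipate a real obstacle here.
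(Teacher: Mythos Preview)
Your proposal is correct and follows essentially the same approach as the paper: apply \Cref{lem:lambda_portion} to $\OPT(\delta B)$ with $\lambda=\delta^{-1}$ and $k=h+1$, then reconnect the at most $h$ non-root components via shortest $r$-paths of cost at most $\chi$ each. Your write-up is in fact a bit more careful about the reconnection step (explicitly taking a spanning tree of the union), but the argument is the same.
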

\begin{proof}
    We apply \Cref{lem:lambda_portion} to $T=\OPT(\delta B)$ with root $r$, $\lambda=\delta^{-1}\in[0,1]$ and $k=h+1$ components and obtain a forest~$F=(V_F,E_F)$ with $r\in V_F$, at most $h+1$ components, a total cost of at most $c(F) \leq \delta^{-1} c(\OPT(\delta B)) \leq B$ and a total collected prize of at least $p(F) \geq (1-2^{-h})\delta^{-1} p(\OPT(\delta B))$.
    Since every component of $F$ can be connected to the root with cost at most $\chi$, there exists a rooted subtree $T^*\subseteq T$ with cost at most $c(T^*) \leq c(F) + h\chi \leq B + h\chi$, that collects at least a total prize of $p(T^*)\geq p(F) \geq (1-2^{-h})\delta^{-1} p(\OPT(\delta B))$. This yields the lower bound on the collected prize of $\OPT(B+h\chi)$ claimed in the corollary.
\end{proof}

We are now ready to prove the main theorem of this subsection.

\mainthmscalingmoreheadstart*
\begin{proof}
    For cost budget $B\in\R_{\geq 0}$, let $k\in \N$ be such that $B\in \left[ \left(2^{k+1}-4\ell\right)\chi, \left(2^{k+2}-4\ell\right)\chi \right)$.
    We denote by~$\ALG(B)$ the solution of the capacity-scaling algorithm for cost budget $B$.
    Since $\ALG$ grows monotonously, it holds that 
    \begin{align} \label{eq:scaling_alg_analysis_1}
        p(\ALG(B+(4\ell-1)\chi))
        &\geq p(\ALG((2^{k+1}-1)\chi)).
    \end{align}
    Let $T_0, T_1, \dots$ be the subtrees of $G$ computed by the capacity-scaling algorithm (\Cref{alg:cap}).
    Note that 
    \begin{align*}
        \sum_{i=0}^{k} c(T_i)
        \leq \sum_{i=0}^{k} 2^i \chi
        = (2^{k+1}-1) \chi.
    \end{align*}
    Thus, $\ALG((2^{k+1}-1) \chi)$ contains the tree $T_k$, i.e.,
    \begin{align} \label{eq:scaling_alg_analysis_2}
        p(\ALG((2^{k+1}-1)\chi))
        &\geq p(T_{k}).
    \end{align}
    Let $\OPT(2^k\chi)$ be the optimal solution in~$G$ for cost budget $2^k\chi$.
    Then, a spanning tree of~$\OPT(2^k\chi)$ is a potential candidate for the capacity-scaling algorithm when computing tree~$T_k$, which thus yields $p(\OPT(2^k\chi))= p(T_k)$.
    Finally, we use \Cref{cor:scaling_opt} with $\delta=4$, $h=\ell$ and cost budget $B_k\coloneqq(2^k-\ell)\chi \geq \frac{B}{4}$ and together with inequalities $(\ref{eq:scaling_alg_analysis_1})$ and $(\ref{eq:scaling_alg_analysis_2})$ we obtain the claim as follows
    \begin{align}
        p(\ALG(B+(4\ell-1)\chi))
        &\geq p(\ALG((2^{k+1}-1)\chi)) \notag\\
        &\geq p(T_{k}) \notag\\
        &= p(\OPT(2^k\chi))\label{eq:capacity}\\
        &= p(\OPT(B_k+\ell \chi))\notag\\
        &\geq \frac{1-2^{-\ell}}{4}p\bigl(\OPT(4 B_k)\bigr)\notag\\
        &\geq \frac{2^\ell-1}{2^{\ell+2}} p\bigl(\OPT(B)\bigr). \notag \qedhere
    \end{align}
\end{proof}

The computational bottleneck of the capacity-scaling algorithm is the solution of the prize-collecting Steiner tree problem with budget $2^i \chi$ in Step~\ref{it:solve-budget-problem} of \Cref{alg:cap}. 
The $\mathsf{NP}$-hardness of the Steiner tree problem implies the hardness of the budget variant of the prize-collecting Steiner tree problem, so there is no polynomial implementation of Step~\ref{it:solve-budget-problem}, unless $\mathsf{P}=\mathsf{NP}$.
In terms of approximations, only a quasi-polynomial algorithm with a poly-logarithmic approximation guarantee is known (Ghuge and Nagarajan~\cite{GhugeN22}); see also the discussion by Paul et al.~\cite{Paul23erraturm}. We leave it as an open problem to devise a polynomial algorithm that is $(O(1)\chi, O(1))$-competitive.

\subsection{Lower Bound for General Graphs}\label{sec:lb-gg}

We complement our algorithmic results with a lower bound for general graphs.
For the proof, we first construct a small graph for which there is a trade-off between the maximal prize that can be collected at budgets $3$ and $7$, respectively. A careful analysis of the graph that has an arbitrary large number of copies of this graph joined at the root then yields the following result.
	
\thmlowerboundgraph*
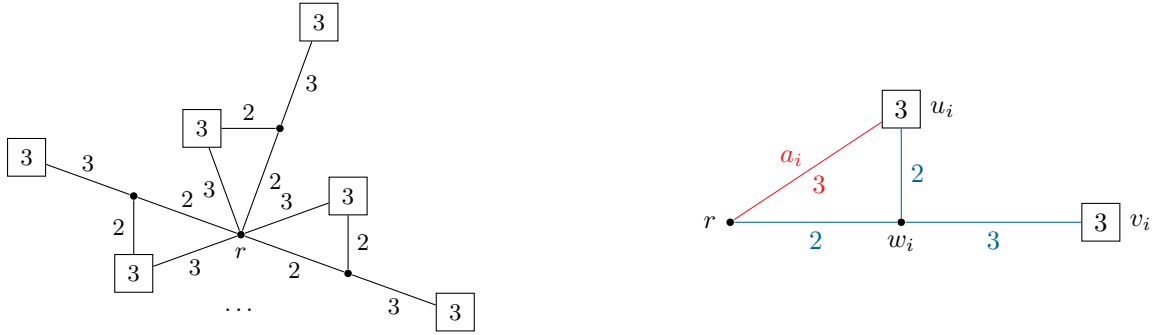
\begin{figure}
        \centering
        \begin{subfigure}[c]{0.49\textwidth}
            \begin{tikzpicture}\small
             \tikzset{mininode/.style = {circle, fill, minimum size=2pt, inner sep=1pt},
                weightnode/.style = {rectangle, draw, minimum width=0.5cm,
                                    minimum height = 0.5cm},
            }
                \node[mininode,label=below:{$r$}] at (0,0) (r) {};
                \node[mininode] at (-20:1.5) (A1) {};
                \node[weightnode] at (20:1.5) (B1) {$3$};
                \node[weightnode] at (-20:3) (C1) {$3$};
                \draw (r) -- node[below]{$2$} (A1);
                \draw (r) -- node[above]{$3$} (B1);
                \draw (A1) -- node[right]{$2$} (B1);
                \draw (A1) -- node[below] {$3$} (C1);
                
                \node[mininode] at (70:1.5) (A2) {};
                \node[weightnode] at (110:1.5) (B2) {$3$};
                \node[weightnode] at (70:3) (C2) {$3$};
                \draw (r) -- node[right]{$2$} (A2);
                \draw (r) -- node[left]{$3$} (B2);
                \draw (A2) -- node[above]{$2$} (B2);
                \draw (A2) -- node[right] {$3$} (C2);
    
                \node[mininode] at (160:1.5) (A3) {};
                \node[weightnode] at (200:1.5) (B3) {$3$};
                \node[weightnode] at (160:3) (C3) {$3$};
                \draw (r) -- node[above]{$2$} (A3);
                \draw (r) -- node[below]{$3$} (B3);
                \draw (A3) -- node[left]{$2$} (B3);
                \draw (A3) -- node[above] {$3$} (C3);
    
                \node at (270:1) (dots) {$\dots$};
            \end{tikzpicture}
            \caption{The instance $G$ consists of $n$ copies of the graph~$G_i$ (on the right) that are all merged in the root vertex~$r$.}
            \label{fig:lowerbound-generalgraphs-complete}
        \end{subfigure}
        \hfill
        \begin{subfigure}[c]{0.49\textwidth}
            \centering
            \begin{tikzpicture}[scale=1.5]
                \tikzset{mininode/.style = {circle, fill, minimum size=2pt, inner sep=1pt},
                        weightnode/.style = {rectangle, draw, minimum width=0.5cm,
                                                minimum height = 0.5cm},
                        fakenode/.style = {rectangle, minimum width=0.5cm,
                                                minimum height = 0.5cm},
                        }
                \useasboundingbox (-0.325,-0.875) rectangle (3.75,1.85);
                \node[mininode, label=left:{$r$}] at (0,0) (r) {};
                \node[weightnode, label=right:{$u_i$}] at (1.5,1) (A) {3};
                \node[weightnode, label=right:{$v_i$}] at (3.25,0) (B) {3};
                \node[mininode, label=below:{$w_i$}] at (1.5,0) (C) {};
                
                \draw[Red] (r) -- node[above left=-0.25em]{$a_i$} node[below right=-0.25em]{$3$} (A);
                \draw[MidnightBlue] (r) -- node[below]{$2$} (C);
                \draw[MidnightBlue] (C) -- node[below]{$3$} (B);
                \draw[MidnightBlue] (C) -- node[right]{$2$} (A);
            \end{tikzpicture}
            \caption{
            The graph $G_i$ with $\OPT(3)=A_i$ shown in red, and with $\OPT(7)=B_i$ shown in blue.}
            \label{fig:lowerbound-generalgraphs-component}
        \end{subfigure}
        \caption{Instance with no $(1,\alpha)$-competitive incremental solution.}
        \label{fig:lowerbound-generalgraphs}
\end{figure}
\begin{proof}
    We consider the instance $G$ that is given by $n\in\N$ many copies $G_1,\dots,G_n$ of the graph 
    that are all merged in the root vertex $r$ shown in \Cref{fig:lowerbound-generalgraphs-complete}.
    We denote by $A_i$ the subtree of $G_i$ that only consists of the edge $a_i$, i.e., the red tree in \Cref{fig:lowerbound-generalgraphs-component} and by $B_i$ the complement of $A_i$ in $G_i$, i.e., the blue tree in \Cref{fig:lowerbound-generalgraphs-component}.
    In particular, for $n=1$ we have $\OPT(3)=A_1$ and $\OPT(7)=B_1$.

    Let $\ALG$ be an $(\alpha,\mu)$-competitve algorithm for the incremental prize-collecting Steiner-tree on general graphs, where $\alpha$ only depends on $\lrp$.
    Let $T=(V_T,E_T)\coloneqq \ALG(3n)$.
    To show that $\mu\geq 17/16$, we define~$k_n\coloneqq \abs{\{1\leq i \leq n : a_i \in E_T \}}$ to be the number of edges of type $a$ in $\ALG(3n)$.
    Based on this value we distinguish two cases.
    
    First, assume $k_n\leq 10n/17$.
    We denote by $T_i\coloneqq G_i \cap \ALG(3n)$ the rooted tree in the $i$-th copy $G_i$ induced by $\ALG(3n)$ for $1\leq i\leq n$.
    Whenever $c(T_i)\neq 0$, we have
    \begin{align}
        \label{eq:bound_density}
        \dens_G(T_i) \leq
        \begin{cases}
            \dens_G(A_i) = 1 & \text{if } A_i = T_i \\
            \dens_G(B_i) = \frac{6}{7} & \text{else.}
        \end{cases}
    \end{align}
    Since there are at most $k_n$ copies of $G_i$ for which $T_i=A_i$, we obtain
    \begin{align*}
        p(\ALG(3n)) &= \sum_{i\in \{1,\dots n\} : c(T_i)\neq0} \dens_G(T_i)c(T_i)\\
        &
        \leq \sum_{i\in \{1,\dots n\} : A_i = T_i} c(T_i) + \sum_{i\in \{1,\dots n\} : A_i \neq T_i} \frac{6}{7} c(T_i) \\
        &= \frac{6}{7} \sum_{i=1}^n c(T_i) + \frac{1}{7} \sum_{i\in \{1,\dots n\} : A_i = T_i} c(T_i)
        \leq \frac{18}{7} n + \frac{3}{7} k_n \\
        &
        \leq \frac{18\cdot17+30}{17\cdot7} n\\
        &= \frac{16}{17} \cdot 3n.
    \end{align*}
    For the optimal solution, we have $p(\OPT(3n-\alpha))>3n-\alpha-3$. For $n\geq\frac{\alpha}{3}+1$ and $B\coloneqq 3n-\alpha$ we obtain
    \begin{align*}
        \mu \geq \frac{p(\OPT(B))}{p(\ALG(B+\alpha))}
        > \frac{17}{16} \cdot \frac{3n-\alpha-3}{3n}.
    \end{align*}

    Next, assume $k_n> 10n/17$.
    We denote by $T_i=(V_{T_i},E_{T_i})\coloneqq G_i \cap \ALG(7n)$ the tree in the $i$-th copy~$G_i$ induced by $\ALG(7n)$ for $1\leq i\leq n$.
    For each copy $G_i$, we fix a subtree $T_i^*$ of $G_i$ that contains~$T_i$ as a subtree and collects the total prize of $G_i$, i.e., $T_i\subseteq T_i^*\subset G_i$ and $p(T_i^*)=6$.
    We can give a lower bound on the cost of $T^*_i$ by
    \begin{align} \label{eq:lower_bound_cost}
        c(T^*_i)\geq 
        \begin{cases}
            8 &  \text{if } a_i \in E_{T_i}, \\
            7 &  \text{else.}
        \end{cases}
    \end{align}
    Note that for all possible choices of $T_i$ and $T_i^*$, we have
    \begin{align} \label{eq:lower_bound_loss_density}
        p(T^*_i)-p(T_i) \geq \frac{3}{5} \bigl[c(T^*_i) -c(T_i) \bigr].
    \end{align}
    Since there are at least $k_n$ copies of $G_i$ for which $a_i\in E_{T_i}$, we obtain
    \begin{align*}
        p(\ALG(7n))
        &= p(G) - \sum_{i=1}^n \bigl[p(T^*_i) -p(T_i)\bigr] \\
        &
        \leq p(G) - \sum_{i=1}^n \frac{3}{5} \bigl[c(T^*_i)-c(T_i)\bigr] \\
        &
        \leq p(G) - \frac{3}{5} \left( \sum_{i\in\{1,\dots,n\}: a_i \in E_{T_i}} \bigl[8-c(T_i)\bigr] + \sum_{i\in\{1,\dots,n\}:\ a_i \notin T_i} \bigl[7-c(T_i)\bigr]\right) \\
        &= p(G) - \frac{3}{5} \left( \sum_{i\in\{1,\dots,n\}:\ a_i \in E_{T_i}} 1 + \sum_{i=1}^n \bigl[7-c(T_i) \bigr] \right) \\
        &\leq p(G) - \frac{3}{5} \left( k_n + 7n - 7n \right)\\
        &
        \leq 6n - \frac{30}{5 \cdot 17}n \\
        &= \frac{16}{17} \cdot 6n.
    \end{align*}
    For the optimal solution, we have $p(\OPT(7n-\alpha))>6n-\alpha-6$.
    Then, for $n\geq \frac{\alpha}{6}+1$ and $B\coloneqq 7n-\alpha$ we have
    \begin{align*}
        \mu \geq \frac{p(\OPT(B))}{p(\ALG(B+\alpha))}
        > \frac{17}{16} \cdot \frac{6n - \alpha -6}{6n}.
    \end{align*}

    Combining the two cases using $\frac{3n-\alpha-3}{3n}\leq \frac{6n - \alpha -6}{6n}$ we obtain $\mu \geq \frac{17}{16} \cdot \frac{6n - \alpha -6}{6n}$ for $n\geq \frac{\alpha}{3}+1$.
    Recall that $\alpha$ only depends on the maximum cost $\lrp=8$ of a vertex disjoint path, which is independent of the number $n$ of copies of $G_i$. Thus, we obtain 
    \begin{align*}
        \mu \geq \frac{17}{16} \liminf_{n\to \infty} \frac{6n - \alpha -6}{6n} = \frac{17}{16} = 1.0625
    \end{align*}
    This concludes the proof.
\end{proof}

\medskip
\noindent\textbf{Acknowledgements.}
We thank Felix Hommelsheim, Alexander Lindermayr, and Jens Schl\"oter for fruitful discussions and three anonymous referees for their comments that improved the presentation of the paper.

\bibliography{incremental}

\end{document}